\newtheorem{lemma}{Lemma}[section]
\newtheorem{definition}{Definition}[section]
\newtheorem{theorem}{Theorem}[section]
\newtheorem{prop}{Proposition}[section]
\newtheorem{remark}{Remark}[section]
\newcommand{\E}{\mathbb{E}}
\def\Ac{{\cal A}}
\def\Cc{{\cal C}}
\def\Sc{{\cal S}}
\def\Zc{{\cal Z}}
\def\Zc{{\cal Z}}
\def\eps{\varepsilon}
\def\05{\frac{1}{2}}
\def\-1{^{-1}}
\def\1{{1\hspace{-1mm}{\rm I}}}
\def\={\;=\;}
\def\.{\;.}
\title{  }
\author{ }
\def \E{\mathbb{E}}
\def \F{\mathbb{F}}
\def \N{\mathbb{N}}
\def \R{\mathbb{R}}
\def\P{\mathbb{P}}
\title{Optimal make-take fees for market making regulation}
\author{Omar El Euch, Thibaut Mastrolia, Mathieu Rosenbaum, Nizar Touzi\\ \'Ecole Polytechnique
\thanks{This work benefits from the financial support of the Chaires {\it Analytics and Models for Regulation}, {\it Financial Risk} and {\it Finance and Sustainable Development}. Omar El Euch and Mathieu Rosenbaum gratefully acknowledge the financial support of the {\it ERC Grant 679836 Staqamof}.
Nizar Touzi gratefully acknowledges the financial support of the {\it ERC Grant 321111 RoFiRM.}
Thibaut Mastrolia gratefully acknowledges the financial support of the {\it ANR PACMAN.} The authors also thank Bas Werker for relevant comments.
}}
\begin{document}

\maketitle

\begin{abstract}
\noindent We address the mechanism design problem of an exchange setting suitable make-take fees to attract liquidity on its platform.
Using a principal-agent approach, we provide the optimal compensation scheme of a market maker in quasi-explicit form. This contract depends essentially on
the market maker inventory trajectory and on the volatility of the asset. We also provide the optimal quotes that
should be displayed by the market maker. The simplicity of our formulas allows us to analyze in details
the effects of optimal contracting with an exchange, compared to a situation without contract.
We show in particular that it improves liquidity and reduces trading costs for investors. We extend our study to an oligopoly of symmetric exchanges and we study the impact of such common agency policy on the system.
\end{abstract}

\noindent \textbf{Keywords:} Make-take fees, market making, financial regulation, high-frequency trading, principal-agent problem, stochastic control.

\section{Introduction}\label{intro}
Due to the fragmentation of financial markets, exchanges are nowadays in competition. The traditional international exchanges are challenged by alternative trading venues, see \cite{lehalle2013market}. Consequently,
they have to find innovative ways to attract liquidity on their platforms. One solution is to use a
maker-taker fees system, that is a rule enabling them to charge in an asymmetric way
liquidity provision and liquidity consumption. The most classical setting, used by many exchanges 
(such as Nasdaq, Euronext, BATS Chi-X...), is of course to subsidize the former while taxing the latter. 
In practice, this means associating a fee rebate to executed limit orders and applying 
a transaction cost for market orders.\\

\noindent In the recent years, the topic of make-take fees has been quite controversial. 
Make-take fees policies are seen as a major facilitating factor 
to the emergence of a new type of market makers aiming at collecting fee rebates: the high frequency traders. 
As stated by the Securities and Exchanges commission in \cite{sec}:
``Highly automated exchange systems and liquidity rebates have helped establish a business model for a new type of professional liquidity provider that is distinct from the more traditional exchange specialist and over-the-counter market maker."
The concern with high frequency traders becoming the new liquidity providers is two-fold. 
First, their presence implies that slower traders no longer have access to the limit order book, or only in unfavorable 
situations when high frequency traders do not
wish to support liquidity. This leads to the second classical criticism against high frequency market makers: 
they tend to leave the market in time of stress, see \cite{bellia2017high, pamela, menkveld2013high, foresight} for detailed investigations about high frequency market making activity.\\

\noindent From an academic viewpoint, studies of make-take fees structures and their impact on the welfare of the markets 
have been mostly empirical, or carried out in rather stylized models. An interesting theory, suggested in \cite{angel2011equity} and developed in \cite{colliard2012trading} is that make-take fees have actually no impact on trading
costs in the sense that the {\it cum fee} bid-ask spread should not depend on the make-take fees policy.
This result is consistent with the empirical findings in \cite{lutat2010effect,malinova2015subsidizing}. Nevertheless, it is clearly shown
in these works that many important trading parameters such as depths, volumes or price impact do depend on the make-take fees structure, see also \cite{harris2013maker}.
Furthermore, the idea of the neutrality of the make-take fees schedule is also tempered in \cite{foucault2013liquidity} where the authors show theoretically that make-take fees may increase welfare of markets
provided the tick size is not equal to zero, see also \cite{brolley2013informed}. More importantly, the results above are obtained in tractable but rather simple discrete-time models that one may want to revisit to be closer to market reality.\\

\noindent In this work, our goal is to provide a quantitative and operational answer to the question of relevant make-take fees.
To do so, we take the position of an exchange (or of the regulator) aiming at attracting liquidity. The exchange is
looking for the best make-take fees policy to offer to market makers in order to maximize its utility. In other words, it wants to
design an optimal contract with the market marker to create an incentive to increase liquidity. We solve the problem the exchange is facing and we do not consider the more involved question of global social welfare. Nevertheless, we have in mind that increasing the liquidity (what the exchange is aiming at) should be beneficial for the welfare of all the agents, what is confirmed in our empirical results. This paper is to our knowledge the first addressing the issue of make-take fees in a realistic continuous-time framework. As a first step, we consider a single market maker in a non-fragmented market, such as for example many fixed income markets which represent some of the most liquid assets in the world. We next consider the case of multiple symmetric exchanges.\\

\noindent Incentive theory has emerged in the 1970s in economics to model how a financial agent can delegate the management of an output process to another agent. Let us recall the formalism of principal-agent problems from the seminal works of Mirrlees \cite{mirrlees1974notes} and Holmstr\"om \cite{holmstrom1979moral}. A principal aims at contracting with an agent who provides efforts to manage an output process impacting the wealth of the principal. The principal is not able to control directly the output process since she cannot decide the efforts made by the agent. In our case, the principal is the exchange, the agent is the market maker, the effort corresponds to the quality of the liquidity provided by the market maker (essentially the size of the bid-ask spread proposed by the market maker), the output process is the transactions flow on the platform and the contract depends on the realized transactions flow. Several economics papers have investigated this kind of problems by identifying it with a Stackelberg equilibrium between the two parties. More precisely, since the principal cannot control the work of the agent, she anticipates his best response to a given compensation. We follow the stream of literature initiated in \cite{holmstrom1987aggregation}. Then in \cite{sannikov2008continuous}, the author recasts such issue into a stochastic control problem which has been further developed using backward stochastic differential equation theory in \cite{cvitanic2015dynamic}. See also \cite{cvitanic2013contract} for related literature.\\

\noindent This paper provides a quasi-explicit expression for the optimal contract between the exchange and the market maker, and for the market maker optimal quotes. The optimal contract depends essentially on the market maker inventory trajectory and on the volatility of the market. These simple formulas enable us to analyze in details the effects for the welfare of the market of optimal contracting with an exchange, compared to a situation without contract as in \cite{avellaneda2008high, gueant2013dealing}. We show that such contracts lead to reduced spreads and lower trading costs for investors. We also propose an extension of this work to an oligopoly of symmetric exchanges aiming et hiring a single market marker. We show in particular that there exists a unique Markovian symmetric Nash equilibrium for the exchanges.\\

\noindent The paper is organized as follows. Our modeling approach is presented in Section \ref{model}. In particular, we define the market maker's as well as the exchange's optimization {framework}. In Section \ref{solvingmmp}, we compute the {best response} of the market maker for a given contract. Optimal contracts are designed in Section \ref{opticont} where we solve the exchange's problem. Then, in Section \ref{section:comparison}, we assess the benefits for market quality of the presence of an exchange contracting optimally with a market maker. In Section \ref{section:oligo}, we extend our study to an oligopoly of symmetric exchanges. Finally, useful technical results are gathered in an appendix together with the so-called first best case (see Appendix \ref{sec:firstbest}) which provides different solutions.

\section{The model}\label{model}

Our starting point is the seminal work of Avellaneda \& Stoikov \cite{avellaneda2008high}. Our objective is to derive optimal make-take fees in order to monitor the behavior of a market maker on a platform acting according to the optimal market making model of \cite{avellaneda2008high}.

\subsection{Contractible and observable variables}

Let $T>0$ be a final horizon time, $\Omega_c $ the set of continuous functions from $[0,T]$ into $\mathbb R$, $\Omega_d$ the set of piecewise constant c\`adl\`ag functions from $[0,T]$ into $\mathbb N$, and $\Omega = \Omega_c \times (\Omega_d)^2$ with corresponding Borel algebra $\mathcal F$. The observable state is the canonical process $(\chi_t)_{t \in [0,T]} = (S_t, N^a_t, N^b_t)_{t \in [0,T]}$ of the measurable space $(\Omega, \cal F)$: 
 \begin{eqnarray*}
 S_t(\omega) := s(t),~~ N^a_t(\omega) := n^a(t),~~ N^b_t(\omega) := n^b(t),
 &\mbox{for all}&
 t\in[0,T],~ \omega = (s,n^a, n^b) \in \Omega,
 \end{eqnarray*}
with canonical completed filtration $\F = ({\cal F}_t)_{t \in [0,T]} =  ({\cal F}_t^c \otimes ({\cal F}_t^d)^{\otimes2} )_{t \in [0,T]}$.
 
The trading activity is reduced to a single risky asset $S$ with observable efficient price $S$ defined by:\footnote{In practice, the efficient price can be thought of as the mid-price of the asset, see \cite{robert2010new,delattre2013estimating}.}
\begin{equation} \label{spotDynamics} 
S_t := S_0 +\sigma W_t,  \quad t \in [0,T],
\end{equation}
for some Brownian motion $W$, initial price $S_0>0$, and constant volatility $\sigma>0$. The market maker chooses processes denoted by $\delta^b$ and $\delta^a$ respectively so as to fix publicly available bid and ask offer prices:
 \begin{eqnarray*} 
 P^b_t := S_t - \delta^b_t
 &\mbox{and}&
 P^a_t := S_t + \delta^a_t,
 ~~t\in[0,T].
 \end{eqnarray*}
The arrival of bid and ask market orders is modeled by a counting process $(N^b,N^a)$ with unit jumps, so that no more than one market order can occur at each time. We introduce the inventory process of the market maker $Q$:
 \begin{eqnarray*}
 Q_t 
 := 
 N^b_t - N^a_t
 \;\in\; \mathbb N\cap\big[-\bar q,\bar q \big],
 &t \in [0, T],& 
 \end{eqnarray*}
where $N^b_0=N^a_0=0$ and, as in \cite{gueant2013dealing}, we impose a critical absolute inventory $\bar q \in \N$ above which the market maker stops quoting on the ask or bid side. 

Let $c>0$ be the fee collected by the exchange, see Section \ref{coutTransaction} below. In order to illustrate the impact of the posted prices on the transactions arrival process $(N^b,N^a)$, the corresponding intensity process depends on the departure of the transaction price from the efficient price, i.e. $c+\delta^i_t$, $i\in\{b,a\}$, as follows:
 \begin{eqnarray} \label{intensity} 
 \lambda^{i,\delta}_t 
 \;:=\; 
 \lambda(\delta^i_t) \1_{\{ \varepsilon_iQ_t >-  \bar q\}}, 
 &i\in\{b,a\},~~(\varepsilon_b,\varepsilon_a)=(-1,1),& 
 \mbox{with}~~
 \lambda(x) := A e^{-k \frac{(x+c)}{\sigma}},
 \end{eqnarray}
for some fixed positive constants $A$ and $k$, and with dependence on the volatility parameter $\sigma$ so as to reproduce the stylized fact that the average number of trades per unit of time is a decreasing function of the ratio between the spread and the volatility, see \cite{dayri2015large, madhavan1997security, wyart2008relation}. \\

Our canonical variables being $S$, $N^a$ and $N^b$, the contracts are allowed to depend on the trajectories of these quantities only: these are our three contractible variables. This is actually very reasonable: the efficient price is a quantity any market participant is used to, whether the chosen proxy for it is the midprice, the last traded price or some volume weighted
price. The processes $N^a$ and $N^b$ encode the arrivals of market orders and therefore actual transactions, which are clearly recorded on any exchange and accessible to most participants.
So the contracts will be designed on standard, unarguable and easily obtained financial variables.\\

Note that the spreads $\delta^a$ and $\delta^b$ are here observable by market takers, but not contractible. From the exchange viewpoint, it would not be reasonable
to introduce the spread variable in a contract. First, quotes are typically not recorded with the same degree of accuracy as transactions since they are evolving on a much higher frequency, which can sometimes be of the order of the 
millisecond. Second, quoted prices do not in general lead to transactions and it would be probably hard to justify taxing or subsidizing agents based on offer prices having no tangible 
counterparts. Finally, a quote based contract would certainly encourage high frequency traders to attempt exploiting possible flaws in the contract, using for example a very high rate
of cancellations of their orders, leading to possible market disruptions, see \cite{abergel2014understanding,pamela} for studies about high frequency traders behavior.

\subsection{Admissible controls and market maker's problem}

The set of admissible controls ${\cal A} $ is the collection of all predictable processes $\delta=(\delta^a,\delta^b)$ uniformly bounded by $\delta_\infty$, some sufficiently large positive constant to be fixed later. Controlling $\delta$ is equivalent to control the arrival intensity of market orders since it is a deterministic function of the spread. Viewing the market maker optimization problem this way, we see that the intensity
plays the very same role
as the drift in standard principal-agent problems where the agent controls the drift of a diffusion process, this drift being unobserved by the principal, see \cite{sannikov2008continuous, cvitanic2015dynamic}. 
A particular feature of our modelling is that the intensity is observable (although non-contractible) because of its connexion with the spread. However, the spread is in some sense an artificial variable here enabling us to fit with market reality (the market maker has access in practice to the spread but not to the intensity of market takers arrivals). Each control process $\delta$ induces 

-- the market maker profit and loss process:
 \begin{eqnarray} \label{PL} 
 \mbox{PL}_t^{\delta} 
 \;:=\; 
 X_t^{\delta} + Q_t S_t ,
 &\mbox{where}&
 X_t^{\delta} \;:=\; \int_0^t P^a_r dN^a_r - \int_0^t P^b_r dN^b_r,
 ~~t\in [0,T],
 \end{eqnarray}
as the sum of the cash flow $X^\delta$ and the inventory risk\footnote{As in \cite{avellaneda2008high}, for sake of simplicity, we assume that the market maker estimates his inventory risk using the efficient price $S$.} $QS$,

-- and a probability measure $\P^{\delta}$ under which $S$ is driven by \eqref{spotDynamics}, and
 \begin{eqnarray*}
 \widetilde N^{i,\delta}_t := N^i_t - \int_0^t \lambda^{i,\delta}_rdr, 
 &  
 t\in[0,T],~
 i\in\{b,a\},
 &
 \mbox{are martingales.}
 \end{eqnarray*}
Then, $\P^\delta$ is defined by the density $\frac{d\P^\delta}{d\mathbb P^0}\big|_{{\cal F}_T}
 = 
 L^\delta_T$, induced by the Dol\'eans-Dade exponential martingale 
 $dL^\delta_t= L^\delta_{t-}\sum_{i=b,a}\1_{\{\eps_iQ_{t-}>-\bar q\}}\frac{\lambda(\delta^i_t ) - A}{A} d\widetilde N^{i,0}_t$, i.e.\footnote{see e.g. Theorem III.3.11 in \cite{jacod2013limit}; the uniform boundedness of $\delta$ guarantees that $L^\delta$ is a martingale, see \cite{sokol2013optimal}.} again with $(\varepsilon_b,\varepsilon_a)=(-1,1)$,
 \begin{eqnarray}  \label{proba_change}
 L^\delta_t 
 &:=&  
 \exp{\sum_{i=b,a}\int_0^t \1_{\{\eps_iQ_{r-}>-\bar q\}}
                                      \Big[ \log\Big(\frac{\lambda(\delta^i_r)}{A}\Big)  dN^i_r 
                                              - (\lambda(\delta^i_r) - A) dr
                                      \Big] },
 ~~t\in[0,T].
 \end{eqnarray}
In particular, all probability measures $\mathbb P^{\delta}$ are equivalent. We therefore use the notation $a.s$ for almost surely without ambiguity.  We shall write $\mathbb E_t^\delta$ for the conditional expectation with respect to $\mathcal F_t$ with probability measure $\mathbb P^\delta$.

The exchange aims at encouraging the market maker to reduce his spread so as to enhance market liquidity on the platform. This is achieved by setting the terms of an incentive contract defined by an $\mathcal F_T$-measurable random variable $\xi$. In other words, the compensation $\xi$ may depend on the whole paths of the contractible variables $N^a$, $N^b$ and $S$. Given this additional revenue, the market maker's objective is defined by the utility maximization problem
 \begin{eqnarray} \label{MMpb}
 V_{\text{MM}}(\xi) 
 \;:=\; 
 \sup_{\delta \in {\cal A}} 
 J_{\text{MM}}(\delta,\xi),
 ~~\mbox{where}~~
 J_{\text{MM}}(\delta,\xi)
 &:=&
 \E^{\delta}\Big[ - e^{-\gamma (\xi+\text{PL}^{\delta}_T)} \Big]
 \\
 &=&
 \E^{\delta}\Big[ - e^{-\gamma (\xi+\int_0^T \delta^a_tdN^a_t+\delta^b_tdN^b_t+Q_tdS_t)} \Big].
 \nonumber
 \end{eqnarray}
Here, $\gamma>0$ is the absolute risk aversion parameter of the CARA market maker.  For each compensation $\xi$, we shall prove below that there exists a unique optimal response $\hat\delta(\xi) = (\hat\delta^b(\xi), \hat\delta^a(\xi))\in\cal A$ of the market marker, i.e. $V_{\text{MM}}(\xi) 
 =J_{\text{MM}}\big(\hat\delta(\xi),\xi\big)$.

\begin{remark} When there is no incentive payment $\xi = 0$, the utility maximization problem \eqref{MMpb} reduces to the Avellaneda \& Stoikov \cite{avellaneda2008high, gueant2013dealing} optimal market making problem. 
\end{remark}

\subsection{The exchange optimal contracting problem}\label{coutTransaction}

The exchange receives a fixed fee $c > 0$ for each market order that occurs in the market\footnote{In practice, some exchanges add to this fixed fee a component which is proportional to the traded cash amount. Our analysis can be extended to more elaborated fee schedules. Our choice of a constant fee is motivated by the induced simplicity which will be crucial to derive our quasi-explicit solution. Furthermore, we will in fact see that when using the optimal contract, the exchange is somehow indifferent to the value of $c$, see Section \ref{discussion}.},
and then collects at time $T$ the total revenue $c(N^a_T+ N^b_T) - \xi$. The choice of the contract $\xi$ is dictated by the utility maximization problem 
 \begin{eqnarray} \label{Ppb} 
 V^E_0 
 &:=& 
 \sup_{\xi \in {\cal C}} ~ \E^{\hat\delta(\xi)} 
                                     \Big[-e^{-\eta(c(N^a_T+N^b_T) - \xi)}\Big],
\end{eqnarray}
where $\eta>0$ is the exchange's absolute risk aversion parameter, and the set of admissible contracts $\mathcal C$ is the collection of all contracts satisfying 

-- the participation constraint $V_{\text{MM}}(\xi) \geq R$, where the reservation level $R < 0$ may be chosen to be the utility level without contract,

-- together with the integrability conditions:
\begin{eqnarray}\label{condition:xi}
\sup_{\delta\in\cal A} \mathbb E^{\delta} \Big[e^{\eta' \xi}\Big] 
<\infty
~~\mbox{and}~~
\sup_{\delta\in\cal A} \mathbb E^{\delta} \Big[e^{- \gamma' \xi}\Big] 
<\infty,
&\mbox{for some}&
\eta'>\eta,~~\gamma'>\gamma.
\end{eqnarray}
Since $N^a$ and $N^b$ are point processes with bounded intensities, and the inventory process $Q$ is bounded, it follows from an easy application of the H\"older inequality that  the expectations in both problems \eqref{MMpb} and \eqref{Ppb} are finite. 

We assume throughout this paper that the participation level $R$ is so that the set of admissible contracts is non-empty:
 \begin{eqnarray*}
 \cal C 
 &=& 
 \Big\{ \xi,\, \text{ ${\cal F}_T$-measurable such that $V_{\text{MM}}(\xi) \geq R$ and \eqref{condition:xi} is satisfied} 
 \Big\}
 \;\neq\;\emptyset.
 \end{eqnarray*}

\section{Solving the market maker's problem}\label{solvingmmp}

We start by solving the problem \eqref{MMpb} of the market maker facing an arbitrary contract $\xi \in \Cc$ proposed by the exchange.

\subsection{Market maker's optimal response}

For $(\delta,z,q)\in[-\delta_\infty,\delta_\infty]^2\times \mathbb R^3\times \mathbb Z$, with $\delta=(\delta^a,\delta^b)$ and $z=(z^S,z^a,z^b)$, we define 
 \begin{eqnarray*}
 h(\delta, z, q)
 := 
 \sum_{i=b,a}
 \frac{1-e^{-\gamma(z^i + \delta^i)}}{\gamma}\,\lambda(\delta^i) \1_{\{\varepsilon_iq > - \bar q\}}
 &\mbox{and}&
 H(z, q)
 := 
 \sup_{|\delta^a|\lor|\delta^b| \leq  \delta_\infty}  h(\delta, z, q).
 \end{eqnarray*}
with $(\varepsilon_b,\varepsilon_a)=(-1,1)$. 
For an arbitrary constant $Y_0\in\R$ and predictable processes $Z=(Z^S, Z^a, Z^b)$, {with $\int_0^T \big(|Z^S_t|^2 + |H(Z_t, Q_t)|\big)dt<\infty$},  we introduce the process
 \begin{eqnarray}\label{YZ}
 Y^{Y_0,Z}_t
 \;=\;
 Y_0 + \int_0^t Z^a_r dN^a_r + Z^b_r dN^b_r + Z^S_r dS_r 
                         + \Big(\frac12\gamma \sigma^2 (Z^S_r + Q_r)^2-H(Z_r, Q_r)\Big)dr,
 \end{eqnarray}
and we denote by $\mathcal Z$ the collection of all such processes $Z$ such that the first integrability condition in \eqref{condition:xi} is satisfied with $\xi=Y_T^{0,Z}$ and 
\begin{equation} \label{conditionY}
\sup_{\delta \in \Ac} ~ \sup_{t \in [0,T]} \E^\delta[e^{-\gamma' Y_t^{0,Z}}] < \infty, \quad \text{for some}
\quad \gamma'>\gamma.
\end{equation}
Clearly, $\mathcal Z\neq\emptyset$ as it contains all bounded predictable processes and
 \begin{eqnarray*}
 \cal C
 &\supset&
 \Xi
 \;:=\;
 \big\{ Y_T^{Y_0,Z}:~Y_0\in \mathbb{R},~ Z \in \mathcal Z,~\mbox{and}~V_{\text{MM}}(Y_T^{Y_0,Z})\ge R \big\}. 
 \end{eqnarray*}
The next result shows that these sets are in fact equal, and identifies the market maker utility value and the corresponding optimal response. To prove equality of these sets, we are reduced to the problem of representing any contract $\xi\in\Cc$ as $\xi=Y^{Y_0,Z}_T$ for some $(Y_0,Z)\in\mathbb{R}\times\cal Z$, which is known in the literature as a problem of backward stochastic differential equation. We refrain from using this terminology, as our analysis does not require any result from this literature.
 
\begin{theorem}\label{thm:Agent} 
{\rm (i)} Any contract $\xi \in {\cal C}$ has a unique representation as $\xi=Y^{Y_0,Z}_T$, for some $(Y_0,Z)\in\mathbb{R}\times\cal Z$. In particular, ${\cal C}=\Xi$. \\
{\rm (ii)} Under this representation, the market maker utility value is 
 \begin{eqnarray*}
 V_{\rm MM}\big(\xi\big)=-e^{-\gamma Y_0},
 &\mbox{so that}&
 \Xi
 \;=\;
 \Big\{ Y_T^{Y_0,Z}:~Z \in \mathcal Z,~\mbox{and }~Y_0\ge \hat Y_0\Big\},
 ~~ \hat Y_0 := -\frac{1}{\gamma} \log(-R),
 \end{eqnarray*}
with the following optimal bid-ask policy
 \begin{equation} \label{optimalSpread}
 \hat\delta^i_t(\xi) = \Delta(Z^i_t), 
 ~i\in\{b,a\},
 ~\mbox{where}~
 \Delta(z) 
 :=  
 (-\delta_\infty) 
 \lor 
 \Big\{-z + \frac{1}{\gamma} \log\Big(1 + \frac{\sigma \gamma}{k}\Big)\Big\} 
 \land 
 \delta_\infty.
 \end{equation}
\end{theorem}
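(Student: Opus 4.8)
The argument splits into an easy verification half and a harder representation half.

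\emph{The easy half (showing $\Xi\subseteq\Cc$, computing $V_{\rm MM}$, and identifying $\widehat\delta$).} I would fix $Z\in\Zc$, $Y_0\in\R$ and $\delta\in\Ac$, write $Y_t=Y_t^{Y_0,Z}$, and study $U_t:=Y_t+\mathrm{PL}_t^{\delta}$. Integration by parts in \eqref{PL}, using that the bracket $[Q,S]\equiv0$ since $S$ is continuous, gives $\mathrm{PL}_t^{\delta}=\int_0^t\delta^a_rdN^a_r+\int_0^t\delta^b_rdN^b_r+\int_0^tQ_rdS_r$, so that $U$ has continuous martingale part $\int_0^t\sigma(Z^S_r+Q_r)dW_r$ and jumps only at jump times of $N^a,N^b$, with size $Z^i_t+\delta^i_t$ at a jump of $N^i$. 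Applying It\^o's formula to $-e^{-\gamma U_t}$ and plugging in \eqref{YZ}, the two drift contributions proportional to $\gamma^2\sigma^2(Z^S_t+Q_t)^2$ — one from the drift of $Y$, one from the quadratic variation of the continuous part of $U$ — cancel exactly, and the $\P^\delta$-compensator of the jump part contributes precisely $\gamma\,h(\delta_t,Z_t,Q_t)\,dt$, so that
\begin{equation*}
d\big(-e^{-\gamma U_t}\big)\;=\;-\gamma\,e^{-\gamma U_{t-}}\big(H(Z_t,Q_t)-h(\delta_t,Z_t,Q_t)\big)\,dt\;+\;dM^\delta_t,
\end{equation*}
with $M^\delta$ a $\P^\delta$-local martingale. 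Since $-e^{-\gamma U_{t-}}<0$ and $H\ge h$ by definition of $H$, the drift is non-positive; the uniform exponential integrability of $\mathrm{PL}^\delta_T$ (bounded $Q$, bounded $\delta$, bounded intensities, Gaussian $S_T$) together with \eqref{conditionY} upgrades this local supermartingale to a true $\P^\delta$-supermartingale, whence $J_{\rm MM}(\delta,\xi)=\E^\delta[-e^{-\gamma U_T}]\le-e^{-\gamma Y_0}$ for $\xi:=Y_T^{Y_0,Z}$. Next I would note that $h(\cdot,z,q)$ decouples over $i\in\{b,a\}$, that each summand $x\mapsto\tfrac1\gamma(1-e^{-\gamma(z^i+x)})Ae^{-k(x+c)/\sigma}$ is strictly unimodal on $\R$ with unique stationary point $-z^i+\tfrac1\gamma\log(1+\sigma\gamma/k)$ (obtained by solving the first-order condition for $e^{-\gamma(z^i+x)}$), and that projecting onto $[-\delta_\infty,\delta_\infty]$ yields the unique maximiser $\Delta(z^i)$ of \eqref{optimalSpread}; hence $\widehat\delta^i_t:=\Delta(Z^i_t)$ is predictable, lies in $\Ac$, and makes the drift above vanish identically, so $-e^{-\gamma U}$ is then a true $\P^{\widehat\delta}$-martingale and $J_{\rm MM}(\widehat\delta,\xi)=-e^{-\gamma Y_0}$. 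This proves $V_{\rm MM}(\xi)=-e^{-\gamma Y_0}$, the optimality of $\widehat\delta(\xi)$, and its uniqueness (any other optimal response makes $-e^{-\gamma U}$ a martingale, hence forces its drift to vanish, hence agrees with $\Delta(Z)$ a.e.); rewriting $V_{\rm MM}(\xi)\ge R$ as $Y_0\ge-\tfrac1\gamma\log(-R)$ gives the stated description of $\Xi$ and the threshold $\widehat Y_0$.

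\emph{The hard half (the representation $\Cc\subseteq\Xi$).} Given $\xi\in\Cc$, I would introduce the dynamic value: for each $\delta$, the essential supremum over controls agreeing with $\delta$ on $[0,t]$ of $\E^\delta_t[-e^{-\gamma(\xi+\mathrm{PL}^\delta_T)}]$. By the CARA structure this factors as $-e^{-\gamma(\mathrm{PL}^\delta_t+\widetilde Y_t)}$ for a single $\mathcal F_t$-measurable process $\widetilde Y$ not depending on the reference $\delta$, with $\widetilde Y_T=\xi$ and $\widetilde Y_0=-\tfrac1\gamma\log(-V_{\rm MM}(\xi))$. The dynamic programming principle makes $-e^{-\gamma(\mathrm{PL}^\delta_t+\widetilde Y_t)}$ a $\P^\delta$-supermartingale for every $\delta\in\Ac$, a martingale for the optimal feedback control. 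Taking $\delta\equiv(-c,-c)$, for which $\P^\delta=\P^0$, Doob--Meyer under $\P^0$ together with the predictable representation property of $\F$ under $\P^0$ — the canonical filtration being generated by the Brownian motion and the two point processes with deterministic compensators — yield predictable $Z=(Z^S,Z^a,Z^b)$ and a predictable finite-variation $K$ with $K_0=0$ such that $\widetilde Y_t=\widetilde Y_0+\int_0^tZ^S_rdS_r+\int_0^tZ^a_rdN^a_r+\int_0^tZ^b_rdN^b_r-K_t$. Comparing the It\^o drift of $-e^{-\gamma(\mathrm{PL}^\delta_t+\widetilde Y_t)}$ (computed exactly as in the easy half, now with $dK_t$ in the unknown role) against ``$\P^\delta$-supermartingale for all $\delta$, martingale for some'' forces $dK_t=\big(H(Z_t,Q_t)-\tfrac12\gamma\sigma^2(Z^S_t+Q_t)^2\big)dt$, i.e. $\widetilde Y=Y^{\widetilde Y_0,Z}$ in the sense of \eqref{YZ}, the pointwise maximiser being $\delta^*_t=(\Delta(Z^b_t),\Delta(Z^a_t))$. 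Finally I would check $Z\in\Zc$: the first condition of \eqref{condition:xi} is inherited from $\xi\in\Cc$, while \eqref{conditionY} follows from a priori bounds on $\widetilde Y$ obtained by squeezing $-e^{-\gamma\widetilde Y_t}$ between conditional-expectation estimates using $V_{\rm MM}(\xi)\ge R$ on one side and $\sup_\delta\E^\delta[e^{\eta'\xi}]<\infty$ on the other. Setting $Y_0:=\widetilde Y_0$ finishes part (i); uniqueness of $(Y_0,Z)$ is immediate once $V_{\rm MM}(\xi)=-e^{-\gamma Y_0}$ is known, by subtracting two representations and using orthogonality of $dW$, $dN^a$, $dN^b$ (the last two never jumping together).

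\emph{Main obstacle.} The verification half is essentially a computation; the real work is the representation half — establishing that $\widetilde Y$ is a well-behaved special semimartingale and, above all, proving the a priori integrability estimates placing $Z$ in $\Zc$ (in particular the uniform-in-$(\delta,t)$ control \eqref{conditionY}). The quadratic growth of the ``driver'' $H(z,q)-\tfrac12\gamma\sigma^2(z^S+q)^2$ in $z^S$ is not a genuine difficulty here, precisely because it is exactly offset by the exponential utility, which is the structural reason the whole approach goes through.
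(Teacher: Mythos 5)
Your proposal follows essentially the same route as the paper: part (ii) is proved by exactly the paper's verification argument (It\^o applied to $e^{-\gamma(Y+\mathrm{PL}^\delta)}$, a nonpositive drift proportional to $H-h\ge 0$ yielding the supermartingale bound, and pointwise maximisation of $h$ giving the unique maximiser $\Delta(Z^i)$), and part (i) by the same Sannikov-style scheme the paper carries out in its appendix (continuation utility, dynamic programming, Doob--Meyer, predictable representation, identification of the finite-variation part from the ``supermartingale for all $\delta$, equality at the optimum'' dichotomy, then a posteriori integrability to place $Z$ in $\mathcal Z$). The only points your sketch compresses relative to the paper are (a) ruling out the purely discontinuous predictable component of your process $K$ (the paper's $A^{\delta,d}=0$), which the paper obtains not from ``martingale for some $\delta$'' but from $\sup_\delta\E^\delta[U^\delta_T]=V_0$ combined with uniform lower bounds on the density ratio $L^\delta_T/L^\delta_t$ and on $|U^\delta|$, and (b) the uniqueness of $Z$, which requires first upgrading the terminal identity $Y^{Y_0,Z}_T=Y^{Y_0',Z'}_T$ to the process identity $Y^{Y_0,Z}_t=Y^{Y_0',Z'}_t$ via the continuation utility before the brackets with $S$, $N^a$, $N^b$ can be read off --- both are exactly the repairs you gesture at, so the strategy is sound.
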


\noindent The proof of Part (i) is reported in Section \ref{sect:DPP}, and is obtained by using the dynamic continuation utility process of the market maker, following the approach of Sannikov \cite{sannikov2008continuous}. 

\vspace{5mm}

\noindent {\bf Proof of Theorem \ref{thm:Agent} (ii)} Let $\xi=Y_T^{Y_0,Z}$ with $(Y_0,Z)\in\mathbb{R}\times\cal Z$. We first prove that $J_{\text{MM}}(\delta, \xi)\le -e^{-\gamma Y_0}$ for all $\delta\in\cal A$. Denote $\overline{Y}_t:=Y^{Y_0,Z}_t+\sum_{i=a,b}\int_0^t \delta^i_tdN^i_t+Q_tdS_t$. Setting $h^\delta:=h(\delta,.)$, it follows from It\^o's formula that
 \begin{align*}
 d e^{-\gamma \overline{Y}_t}
 =
 \gamma e^{-\gamma \overline{Y}_{t-}} 
 &\Big[- (Q_t+Z^S_t) dS_t 
            - \!\!\sum_{i=b,a}\frac{1-e^{-\gamma (Z^i_t+\delta^i_t)}}{\gamma}d\widetilde N^{i,\delta}_t
            + \big(H-h^{\delta_t}\big)(Z_t,Q_t)dt
                                            \Big],                                              
 \end{align*}
implying that $e^{-\gamma \overline Y}  $ is a $\P^\delta$-local submartingale. By Condition \eqref{conditionY}, the uniform boundedness of the intensities of $N^a$ and $N^b$ and H\"older inequality, $(e^{-\gamma \overline Y_t})_{t \in [0,T]} $ is uniformly integrable. By Doob-Meyer decomposition theorem, we conclude that $ 
 \int_0^\cdot \gamma e^{-\gamma \overline{Y}_{t-}} 
                        \big( - (Q_t+Z^S_t) dS_t 
                                - \sum_{i=b,a}\frac{1-e^{-\gamma (Z^i_t+\delta^i_t)}}{\gamma}d\widetilde N^{i,\delta}_t
                        \big), 
 $ is a martingale. It follows that
 $$
 J_{\text{MM}}(\delta, \xi)
 =
 \E^{\delta}\big[-e^{-\gamma \overline{Y}_T}\big] 
 =
 -e^{-\gamma Y_0}
 -\E^{\delta}\Big[\int_0^T \!\!\!\gamma e^{-\gamma \overline{Y}_t }
                                                                   \big(H(Z_t,Q_t)-h(\delta_t,Z_t,Q_t)\big) dt
                                               \Big]
 \le 
 -e^{-\gamma Y_0}.
 $$
On the other hand, equality holds in the last inequality if and only if $\delta$ is chosen as the maximizer of the Hamiltonian $H$ ($dt\times d\mathbb{P}^0-$a.e.), thus leading to the unique maximizer $\hat\delta(\xi)$ {given by \eqref{optimalSpread}}, which then induces $J_{\text{MM}}(\hat\delta(\xi), \xi)=-e^{-\gamma Y_0}$. This completes the proof that $V_{\text{MM}} (\xi)=-e^{-\gamma Y_0}$ with optimal response $\hat\delta(\xi)$.
\qed


\section{Designing the optimal contract}\label{opticont} 

\subsection{Risk-neutral exchange as a toy example.}

To understand the shape of the optimal contract, we first study the case where the exchange is risk neutral, corresponding to the limit where $\eta$ goes to $0$ for which we can derive the optimal compensation with explicit computations. In the present setting, we set $\overline q=+\infty$, thus relaxing the boundedness restriction on the inventory. By Theorem \ref{thm:Agent}, the problem of the exchange reduces to 
\begin{eqnarray}\label{PbRNex}
 V^E_0 
 = 
 \sup_{Y_0 \geq \hat Y_0} \sup_{Z \in {\cal Z}} 
 \E^{\hat\delta(Y^{Y_0,Z}_T)}\Big[c(N^a_T  + N^b_T)-Y^{Y_0,Z}_T\Big]
 = 
 \sup_{Z \in {\cal Z}} 
 \E^{\hat\delta(Y^{\hat Y_0,Z}_T)}\Big[c(N^a_T  + N^b_T)-Y^{\hat Y_0,Z}_T\Big],
 \end{eqnarray}
with $\hat \delta^i_t=  \Delta(Z^i_t)$, $t\in[0,T]$, $i\in\{a,b\}$, and where the maximization over $Y_0$ is achieved at $\hat Y_0$, due to the fact that  the market maker optimal response $\hat{\delta}(Y_T^{Y_0, Z})$, given by \eqref{optimalSpread}, does not depend on $Y_0$ so that the objective function is decreasing in $Y_0$.

 \begin{theorem} Consider the risk neutral exchange case $\eta\searrow 0$, and assume $\delta_\infty\ge  \frac{\sigma}k-\frac{\sigma}{k+\sigma\gamma} + \frac{1}{\gamma} \log(1 + \frac{\sigma \gamma}{k})-c$. Then the optimal contract for the exchange problem \eqref{PbRNex} is:
 \begin{eqnarray}\label{xistar:sol:rn}
 \hat \xi
 &=& 
 \hat Y_0+c(N_T^a+N_T^b)-\int_0^T  Q_r dS_r-\frac{\sigma T}{k+\sigma\gamma},
\end{eqnarray}
with optimal market maker effort: 
 \begin{equation*}\label{optimalOptimalSpread:rn}
 \hat\delta^a_t(\hat\xi) =  \hat\delta^b_t(\hat\xi)= \frac{\sigma}k-\frac{\sigma}{k+\sigma\gamma} + \frac{1}{\gamma} \log(1 + \frac{\sigma \gamma}{k})-c.
 \end{equation*}
 \end{theorem}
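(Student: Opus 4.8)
The plan is to start from the already-reduced problem \eqref{PbRNex}, use Theorem~\ref{thm:Agent}(ii) together with the dynamics \eqref{YZ} to rewrite the exchange's criterion as the $\P^{\hat\delta}$–expectation of a time integral of an explicit function of $(Z_r,Q_r)$, maximize that integrand pointwise, and invoke the hypothesis on $\delta_\infty$ to place the maximizer in the unconstrained branch of $\Delta$.

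First, fix $Z\in\Zc$ and write $\hat\delta:=\hat\delta(Y^{\hat Y_0,Z}_T)$, so that $\hat\delta^i_r=\Delta(Z^i_r)$, $i\in\{a,b\}$. Since $\bar q=+\infty$, all indicators $\1_{\{\eps_iQ_r>-\bar q\}}$ equal $1$, hence $H(z,q)=\sum_{i=a,b}\tfrac{1-e^{-\gamma(z^i+\Delta(z^i))}}{\gamma}\lambda(\Delta(z^i))$, which is free of $q$. Substituting \eqref{YZ} into \eqref{PbRNex}, and using that under $\P^{\hat\delta}$ the process $S$ and the processes $N^i_t-\int_0^t\lambda(\Delta(Z^i_r))\,dr$ are martingales, the stochastic integrals vanish in expectation and we get
\begin{equation*}
\E^{\hat\delta}\big[c(N^a_T+N^b_T)-Y^{\hat Y_0,Z}_T\big]\;=\;-\hat Y_0+\E^{\hat\delta}\Big[\int_0^T\Phi(Z_r,Q_r)\,dr\Big],
\end{equation*}
where $\Phi(z,q):=\sum_{i=a,b}g(z^i)-\tfrac12\gamma\sigma^2(z^S+q)^2$ and $g(z):=\big(c-z+\tfrac{1-e^{-\gamma(z+\Delta(z))}}{\gamma}\big)\lambda(\Delta(z))$; the interchanges of expectation and integral, and the martingale property of the stochastic integrals, are covered by the integrability built into $\Zc$ and the boundedness of the intensities.

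Next I would maximize $\Phi$ pointwise. It is separable: $-\tfrac12\gamma\sigma^2(z^S+q)^2$ is maximal at $z^S=-q$ with value $0$, uniformly in $q$, and each $g(z^i)$ is a one–dimensional problem. The key point is that on the unconstrained branch of $\Delta$, i.e. for $z\in[\beta-\delta_\infty,\beta+\delta_\infty]$ with $\beta:=\tfrac1\gamma\log(1+\tfrac{\sigma\gamma}{k})$, one has $z+\Delta(z)\equiv\beta$, hence $\tfrac{1-e^{-\gamma(z+\Delta(z))}}{\gamma}\equiv\tfrac{\sigma}{k+\sigma\gamma}$ and $\lambda(\Delta(z))=Ae^{-k(\beta+c)/\sigma}e^{kz/\sigma}$, so that $g(z)=Ae^{-k(\beta+c)/\sigma}e^{kz/\sigma}\big(c-z+\tfrac{\sigma}{k+\sigma\gamma}\big)$; its derivative vanishes only at $z^\star:=c+\tfrac{\sigma}{k+\sigma\gamma}-\tfrac{\sigma}{k}$, changing from positive to negative, so $z^\star$ maximizes $g$ on that branch. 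On the two clamped branches ($\Delta\equiv\pm\delta_\infty$) one checks that $g$ is eventually monotone, tends to $-\infty$ at $\pm\infty$, and — since $\lambda(\pm\delta_\infty)$ is exponentially small for $\delta_\infty$ large — stays below $g(z^\star)$; hence $z^\star$ is the global maximizer of $g$ on $\R$, provided it lies in $[\beta-\delta_\infty,\beta+\delta_\infty]$, equivalently $\Delta(z^\star)=\tfrac\sigma k-\tfrac{\sigma}{k+\sigma\gamma}+\tfrac1\gamma\log(1+\tfrac{\sigma\gamma}{k})-c\in[-\delta_\infty,\delta_\infty]$, whose binding half ($\Delta(z^\star)\le\delta_\infty$) is precisely the stated hypothesis (the other half being non-restrictive for $\delta_\infty$ large). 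Consequently $\Phi(z,q)\le 2g(z^\star)$ for all $(z,q)$, with equality at $z=(-q,z^\star,z^\star)$.

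Finally, it follows that $\E^{\hat\delta}[\int_0^T\Phi(Z_r,Q_r)\,dr]\le 2T\,g(z^\star)$ for every $Z\in\Zc$, while the constant process $Z^\star:=(-Q,z^\star,z^\star)$ saturates this bound. One then checks $Z^\star\in\Zc$: its $N$–components are bounded constants, $Z^{S,\star}=-Q$ has $\int_0^T|Q_r|^2\,dr<\infty$ a.s., $H(Z^\star_r,Q_r)\equiv\tfrac{2\sigma}{k+\sigma\gamma}\lambda(\Delta(z^\star))$ is constant, and the exponential–moment condition \eqref{conditionY} follows from the uniform boundedness of the intensities of $N^a,N^b$, which controls all exponential moments of $N^i_T$ and of $\int_0^T Q_r\,dS_r$. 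Since $V_{\rm MM}(Y^{\hat Y_0,Z}_T)=-e^{-\gamma\hat Y_0}=R$ for all $Z\in\Zc$ by Theorem~\ref{thm:Agent}(ii), the contract $\hat\xi:=Y^{\hat Y_0,Z^\star}_T$ is admissible and attains $V^E_0$; expanding \eqref{YZ} along $Z^\star$ and using $Z^{S,\star}_r+Q_r\equiv0$ yields the announced formula \eqref{xistar:sol:rn}, and Theorem~\ref{thm:Agent}(ii) gives the induced optimal spreads $\hat\delta^a_t(\hat\xi)=\hat\delta^b_t(\hat\xi)=\Delta(z^\star)$. I expect the two main obstacles to be: (i) showing that the unconstrained maximizer $z^\star$ genuinely lies in the middle branch of $\Delta$ — which is where and why the hypothesis on $\delta_\infty$ enters; and (ii) the admissibility check for $Z^\star$, slightly delicate because relaxing $\bar q=+\infty$ makes the inventory $Q$ unbounded (though a.s. finite on $[0,T]$).
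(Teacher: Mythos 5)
Your proposal is correct and follows essentially the same route as the paper's (very terse) proof: rewrite the criterion via the martingale compensators as $-\hat Y_0$ plus the expectation of $\int_0^T\big[\sum_i\lambda(\Delta(Z^i_r))(\tilde c-Z^i_r)-\tfrac12\gamma\sigma^2(Z^S_r+Q_r)^2\big]dr$ with $\tilde c=c+\tfrac{\sigma}{k+\sigma\gamma}$, and maximize pointwise to get $Z^{S,\star}=-Q$ and $Z^{a,\star}=Z^{b,\star}=\tilde c-\tfrac{\sigma}{k}$. You in fact supply details the paper skips (the check that the maximizer lies on the unconstrained branch of $\Delta$ and dominates the clamped branches, and the admissibility of $Z^\star$ when $\bar q=+\infty$), so no gap.
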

 
 \begin{proof}
By setting $\tilde c=c+\frac{\sigma}{k+\sigma\gamma}$, note that
 $$
 \E^{\hat\delta(Y^{\hat Y_0,Z}_T)}\Big[c\!\!\sum_{i=b,a}\!\!N^i_T-Y^{\hat Y_0,Z}_T\Big]\\
 = 
 \E^{\hat\delta(Y^{\hat Y_0,Z}_T)}\Big[\int_0^T \!\!\sum_{i=b,a}\!\!\lambda(\hat \delta^i_t)(\tilde c-Z_t^i)dt-\hat Y_0-\!\!\int_0^T\!\!  \Big(\frac12\gamma \sigma^2 (Z^S_r + Q_r)^2\Big)dr\Big],
 $$
so that the optimizer in \eqref{PbRNex} are given by $Z_r^{S,\star}=-Q_r,\; Z_r^{a,\star}= Z_r^{b,\star}=\tilde c- \frac{\sigma}k.$
 \end{proof}
 
\noindent Note that the optimal contract given by \eqref{xistar:sol:rn} emphasizes a risk transfer between the payoff of the market maker and that of the exchange through the term $\int_0^T  Q_r dS_r$.

\subsection{Exponential risk averse exchange}

By Theorem \ref{thm:Agent}, and solving the maximization with respect to $Y_0\ge\hat Y_0$ as in the previous subsection, the exchange problem \eqref{Ppb} reduces to
\begin{eqnarray}\label{Pb2}
 V^E_0 
 = 
 e^{\eta \hat Y_0}\; v^E_0,
 &\mbox{where}&
 v^E_0
 :=
 \sup_{Z \in {\cal Z}} 
 \E^{\hat\delta(Y^{\hat Y_0,Z}_T)}\Big[-e^{- \eta\big(c(N^a_T + N^b_T)-Y^{0,Z}_T\big)}\Big].
 \end{eqnarray}

\subsubsection{The HJB equation for the reduced exchange problem}

Our approach for the control problem $v^E_0$ of \eqref{Pb2} is to derive a solution $v$ of the corresponding HJB equation, and to proceed by the standard verification argument in stochastic control to prove that the proposed solution $v$ coincides with the value function $v^E_0$.

By the standard dynamic programming approach, the HJB equation for \eqref{Pb2} is
  \begin{equation} \label{HJB}
  \partial_t v(t,q) + H_E\big(q, v(t,q), v(t,q+1), v(t,q-1)\big)
  \;=\; 
  0, \quad q \in \{-\bar q, \cdots ,\bar q\}, \quad t\in [0,T),
  \end{equation}
with boundary condition $v\big|_{t=T}=-1$, with Hamiltonian $H_E: [-\bar q,\bar q]\times (-\infty,0]^3\rightarrow \mathbb R$:
 \begin{eqnarray} \label{HE}
 H_E(q, {y}, y_{+},y_{-}) 
 &=&
 H^1_E(q, y)  
 +\1_{\{q>-\bar q\}}  H^0_E(y,y_-)
 +\1_{\{q<\bar q\}} H^0_E(y,y_+),
 \end{eqnarray}
and
 \begin{eqnarray*}
 H^1_E(q, {y})
 \;=\; 
 \sup_{z_s \in \R} h^1_E(q,{y},z_s),
 &\!\!\mbox{and}&\!\!
 h^1_E(q,{y},z_s)
 =\frac{\eta \sigma^2}{2}  \;{y} \left( \gamma (z_s+ q)^2 + \eta z_s^2 \right),
 \\
 H^0_E({y},y')
 \;=\;
 \sup_{\zeta\in \R}
 h^0_E({y},y',\zeta)
 &\!\!\mbox{and}&\!\!
 h^0_E({y},y',{\zeta})
 =
 \lambda\big(\Delta(\zeta)\big) 
                                       \Big[y' e^{\eta(\zeta-c)} -{y} \big(1
                                                                                           +\eta \, \frac{1-e^{-\gamma(\zeta + \Delta(\zeta))}}
                                                                                                             {\gamma}
                                                                                    \big)
                                       \Big].
 \end{eqnarray*} 
By Lemma \ref{lemmaAnnex} below, the maximizers $\hat z = (\hat z^s, \hat z^a, \hat z^b)$ of $H_E$ are given by:
 \begin{eqnarray} 
 &\hat z^s(t, q)
 = 
 -\frac{\gamma}{\gamma + \eta} q,
 ~
 \hat z^a(t, q)
 = 
 \hat\zeta\big(v(t,q),v(t,q-1)\big),
 ~~
 \hat z^b(t, q)
 = 
 \hat\zeta\big(v(t,q),v(t,q+1)\big),
 ~~
 &
 \label{opt-z}
 \\
 &\mbox{with}~~
 \displaystyle\hat\zeta({y},y')
 =
 \zeta_0
 + \frac{1}{\eta} \log\Big( \frac{{y}}{y'}\Big),
 ~~
 \zeta_0
 \;=\;
 c + \frac{1}{\eta}\log\Big(1 - \frac{\sigma^2 \gamma \eta}
                                                               {(k + \sigma \gamma)(k + \sigma \eta)}
                                            \Big).
 &
 \nonumber
 \end{eqnarray}
Here, ${\delta_\infty}$ is sufficiently large so that Condition \eqref{condition:v1v2} of Lemma \ref{lemmaAnnex} is always met, namely 
\begin{equation} \label{cond_infty}
{\delta_\infty}  
\;\ge\; C_\infty + \frac{1}{\eta} \sup_{{t \in [0,T]}} \sup_{q \in [-\bar q, \bar q - 1]}\left| \log\left(\frac{v(t,q)}{v(t,q+1)}\right) \right|,
\end{equation}
with $C_\infty$ given in Lemma \ref{lemmaAnnex}, and we shall check in our verification argument that our candidate solution of the HJB equation will verify it. Using again the calculation reported in Lemma \ref{lemmaAnnex}, we rewrite the HJB equation \eqref{HJB} as
 \begin{equation} \label{edp_v}
 \partial_t v(t,q) 
 +  
 \frac{\gamma \eta^2 \sigma^2}{2(\gamma + \eta)} q^2 v(t,q) 
 - C_0 v(t,q) \Big[\1_{\{q>-\bar q\}} \big(\frac{v(t,q)}{v(t,q-1)}\big)^{\frac{k}{\sigma \eta}} 
                        + \1_{\{q<\bar q\}} \big(\frac{v(t,q)}{v(t,q+1)}\big)^{\frac{k}{\sigma \eta}} 
                \Big]
\;=\;
0, 
\end{equation}
with boundary condition $v\big|_{t=T}=-1$, where the constant $C_0$ is given by
 \begin{eqnarray*}
 C_0
 = C_0(\frac{\sigma\gamma}k,\frac{\sigma\eta}k),
 &\mbox{with}&
 C_0(\alpha,\beta)
 := A \beta (1+\alpha)^{-\frac1\alpha} \Big(1-\frac{\alpha\beta}{(1+\alpha)(1+\beta)}\Big)^{1+\frac1\beta}.
 \end{eqnarray*}
Inspired by \cite{gueant2013dealing}, we now make the key observation that this equation can be reduced to a linear equation by introducing $u := (-v)^{-\frac{k}{\sigma \eta}}$. By direct substitution, we obtain the following linear differential equation 
\begin{equation} \label{linearPDE}
u\big|_{t=T}=1,
~\mbox{and}~
\partial_t u(t,q) -F_{C_1,C_1'}(q,u(t,q),u(t,q+1),u(t,q-1))  
=
0,~ t<T,~|q|\le\bar q,
\end{equation}
$$ 
F_{m,m'}(q,y,y',y''):=m q^2 y- m'\big(y' \1_{\{q<\bar q\}} + y'' \1_{\{q>-\bar q\}}\big), 
~C_1 := \frac{k \gamma \eta \sigma}{2(\gamma + \eta)},
~C'_1 :=  \frac{kC_0}{\sigma \eta}.
$$
This equation can be written in terms of the $\R^{2\bar q+1}-$valued function $\mathbf{u}(t)=\big(u(t,q)\big)_{q \in \{-\bar q,\ldots,\bar q\}}$, of the variable $t$ only, as the linear ordinary differential equation
 \begin{eqnarray*}
 \partial_t\mathbf{u}
 =
 -\mathbf{B}u,
 &\mbox{where}&
 \mathbf{B}
 =
 {\tiny\left(
 \begin{array}{ccccc}
 -C_1{\bar q}^2    & C'_1           &                   &                   & 
 \\
 \ddots & \ddots & \ddots  &                   & 
 \\
                   & C'_1           & -C_1q^2    & C'_1            &  
 \\
                   &                   & \ddots & \ddots & \ddots  
 \\
                   &                   &                   &    C'_1       &    -C_1{\bar q}^2  
 \end{array}
 \right)}
 \leftarrow~q\mbox{-th line,} 
 \end{eqnarray*}
is a tri-diagonal matrix with lines labelled $-\bar q,\ldots,\bar q$. Denote by $\mathbf{b}_q$ the vector of $\R^{2\bar q+1}$ with zeros everywhere except at the position $q$, i.e. $\mathbf{b}_{q,j}=\1_{\{j=q\}}$ for $j \in \{-\bar q,\ldots,\bar q\}$, and $\mathbf{1}=\sum_{q=-\bar q}^{\bar q} \mathbf{b}_q$. Then, this ODE has a unique solution 
 \begin{equation}\label{u=etB}
 \mathbf{u}(t) = e^{(T-t)\mathbf{B}}\mathbf{1},
 ~\mbox{so that}~
 u(t,q)=\mathbf{b}_q\!\cdot\! e^{(T-t)\mathbf{B}}\mathbf{1},
 ~\mbox{and}~
 v(t,q)
 =
 -\big(\mathbf{b}_q\!\cdot\! e^{(T-t)\mathbf{B}}\mathbf{1}\big)^{-\frac{\sigma\eta}{k}}.
 \end{equation}
In the next section, we shall prove that this solution $v$ of the HJB equation \eqref{HJB} coincides with the value function of the reduced exchange problem \eqref{Pb2}, with optimal controls $\hat z(t,q)$ given in \eqref{opt-z}, thus inducing the optimal contract $Y^{\hat Y_0,\hat Z}_T$ with $\hat Z_t=\hat z(t,Q_{t-})$.

Let us notice that we may provide a more explicit expression of the above function $u$:
 \begin{equation}\label{calcul:u}
 \begin{array}{rcl}
 u(t,q)
 &=&
 \displaystyle
 \sum_{p\geq 0}  \frac{ [C'_1(T-t)]^p}{p!} \sum_{j\ge 0}  \frac{ [C'_1(T-t)]^j}{j!} e^{-C_1(T-t) (q+j-p)^2}
 \1_{\{|q+j-p|\le\bar q\}},
\end{array}
 \end{equation}
see Appendix \ref{Appendix-calcul u} for the more general case of $N$ symmetric exchanges in Nash equilibrium. We conclude this section by an (yet one more) alternative representation of the function $u$, which is convenient for the derivation of some useful properties.

\begin{prop}\label{prop:Representation}
Let $u$ and $v$ be defined by \eqref{u=etB}. The function $u$ can be represented as
 \begin{eqnarray*}
 u(t,q)
 &=&
 \E\Big[ e^{\int_t^T (-C_1(Q^{t,q}_s)^2+\overline{\lambda}_s+\underline{\lambda}_s)ds}\Big],
 \end{eqnarray*}
where $Q^{t,q}_s=q+\int_t^s d(\overline{N}_u-\underline{N}_u)$, and $(\overline{N},\underline{N})$ is a two-dimensional point process with intensity $(\overline{\lambda}_s, \underline{\lambda}_s) = C'_1 (\1_{\{Q_{s-} <\bar q\}}, \1_{\{Q_{s-} >-\bar q\}})$. In particular, we have $e^{-C_1{\bar q}^2 T}
 \le
 u
\le
 e^{2C'_1 T}$,
and Condition \eqref{cond_infty} is verified whenever
 \begin{equation} \label{bounds}
 \delta_\infty
 \;\geq\;
 \Delta_\infty := C_\infty + \frac{\sigma}{k} (2 C'_1 + C_1 \bar q^2 ) T.
\end{equation}
\end{prop}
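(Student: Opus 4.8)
The plan is to obtain the stated representation by a Feynman--Kac verification argument: exhibit the right-hand side, as a function of $(t,q)$, as a solution of the same linear equation \eqref{linearPDE} with the same terminal condition as $u$, and conclude by the uniqueness already recorded in \eqref{u=etB}. First I would check that the controlled birth-and-death process $(\overline N,\underline N)$ with state-dependent intensities $C'_1\big(\1_{\{Q_{s-}<\bar q\}},\1_{\{Q_{s-}>-\bar q\}}\big)$ is well defined and non-explosive (finite state space, uniformly bounded jump rates) and that, started from $q\in\{-\bar q,\ldots,\bar q\}$, the inventory $Q^{t,q}$ stays in $\{-\bar q,\ldots,\bar q\}$ at all times, precisely because these intensities switch off the upward (resp.\ downward) jumps at $\bar q$ (resp.\ $-\bar q$), which is also the reason the indicators appear in \eqref{linearPDE}.

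Next, fix $(t,q)$, set $\Gamma_s := \int_t^s\big(-C_1(Q^{t,q}_r)^2+\overline\lambda_r+\underline\lambda_r\big)\,dr$ and consider $M_s := e^{\Gamma_s}\,u(s,Q^{t,q}_s)$ on $[t,T]$. Since $\Gamma$ is continuous of finite variation, and $s\mapsto u(s,Q^{t,q}_s)$ moves by $\partial_t u$ between jumps and by the increments $u(s,Q_{s-}+1)-u(s,Q_{s-})$, $u(s,Q_{s-}-1)-u(s,Q_{s-})$ at the jumps of $\overline N$, $\underline N$, the product rule together with the compensation of $d\overline N$, $d\underline N$ shows that the finite-variation (drift) part of $M$ equals
\[
e^{\Gamma_s}\Big(\partial_t u(s,Q_{s-}) - C_1 Q_{s-}^2\,u(s,Q_{s-}) + C'_1\1_{\{Q_{s-}<\bar q\}}\,u(s,Q_{s-}+1) + C'_1\1_{\{Q_{s-}>-\bar q\}}\,u(s,Q_{s-}-1)\Big),
\]
which vanishes identically because $u$ solves \eqref{linearPDE}. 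Hence $M$ is a local martingale on $[t,T]$. As $u$ is continuous on the compact set $[0,T]\times\{-\bar q,\ldots,\bar q\}$ it is bounded, and $-C_1(Q^{t,q}_r)^2\le 0$, $\overline\lambda_r+\underline\lambda_r\le 2C'_1$ give $\Gamma_s\le 2C'_1 T$, so $M$ is bounded, hence a true martingale. Taking expectations at $T$ and using $u(T,\cdot)\equiv 1$ yields $u(t,q)=M_t=\E[M_T]=\E\big[e^{\Gamma_T}\big]$, the announced formula. The two-sided bound is then immediate: on $\{-\bar q,\ldots,\bar q\}$ one has $-C_1\bar q^2 T\le -C_1\bar q^2(T-t)\le \Gamma_T\le 2C'_1(T-t)\le 2C'_1 T$, whence $e^{-C_1\bar q^2 T}\le u\le e^{2C'_1 T}$.

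Finally, for Condition \eqref{cond_infty} I would use $v=-u^{-\sigma\eta/k}$ to get $\frac1\eta\big|\log\frac{v(t,q)}{v(t,q+1)}\big| = \frac{\sigma}{k}\big|\log u(t,q+1)-\log u(t,q)\big|$, and bound the right-hand side by $\frac{\sigma}{k}(2C'_1+C_1\bar q^2)T$ using the two-sided bound on $u$ (which forces $\log u\in[-C_1\bar q^2 T,\,2C'_1 T]$). Substituting into \eqref{cond_infty} shows it suffices to take $\delta_\infty\ge C_\infty+\frac{\sigma}{k}(2C'_1+C_1\bar q^2)T=\Delta_\infty$, i.e.\ \eqref{bounds}. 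The only genuinely delicate point is the bookkeeping in the Itô/product-rule computation for the pure-jump inventory $Q^{t,q}$ and the correct compensation of $d\overline N$ and $d\underline N$; the passage from local to true martingale, and the bounds, come for free from finiteness of the state space.
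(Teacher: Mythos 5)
Your argument is correct and is essentially the paper's own proof: both define $M_s=e^{\int_t^s f(Q^{t,q}_r)dr}u(s,Q^{t,q}_s)$, show its drift vanishes because $u$ solves \eqref{linearPDE}, upgrade the local martingale to a true martingale via boundedness of $u$, and take expectations using $u(T,\cdot)=1$, with the bounds and Condition \eqref{cond_infty} then following exactly as you describe. No substantive differences.
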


\begin{proof} Notice that $u$ is a smooth bounded function. Denote $f(x) = -C_1 x^2 +C'_1 (\1_{\{x>-\bar q\}}+\1_{\{x<\bar q\}})$, and $ M_s = e^{\int_t^s f(Q^{t,q}_u) du} u(s, Q^{t,q}_s)$, $t\le s\le T$. We now show that $M$ is a martingale, so that $u(t,q)=M_t=\E[M_T]=\E\big[e^{-\int_t^T f(Q^{t,q}_s)ds}\big]$, as $u(T,.)=1$. To see that $M$ is a martingale, we compute by It\^o's formula that
 \begin{eqnarray*} 
 dM_s
 &=& 
 \big[u(s, Q^{t,q}_s)f(Q^{t,q}_s)  + \partial_tu(s, Q^{t,q}_s) \big]ds
 \\
 &&
 + C'_1 \big[u(s, Q^{t,q}_{s-} +1) - u(s, Q^{t,q}_{s-})\big] d\overline{N}_s 
 + C'_1 \big[u(s, Q^{t,q}_{s-} -1) - u(s, Q^{t,q}_{s-})\big]d\underline{N}_s  .
\end{eqnarray*}
Since $u$ is solution of \eqref{linearPDE}, we get
 \begin{eqnarray*} 
 dM_s 
 &=&  
 C'_1 \big[u(s, Q^{t,q}_{s-} +1) - u(s, Q^{t,q}_{s-})\big] d\overline{M}_s
 + C'_1 \big[u(s, Q^{t,q}_{s-} -1) - u(s, Q^{t,q}_{s-})\big]d\underline{M}_s,
 \end{eqnarray*}
where $(\overline{M}, \underline{M}) = (\overline{N} - \int_0^\cdot \overline{\lambda}_s ds, \underline{N} - \int_0^\cdot \underline{\lambda}_s ds)$ is a martingale. The martingale property of $M$ now follows from the boundedness of $u$ as it can be verified from the expression \eqref{u=etB}.  Finally, the bound $|Q^{t,q}_s|\le \bar q$ induces directly the announced bounds on $u$, which in turn imply Condition \eqref{cond_infty} when \eqref{bounds} is satisfied because $v = - u^{-\frac{\sigma \eta}{k}}$ .\end{proof}

\subsubsection{Main result}

We now verify that the function $v$ derived in the previous section is the value function of the exchange, with optimal feedback controls $(\hat z^s,\hat z^a,\hat z^b)$ as given in \eqref{opt-z}, thus identifying a unique optimal contract to be proposed by the exchange to the market maker. 

\begin{theorem}\label{thm:pbPlatform} Assume that $\delta_\infty\ge\Delta_\infty$, with $\Delta_\infty$ given by \eqref{bounds} and define $u$ and $v$ by \eqref{u=etB}. Then the optimal contract for the problem of the exchange \eqref{Ppb} is given by
 \begin{eqnarray}\label{xistar:sol}
 \hat \xi
 &=& 
 \hat Y_0 
 + \int_0^T \hat Z_r^a dN_r^{a} + \hat Z^b_r dN_r^{b} + \hat Z_r^S dS_r 
                  +\Big( \frac12\gamma\sigma^2 \big(\hat Z_r^{S}+Q_r\big)^2 - H\big(\hat Z_r, Q_r\big)\Big) dr,
\end{eqnarray}
with $\hat Z^S_r=\hat z^s(r, Q_{r-})$, $\hat Z^a_r=\hat z^a(r,Q_{r-})$, and $\hat Z^b_r=\hat z^b(r,Q_{r-})$ as defined in \eqref{opt-z}. The market maker's optimal effort is given by 
 \begin{equation}\label{optimalOptimalSpread}
 \hat\delta^a_t = \hat\delta^a_t(\hat\xi) = -\hat Z_t^a + \frac{1}{\gamma} \log(1 + \frac{\sigma \gamma}{k}), 
 \quad 
 \hat\delta^b_t = \hat\delta^b_t(\hat\xi)=  -\hat Z_t^b + \frac{1}{\gamma} \log(1 + \frac{\sigma \gamma}{k}).
 \end{equation}
\end{theorem}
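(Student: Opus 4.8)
The plan is to carry out the standard verification argument for the reduced exchange problem $v^E_0$ of \eqref{Pb2}, proving that the function $v$ of \eqref{u=etB} is its value function (i.e.\ $v^E_0=v(0,0)$) with optimal controls the feedback \eqref{opt-z}, and then reading off the optimal contract \eqref{xistar:sol}.

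First I would fix an arbitrary $Z\in\mathcal Z$, write $R^Z_t:=c(N^a_t+N^b_t)-Y^{0,Z}_t$ (so the exchange payoff is $-e^{-\eta R^Z_T}$ and $R^Z_0=0$), and study the process $U_t:=e^{-\eta R^Z_t}\,v(t,Q_t)$, for which $U_0=v(0,0)$ and $U_T=-e^{-\eta R^Z_T}$ since $v\big|_{t=T}=-1$. Applying It\^o's formula to $U$ under $\P^{\hat\delta(Y^{\hat Y_0,Z}_T)}$ — under which $dS_t=\sigma dW_t$ and $\widetilde N^i_t:=N^i_t-\int_0^t\lambda(\Delta(Z^i_r))\1_{\{\varepsilon_iQ_{r-}>-\bar q\}}dr$, $i\in\{a,b\}$, are martingales — and substituting the dynamics \eqref{YZ} of $Y^{0,Z}$, the finite variation part of $U$ reorganises: the drift and It\^o correction coming from $e^{-\eta R^Z}$ produce $\frac{\eta\sigma^2}{2}v(t,Q_t)\big(\gamma(Z^S_t+Q_t)^2+\eta(Z^S_t)^2\big)=h^1_E(Q_t,v(t,Q_t),Z^S_t)$, while each jump compensator combines with the matching part of the $H(Z_t,Q_t)$-term of \eqref{YZ} — using $H(z,q)=\sum_{i=b,a}\frac{1-e^{-\gamma(z^i+\Delta(z^i))}}{\gamma}\lambda(\Delta(z^i))\1_{\{\varepsilon_iq>-\bar q\}}$ — into $h^0_E(v(t,Q_t),v(t,Q_t-1),Z^a_t)$ on $\{Q_t>-\bar q\}$ and $h^0_E(v(t,Q_t),v(t,Q_t+1),Z^b_t)$ on $\{Q_t<\bar q\}$. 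Since $v$ solves \eqref{HJB} and $H^1_E,H^0_E$ are the suprema of $h^1_E,h^0_E$, the drift of $U$ equals $e^{-\eta R^Z_{t-}}$ times $\partial_tv+h^1_E+\1 h^0_E+\1 h^0_E$, which is $\le e^{-\eta R^Z_{t-}}(\partial_tv+H_E)(t,Q_t)=0$, with equality if and only if $(Z^S_t,Z^a_t,Z^b_t)$ is the feedback maximiser $(\hat z^s,\hat z^a,\hat z^b)(t,Q_{t-})$ of Lemma \ref{lemmaAnnex}. Hence $U$ is a $\P^{\hat\delta(\cdot)}$-local supermartingale, a local martingale when $Z=\hat Z$.

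The technical core is to turn these local (super)martingales into true ones. Here I would use Proposition \ref{prop:Representation}, which gives the two-sided bound $-e^{\frac{\sigma\eta}{k}C_1\bar q^2T}\le v\le-e^{-\frac{2\sigma\eta}{k}C'_1T}<0$, so that $|U_t|\le C\,e^{-\eta R^Z_t}=C\,e^{\eta(Y^{0,Z}_t-c(N^a_t+N^b_t))}$, and combine it with the exponential moment conditions built into the definitions of $\mathcal Z$ and $\mathcal C$ — the first inequality of \eqref{condition:xi} for $\xi=Y^{0,Z}_T$ with $\eta'>\eta$ and the uniform bound \eqref{conditionY} — together with the boundedness of $Q$ and of the intensities of $N^a,N^b$; a Doob/H\"older argument then delivers uniform integrability along a localising sequence, whence $\E^{\hat\delta(Y^{\hat Y_0,Z}_T)}\big[-e^{-\eta R^Z_T}\big]\le v(0,0)$ for every $Z\in\mathcal Z$, i.e.\ $v^E_0\le v(0,0)$. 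I expect this integrability step to be the main obstacle, since the exchange's exponential payoff is unbounded and the change of measure is state dependent.

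Finally I would exhibit the optimiser. Put $\hat Z^i_t:=\hat z^i(t,Q_{t-})$; since $q$ ranges over the finite set $\{-\bar q,\dots,\bar q\}$ and $v$ is bounded away from $0$ and from $-\infty$, $\hat Z$ is a bounded predictable process, hence $\hat Z\in\mathcal Z$ and $\hat\xi:=Y^{\hat Y_0,\hat Z}_T$ satisfies \eqref{condition:xi}; moreover $V_{\rm MM}(\hat\xi)=-e^{-\gamma\hat Y_0}=R$ by Theorem \ref{thm:Agent}(ii) and the choice $\hat Y_0=-\frac1\gamma\log(-R)$, so $\hat\xi\in\mathcal C$. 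The hypothesis $\delta_\infty\ge\Delta_\infty$ forces Condition \eqref{cond_infty} via \eqref{bounds}, so Lemma \ref{lemmaAnnex} applies and all truncations are inactive along $\hat Z$. By the equality case above, $U$ is a genuine $\P^{\hat\delta(\hat\xi)}$-martingale for $Z=\hat Z$, whence $\E^{\hat\delta(\hat\xi)}\big[-e^{-\eta R^{\hat Z}_T}\big]=v(0,0)$; combining with the previous paragraph, $v^E_0=v(0,0)$ and $\hat\xi$ is optimal, and rewriting $Y^{\hat Y_0,\hat Z}_T$ via \eqref{YZ} gives \eqref{xistar:sol}. The market maker's optimal effort \eqref{optimalOptimalSpread} then follows from Theorem \ref{thm:Agent}(ii), which yields $\hat\delta^i_t(\hat\xi)=\Delta(\hat Z^i_t)=-\hat Z^i_t+\frac1\gamma\log(1+\frac{\sigma\gamma}k)$, the outer truncation in $\Delta$ being inactive because $|\hat z^i|+\frac1\gamma\log(1+\frac{\sigma\gamma}k)\le\Delta_\infty\le\delta_\infty$ by Proposition \ref{prop:Representation} and the definition of $C_\infty$ in Lemma \ref{lemmaAnnex}.
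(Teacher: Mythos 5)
Your verification argument is the same as the paper's: you study exactly the process the paper calls $v(t,Q_t)K^Z_t$ (your $U_t$), obtain the supermartingale property from the HJB equation \eqref{HJB} with equality along the feedback maximiser of Lemma \ref{lemmaAnnex}, check that $\hat Z$ is bounded hence admissible, and read off the market maker's response from Theorem \ref{thm:Agent}(ii); your reorganisation of the drift into $h^1_E$ and the two $h^0_E$ terms is correct, as is the inactivity of the truncation in $\Delta$ under $\delta_\infty\ge\Delta_\infty$.

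The one step that does not go through as you describe it is the uniform integrability of $K^Z_t=e^{-\eta R^Z_t}$, which you correctly identify as the main obstacle but then dispatch with tools that do not suffice. Condition \eqref{conditionY} controls $\sup_t\E^\delta[e^{-\gamma' Y^{0,Z}_t}]$, i.e.\ the \emph{negative} exponential moment of $Y^{0,Z}_t$; it is used for the market maker's verification and gives nothing about $e^{+\eta'' Y^{0,Z}_t}$, which is what $|K^Z_t|^{1+\varepsilon}$ requires. The first inequality in \eqref{condition:xi} is the right hypothesis, but it is a moment bound at the terminal time $T$ only, and $e^{\eta'' Y^{0,Z}_t}$ is not a sub- or supermartingale under any fixed $\P^\delta$, so Doob's maximal inequality does not transfer it to all $t$. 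The paper's Lemma \ref{unifintegr:K} closes this gap by a different mechanism: from the proof of Theorem \ref{thm:Agent}(ii), $-e^{-\gamma Y^{0,Z}_t}$ is the market maker's continuation utility, i.e.
\begin{equation*}
-e^{-\gamma Y^{0,Z}_t}\;=\;\underset{\delta\in\Ac}{\mbox{ess}\sup}\;\E^\delta_t\Big[-e^{-\gamma(Y^{0,Z}_T+\mbox{\scriptsize PL}^\delta_T-\mbox{\scriptsize PL}^\delta_t)}\Big],
\end{equation*}
and raising this to a negative power, applying Jensen's and H\"older's inequalities and a dynamic programming argument (as in Lemma \ref{lemme_DPP}) pushes the bound from the random variable at time $t$ to the terminal-time moment $\sup_\delta\E^\delta[e^{\eta' Y^{0,Z}_T}]$, which \eqref{condition:xi} does control. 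You need this lemma (or an equivalent substitute) before invoking Doob--Meyer to conclude that the local martingale part of $U$ is a true martingale; with it in place, the rest of your argument is complete.
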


\begin{remark}
Notice that, in our model the exchange observes the spread set by the market maker. However, as explained above, the spread cannot be part of the contract. Consequently, the second best exchange problem in Theorem \ref{thm:pbPlatform} does not coincide with the first best where the exchange could use the observe bid-ask policy $\delta$ in the contract $\xi$, under the market maker participation constraint. The corresponding computations are reported in Appendix \ref{sec:firstbest} below.
\end{remark}

\subsubsection{Discussions and interpretations}\label{discussion}

The processes $\hat Z^a$, $\hat Z^b$ and $\hat Z^S$ defining the optimal contract have natural interpretations. Based on Proposition \ref{prop:Representation}, we can get the intuition that (at least for large inventories) \begin{equation}\label{eq:zstar:eqQ}
 \hat Z^i= \xi_0+\frac1\eta\log\Big( \frac{u(t,Q_{t-})}{u(t,Q_{t-}-\eps_i)}\Big) \underset{|q|\to+\infty}{\sim}\xi_0 +\frac{\eps_i}{\eta} \frac{C_1}kQ_{t-},\; i\in \{a,b\},
\end{equation}
recalling that $(\varepsilon_b,\varepsilon_a)=(-1,1)$.
 This is confirmed in our Figure \ref{spreadaskbid} below at time $t=0$ (since $\hat Z^b$ and $\hat Z^a$ are the opposite of the optimal bid and ask spreads respectively). This is in fact shown for any time in the numerical simulations and asymptotic expansion in \cite[Section 4]{gueant2013dealing} and \cite[Section 3.2]{avellaneda2008high} where same type of PDE as ours is considered. Thus, when the inventory is highly positive, the exchange provides incentives to the market-maker
so that it attracts buy market orders and discourage him from more sell market orders, and vice versa for a negative inventory. The integral 
$\int_0^T \hat Z_r^S dS_r$ can be understood as a risk sharing term. More precisely, $\int_0^t Q_{r} dS_r$ corresponds to the price driven component of the
inventory risk $Q_tS_t$. Hence, the {exchange} supports the proportion $\frac{\gamma}{\gamma + \eta}$ of this risk so that the market maker maintains reasonable quotes despite some inventory. 

Notice that for a highly risk averse exchange, i.e. $\eta\nearrow\infty$, 
\[  \int_0^T \hat Z_r^a dN_r^{a} + \hat Z^b_r dN_r^{b}\approx c(N_T^a+N_T^b), \; \hat Z_r^S\approx 0,\]
meaning that the exchange transfers to the market maker the total fee. This is the so-called \textit{selling the firm} effect, as the exchange delegates all benefit to the market maker.\footnote{We would like to thank an anonymous referee for suggesting this interpretation.}

Until now, we have focused on the maker part of the make-take fees problem since we have considered that the taker cost $c$ is fixed. Nevertheless, our approach also enables us to suggest the exchange a relevant value for $c$. Actually, we see that when acting optimally, the exchange transfers the totality of the fixed taker fee $c$ to the market maker. It is therefore neutral to the value of $c$ as its optimal utility function $v_0^E = v(0, Q_0)$ is independent of the taker cost, see \eqref{edp_v}.  However, $c$ plays an important role in the optimal spread offered by the market maker given by
 $$ 
 - 2 c + \frac{\sigma}{k} \log\Big( \frac{u(t,Q_{t-})^2}{u(t,Q_{t-}-1) u(t,Q_{t-}+1)}\Big) 
 - \frac{2}{\eta}\log\Big(1 - \frac{\sigma^2 \gamma \eta}{(k + \sigma \gamma)(k + \sigma \eta)}\Big) 
 + \frac{2}{\gamma} \log(1 + \frac{\sigma \gamma}{k}). 
 $$
Furthermore, from numerical computations\footnote{See indeed Figure \ref{spread_t} by noting that $u$ does not depend on the fee $c$.} or the asymptotic development \eqref{eq:zstar:eqQ}, we remark that
$  \frac{u(t,q)^2}{u(t, q-1) u(t,q+1)}$
is close to unity for any $t$ and $q$. Hence if for example the exchange targets a spread close to one tick (see \cite{dayri2015large,huang2016predict} for details on optimal tick sizes and spreads), it can be obtained by setting
$$ c \approx - \frac{1}{2} \text{Tick}-  \frac{1}{\eta}\log\Big(1 - \frac{\sigma^2 \gamma \eta}{(k + \sigma \gamma)(k + \sigma \eta)}\Big) + \frac{1}{\gamma} \log(1 + \frac{\sigma \gamma}{k}).$$
For $\sigma \gamma/k$ small enough, this equation reduces to
\begin{equation}\label{optimalc}
c \approx \frac{\sigma}{k}- \frac{1}{2} \text{Tick}.
\end{equation}
This is a particularly simple formula for setting the taker constant fee $c$, as the parameters $\sigma$ and $k$ can be easily estimated from market data. We see that the higher the volatility, the larger the taker cost should be. The decrease in $k$ is also natural: If $k$ is large, the liquidity vanishes rapidly when the spread becomes wide, meaning that market takers are sensitive to extra costs relative to the efficient price. Therefore, the taker cost has to be small if the exchange wants to maintain a reasonable market order flow. 

\section{Exchange impact on market quality}
\label{section:comparison}

In this section, we compare our setting with the situation without incentive policy from an exchange towards market making activities which corresponds to the problem of optimal market making considered in \cite{avellaneda2008high, gueant2013dealing}. The results in \cite{avellaneda2008high} are taken as benchmark for our investigation to emphasize the impact of the incentive policy on market quality. We will refer to this case as the neutral exchange case.

Let us first recall the results in \cite{avellaneda2008high,gueant2013dealing}. The optimal controls of the market maker denoted by $\widetilde{\delta}^{a}$ and $\widetilde{\delta}^{b}$ are given as a function of the inventory $Q_t$ by
 $$ 
 \widetilde{\delta}^{i}_t
 = 
 \frac{\sigma}{k} \log\Big( \frac{\widetilde{u}(t,Q_{t-})}{\widetilde{u}(t,Q_{t-}-\varepsilon_i)}\Big) 
 + \frac{1}{\gamma}\log(1 +\frac{\sigma \gamma}{k} ), 
 ~~i\in\{b,a\},~
 (\varepsilon_b,\varepsilon_a)=(-1,1), 
 $$
where $\widetilde{u}$ is the unique solution of the linear differential equation
$$
\widetilde{u}\big|_{t=T}=1
~\mbox{and}~
\partial_t \widetilde{u}(t,q) - F_{\widetilde{C_1}, \widetilde{C'_1}}(q,\widetilde{u}(t,q),\widetilde{u}(t,q+1),\widetilde{u}(t,q-1))  = 0 , ~~t<T,|q|\le\bar q,
$$
with $\widetilde{C}_1 = \frac{\sigma \gamma k }2$ and $\widetilde{C}'_1 = A (1+\frac{\sigma \gamma}{k})^{-(1+\frac{\sigma \gamma}k)}$. In our case, the optimal quotes $\hat\delta^a$ and $\hat\delta^b$ are obtained from Theorem \ref{thm:pbPlatform} and satisfy for $i\in\{b,a\}$, and
 $(\varepsilon_b,\varepsilon_a)=(-1,1)$:
 $$ 
 \hat\delta^i_t
 =
 \frac{\sigma}{k} \log\Big( \frac{{u}(t,Q_{t-})}{{u}(t,Q_{t-}-\varepsilon_i)}\Big) 
 + \frac{1}{\gamma}\log(1 +\frac{\sigma \gamma}{k} ) 
 -c - \frac{1}{\eta} \log\Big(1- \frac{\sigma^2 \gamma \eta}{(k+ \sigma \gamma)(k+ \sigma \eta)}\Big). 
 $$
where $u$ is solution of the linear equation \eqref{linearPDE}.

Numerical experiments show that $u$ and {$\widetilde u$} decrease quickly to zero when $q$ becomes large, inducing numerical instabilities in the computation of 
$$ v_+(t,q)= \log\Big(\frac{u(t,q+1)}{u(t,q)} \Big), \quad  \widetilde{v}_+(t,q)= \log\Big(\frac{\widetilde{u}(t,q+1)}{\widetilde{u}(t,q)} \Big), \quad  q \in  \{ -\bar q, \cdots, \bar q-1\}, $$
which are crucial in the expressions of optimal quotes. To circumvent this numerical difficulty, we remark that $v_+$ and $\widetilde{v}_+$ are solution of the following integro-differential equations
\begin{eqnarray}
v_+\big|_{t=T}=0
&\mbox{and}&
\partial_t v_+(t,q) +  \mathcal F_{C_1,C'_1} (q,v_+(t,q),v_+(t,q+1), v_-(t,q+1)  ) = 0,
 \label{linearPDE1}
 \\
\widetilde{v}_+\big|_{t=T}=0
&\mbox{and}&
\partial_t \widetilde{v}_+(t,q)  + \mathcal F_{\widetilde{C_1},\widetilde{C_1'}} (q,\widetilde{v}_+(t,q),\widetilde{v}_+(t,q+1),\widetilde{v}_+(t,q-1) )= 0,
 \label{linearPDE2}
 \end{eqnarray}
where, again with $(\varepsilon_b,\varepsilon_a)=(-1,1)$,
\[ \mathcal F_{\alpha,\beta}(q,y,y_+,y_-)= {\alpha} (2q+1) - \beta \sum_{i\in\{a,b\}}\varepsilon_i e^{\varepsilon_i v_+(t,q-\varepsilon_i)} \1_{\{\varepsilon_i q<\bar q-1\}}-\varepsilon_i e^{\varepsilon_iv_+(t,q)}.\]  
We thus rather apply classical finite difference schemes to \eqref{linearPDE1} and \eqref{linearPDE2}.

In the following numerical illustrations, in the spirit of \cite[Section 6]{gueant2013dealing}, we take $T = 600s$ for an asset with  volatility $\sigma = 0.3 \text{ Tick}.s^{-1/2}$ (unless specified differently). Market orders arrive according to the intensities \eqref{intensity} with $A = 1.5 s^{-1}$ and $k=0.3 s^{-1/2}$. We assume that the threshold inventory of the market maker is $\bar q = 50$ units and we set his risk aversion parameter to $\gamma = 0.01$. The exchange is taken more risk averse with $\eta =1$. Finally, we assume that the taker cost $c = 0.5 \text{ Tick}$\footnote{Remark that the taker cost is chosen according to Criteria \eqref{optimalc}. We expect the optimal spread to be close to one tick. Note also that here the tick is just a unit and not a true market parameter.}. 

\subsection{Impact of the exchange on the spread and market liquidity}\label{spreadSec}

We start by comparing the optimal spread $ \hat\delta^{a}_0 + \hat\delta^{b }_0$ at time $0$ obtained when contracting optimally with the optimal spread $ \widetilde{\delta}^{a}_0 + \widetilde{\delta}^{b }_0$ without contracting. The optimal spreads are plotted in Figure \ref{spread1} for different initial inventory values $Q_0 \in \{-\bar q, \cdots, \bar q \}$.

\begin{center}
\includegraphics[scale=0.45]{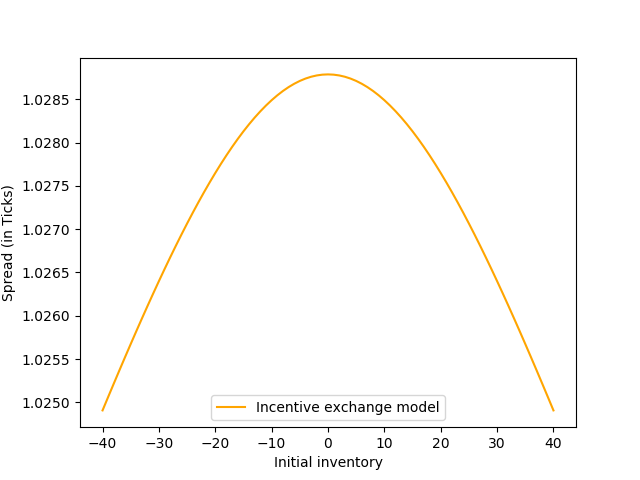}
\includegraphics[scale=0.45]{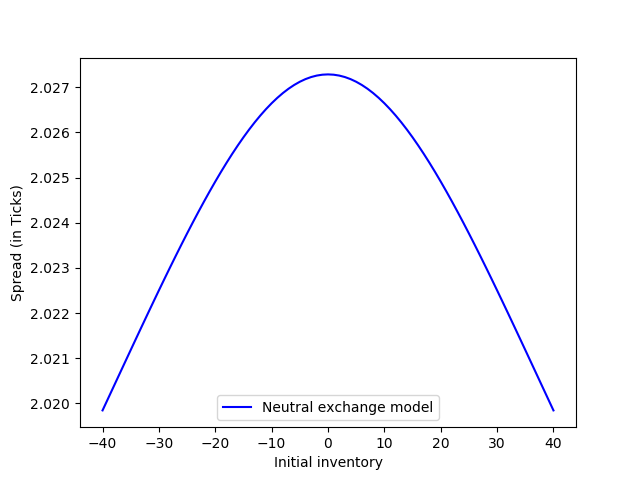}
\captionof{figure}{\small \it Comparison of optimal initial spreads with/without incentive policy from the exchange.}
\label{spread1}
\end{center}

\noindent  We observe in Figure \ref{spread1} that the initial spread does not depend a lot on the initial inventory {(because the considered time interval $[0,T]$ is not too small)} and that it is reduced thanks to the optimal contract between the market maker and the exchange. This is not surprising since in our case the exchange aims at increasing the market order flow by proposing an incentive contract to the market maker inducing a spread reduction. Actually this phenomenon occurs over the whole trading period $[0,T]$. To see this, we generate $5000$ paths of market scenarios and compute the average spread over $[0,T]$ for an initial inventory $Q_0=0$. The results are given in Figure \ref{spread_t}.
\begin{center}
\includegraphics[scale=0.45]{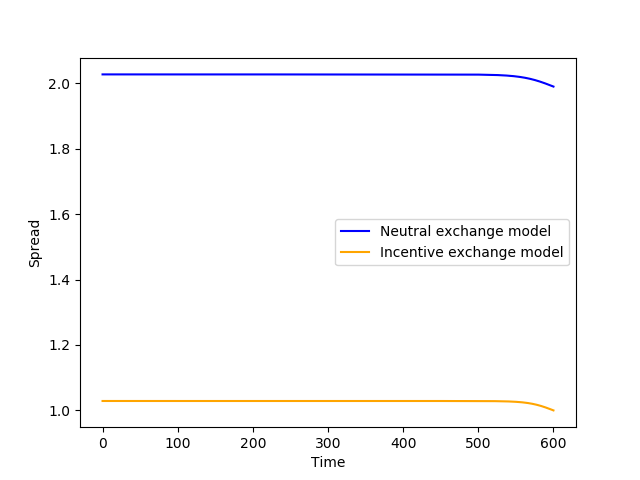}
\captionof{figure}{\small\it Average spread on $[0,T]$ with $95\%$ confidence interval, with/without incentive.} 
\label{spread_t}
\end{center}
Since the spread is tighter during the trading period under an incentive policy from the exchange, the arrival intensity of market orders is more important and hence the market is more liquid as shown in Figure \ref{order_flow}.
\begin{center}
\includegraphics[scale=0.45]{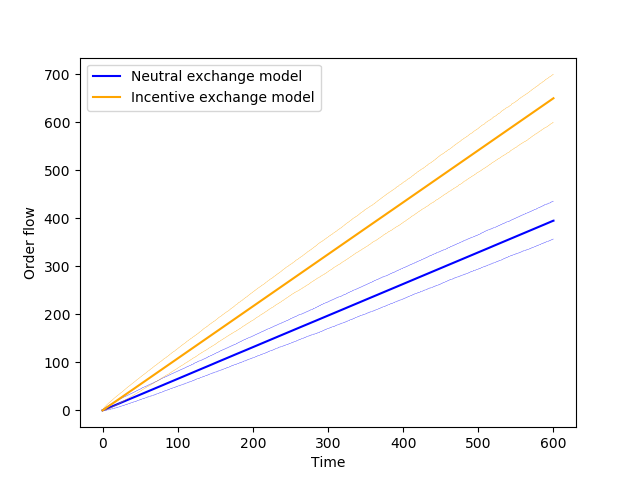}
\captionof{figure}{\small \it Average order flow on $[0,T]$ with $95\%$ confidence interval, with/without incentive.} 
\label{order_flow}
\end{center}

\noindent We now consider in Figure \ref{spreadaskbid} the bid and ask sides separately. We see that when the inventory is positive and very large, $\hat\delta^{a}$ and $\widetilde{\delta}^{a }$ are negative, meaning that the market maker is ready to sell at prices lower than the efficient price in order to attract market orders and reduce his inventory risk. On the contrary, if the inventory is negative and very large, in both situations, its ask quotes are well above the efficient price in order to repulse the arrival of buy market orders. However, since in our case the exchange remunerates the market maker for each arrival of market order, we get that the ask spread with contract $\hat\delta^{a }$ is smaller than $\widetilde{\delta}^{a }$. A symmetric conclusion holds for the bid part of the spread.\\

\begin{center}
\includegraphics[scale=0.45]{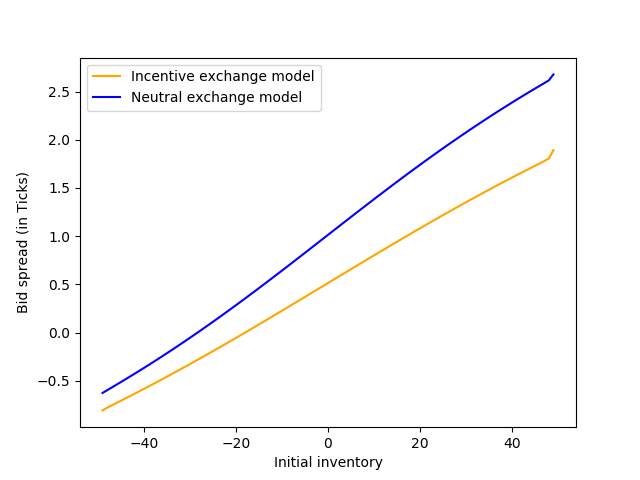}
\includegraphics[scale=0.45]{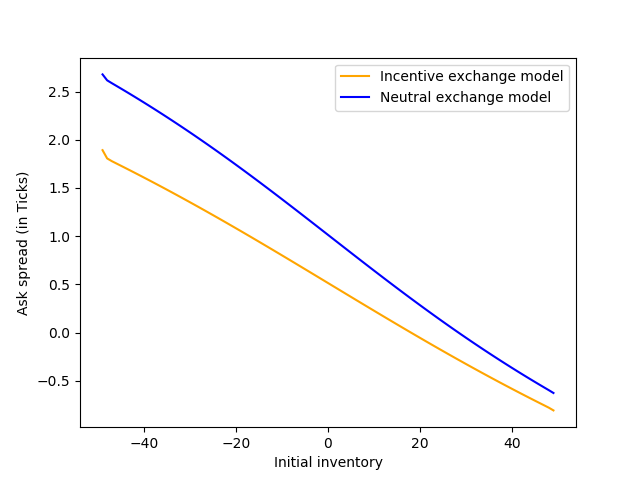}
\captionof{figure}{\small \it Optimal ask and bid spreads, with/without incentive policy.}
\label{spreadaskbid}
\end{center}

\noindent We now turn to the impact of the volatility on the spread. The optimal contract obtained in \eqref{xistar:sol} induces an inventory risk sharing phenomenon through the term $\hat Z^{S }$. Hence, when the volatility increases, the spread difference between situations with/without incentive policy becomes less important, see Figure \ref{spread_sigma} in which we consider the optimal initial spread difference when the initial inventory is set to zero between both situations with/without incentive policy from the exchange to the market maker for different values of the volatility.
\begin{center}
\includegraphics[scale=0.45]{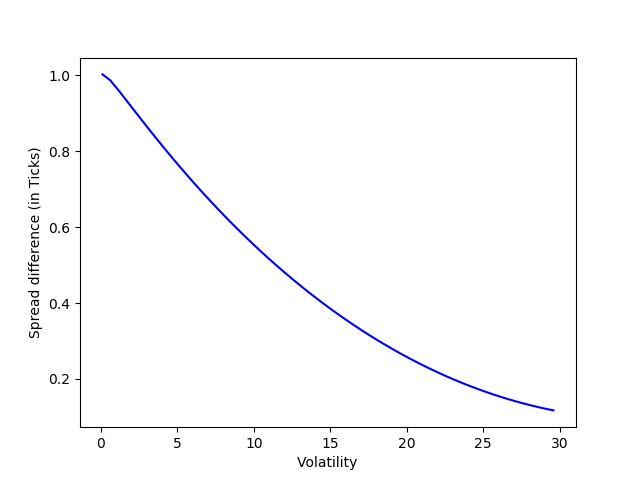}
\captionof{figure}{\small \it Initial optimal spread difference between the situations with and without incentive.} 
\label{spread_sigma}
\end{center}

\subsection{Impact on the P\&L of the exchange and the market maker}

We assume that $Q_0 = 0$. Recall that $ \mbox{\rm PL}^{\delta}$ defined in \eqref{PL} denotes the trading part of the profit and loss (P\&L) of the market maker for a given strategy $\delta$. In our case, the underlying total P\&L at time $t$ of a market maker acting optimally, denoted by $\mbox{\rm PL}_t^\star$, is:
 $$ 
 \mbox{\rm PL}_t^\star= \mbox{\rm PL}^{\hat\delta}_t + Y^{\hat Y_0,\hat Z}_t,
 $$ 
where $Y^{\hat Y_0, \hat Z}_t$ corresponds to the quantity on the right hand side of \eqref{xistar:sol} with $T$ replaced by $t$. We now compare this quantity to the benchmark $\mbox{\rm PL}^{\widetilde{\delta}}_t$ which corresponds to the optimal profit and loss without intervention of the exchange.

To make $ \mbox{\rm PL}_t^\star$ and $\mbox{\rm PL}^{\widetilde{\delta}}_t$ comparable, we choose $\hat Y_0$ in $\eqref{xistar:sol}$ so that the market maker gets the same utility in both situations, that is $ \hat Y_0 = \frac{k}{\sigma} \log(\widetilde{u}(0, Q_0))$. Thus, the market maker is indifferent between the situation with or without exchange intervention. We generate $5000$ paths of market scenarios and compare the average of both P\&L in Figure \ref{PL_market_maker} with and without incentive policy.

\begin{center}
\includegraphics[scale=0.45]{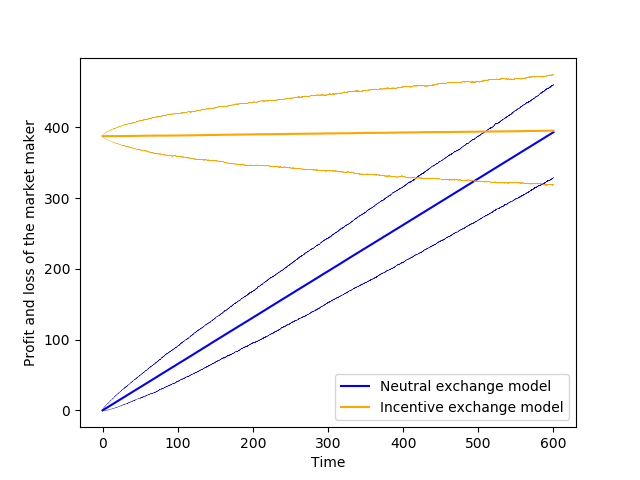}
\captionof{figure}{\small \it Average P\&L of market maker with/without incentive, with $95\%$ confidence interval.} 
\label{PL_market_maker}
\end{center}
Since $\hat Y_0$ is set to obtain the same utility in both cases, the two average P\&L are very close at the end of the trading period. The variance of the P\&L also seems to be the same in both situations. The only difference from the market maker viewpoint here is that in the case of a contract, the P\&L is already made at time $0$ thanks to the compensation of the exchange and then fluctuates slightly. This is because he is earning the spread but paying continuous ``coupons" $ \big(H(\hat Z_t, Q_t)  - \frac{\sigma^2 \gamma}{2} (\hat Z^S_t + Q_t)^2 \big)dt$ from the contract. In the case without exchange intervention, the market maker increases his P\&L over the whole trading period thanks to the spread.

We now compare the profit and loss of the exchange in the two considered cases. 
When it applies an incentive policy towards the market maker, the P\&L of the exchange is given by $c(N^a_t + N^b_t) - Y^{\hat Y_0, \hat Z}_t.$
When the exchange is neutral, its P\&L is simply $c(N^a_t+ N^b_t)$. We compare these two quantities in Figure \ref{PL_platform}.

\begin{center}
\includegraphics[scale=0.45]{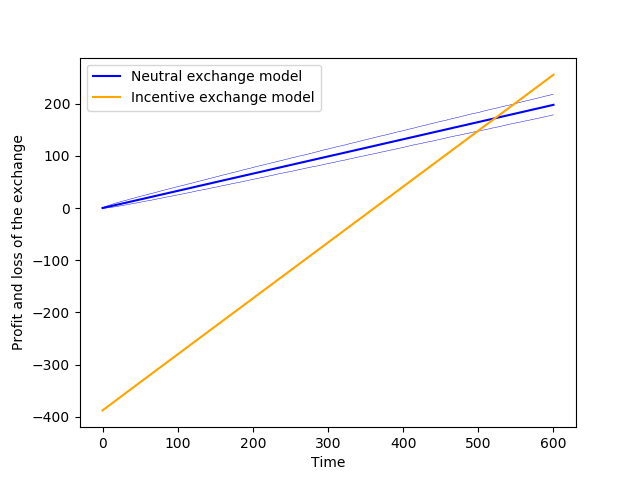}
\captionof{figure}{\small \it  Average P\&L of the exchange with/without incentive, with $95\%$ confidence interval.} 
\label{PL_platform}
\end{center}
We see that the initial P\&L of the contracting exchange is negative because of the initial payment $\hat Y_0$. However it finally exceeds, with a smaller standard deviation, the P\&L in the situation without incentive policy from the exchange. Hence the incentive policy of the exchange proves to be successful. Both configurations are indeed equivalent for market makers but the exchange obtains more revenues when contracting optimally. This is due to the fact that the contract triggers more market orders.

Finally, we plot the aggregated average P\&L of the market maker and the exchange (independent of the choice of the initial payment). We observe that it is always greater in the optimal contract case. 
\begin{center}
\includegraphics[scale=0.45]{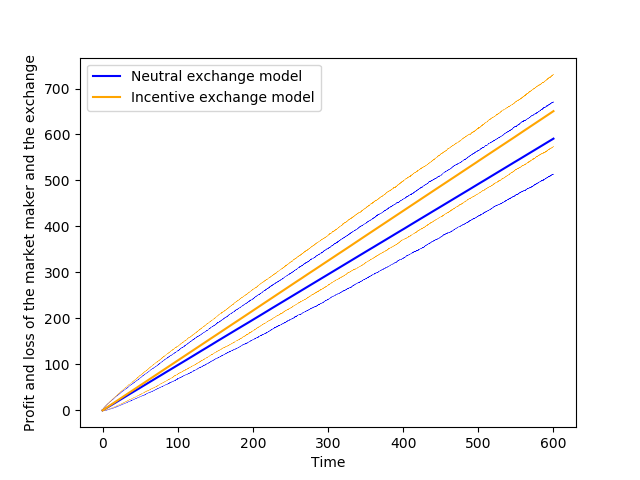}
\captionof{figure}{\small \it P\&L of exchange and market maker with/without incentive, $95\%$ confidence interval.} 
\label{PL_total}
\end{center}

\subsection{Impact of the incentive policy on the trading cost} 
We consider one single market taker. In the case without exchange, with the specified parameters and under optimal reaction of the market maker, this investor buys on average $200$ shares over $[0,T]$. To make the comparison with the case with exchange intervention, we modify the parameter $A$ appearing in the intensity \eqref{intensity} when simulating a market with optimal contract. This new value is chosen so that the investor buys on average the same number of assets $(200)$ over the time period. This amounts to take 
$A =  0.9s^{-1}$. We confirm in Figure \ref{ask_order_flow} that the average ask order flows agree in both situations.
\begin{center}
\includegraphics[scale=0.45]{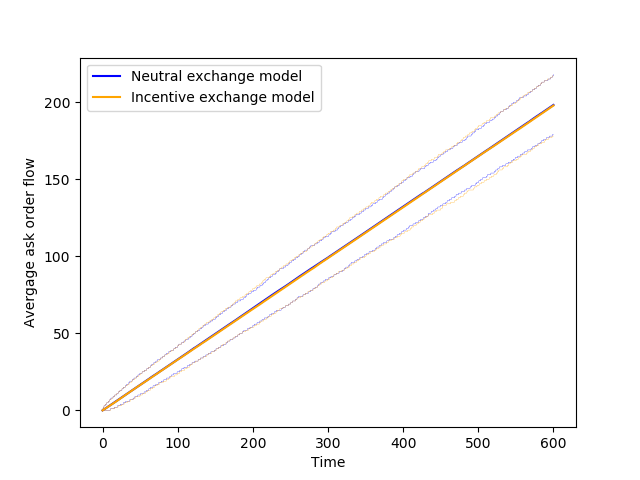}
\captionof{figure}{\small \it Setting similar average ask order flows on $[0,T]$ by taking different intensity basis $A$ in the case with and in the case without incentive policy; $95\%$ confidence interval.} 
\label{ask_order_flow}
\end{center}

Finally, Figure \ref{trading_cost} compares the average cost of trading for the market taker
$\E^\delta \big[\int_0^T \delta^a_t dN^a_t\big],$
with and without incentive, and shows that the reduced spreads lead to significantly smaller trading costs for investors.
\begin{center}
\includegraphics[scale=0.45]{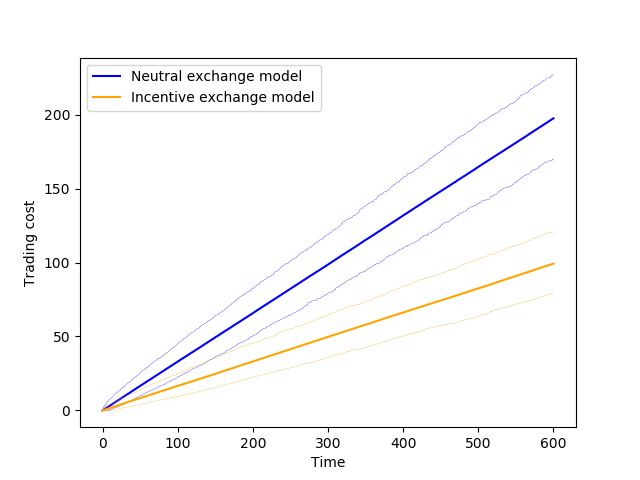}
\captionof{figure}{\small \it Average trading cost on $[0,T]$ with $95\%$ confidence interval, with/without incentive.} 
\label{trading_cost}
\end{center}

\section{Extension: symmetric exchanges competition}\label{section:oligo}

In this section we extend the previous study by considering a first step towards the investigation of the case of several exchanges in competition. 

\subsection{Symmetric exchanges in Nash equilibrium}

We assume here that $N$ identical exchanges display the quotes of one market maker and that the trading flows are split equally between the exchanges. More precisely, each time the market taker acts on the market, his trade of size one is split into $N$ trades of size $1/N$ distributed across all exchanges. This is equivalent to modify the market taker fee received by each exchange from $c$ to $c/N$. This situation is of course quite stylized but understanding it is obviously a very important preliminary to the study of the case of different exchanges with various market makers. Furthermore, as we will see below, the situation considered here is already significantly more intricate than the case of one exchange treated in the previous sections.

The market maker receives the aggregation of the compensation given by the $N$ exchanges denoted by $\overline \xi=\xi+\tilde \xi$, where $\xi$ and $\tilde \xi$ are repsectively the remuneration given by a representative exchange and the aggregation of the $N-1$ others. Hence, $\overline \xi$ inherits all the technical assumptions made previously on $\xi$ (for only one exchange), since the problem of the market maker is similar by considering $\overline \xi$ for his compensation. Consequently, the market maker's problem returns an optimal spread $\hat \delta(\overline \xi)$ so that Theorem \ref{thm:Agent} holds by considering $\overline \xi$. In view of the symmetry assumption made on the exchanges, any exchange aims at solving
 \begin{equation} \label{Ppbi} 
 V^{E}_0(\tilde \xi) =
 \sup_{\xi \in {\cal C}} ~ \E^{\hat\delta(\xi+\tilde\xi)}
                                     \Big[-e^{-\eta(\frac{c}N(N^a_T +N^b_T) - \xi)}\Big],
\end{equation}
where $\tilde \xi$ is fixed, and $\eta>0$ is the common risk aversion parameter of the $N$ exchanges.

\begin{definition}[Nash equilibrium and symmetric Nash equilibrium]\label{def:nash:sym}
A $N$-tuple $( \xi^{\rm e})_{1\leq {\rm e}\leq N}$ is a Nash equilibrium if for any ${\rm e}\in\{1,\dots,N\}$ we have
 \[V^{E}_0( \xi^{\rm e})
 = 
 \E^{\hat\delta( \sum_{j=1}^N \xi^j})
 \Big[-e^{-\eta(\frac{c}N(N^a_T +N^b_T) - \xi^{\rm e})}\Big].
 \]
 A $N$-tuple of contracts $(\xi^{\rm e})_{1\leq {\rm e}\leq N}$ is a symmetric Nash equilibrium if $( \xi^{\rm e})_{1\leq {\rm e}\leq N}$ is a Nash equilibrium such that $\xi^1=\cdots=\xi^N$. We denote by $\mathcal S^N:=\big\{\xi^0: (\xi^0,\ldots,\xi^0)\in\mathcal C^N\big\}$ the collection of all such symmetric Nash equilibria.
\end{definition}

From Theorem \ref{thm:Agent}, it follows that any symmetric Nash equilibrium  $\xi^0\in \mathcal S^N$ is induced by a pair $(\tilde y_0,\tilde Z)\in [\hat Y_0,+\infty)\times \mathcal Z$ such that 
\begin{eqnarray} 
\xi^0
\;=\; 
\frac{1}{N} Y_T^{\tilde y_0,\tilde Z}
&=&
\frac{\tilde y_0}N
+\int_0^T \frac{1}{N}\tilde Z_r d\chi_r
+\frac{\gamma \sigma^2}{2N} (\tilde Z_r^S+Q_r)^2 dr
-\frac{1}{N}\sum_{i\=a,b} H^i(\tilde Z_r^i,Q_r) dr,
\nonumber \\
&=&
\frac{\tilde y_0}N
+\!\!\int_0^T \!\!\zeta^0_r d\chi_r
+ \frac{\gamma \sigma^2}{2N} (N \zeta_r^{S,0}+Q_r)^2dr
-\frac 1N\!\!\sum_{i= a,b}\!\! H^i(N \zeta_r^{i,0},Q_r) dr,
\label{eq:nashsym}
\end{eqnarray}
with $\zeta^0=\frac{\tilde Z}N$, and
\begin{eqnarray*}
H^{i}(z,q)
= 
\lambda(\hat\delta(z)) \frac{\sigma}{k+\sigma\gamma}\mathbf 1_{\varepsilon_i q<\overline Q},
&\mbox{with}& 
(\varepsilon_b,\varepsilon_a)=(-1,1).
\end{eqnarray*}
We now denote by $\xi^{0,N-1}$ the $(N-1)$-tuple of identical contracts $\xi^0$ defined by \eqref{eq:nashsym}, and we set $\tilde Y_0:=\frac{N-1}N\tilde y_0$. As $\xi+(N-1)\xi^0=Y_T^{Y_0+\tilde Y_0,Z}$, by setting $\zeta:=Z-(N-1)\zeta^0$, for some $( Y_0, Z)\in [\hat Y_0,+\infty)\times \mathcal Z$  the problem of each exchange reduces to 
 \begin{equation*} \label{Ppbi:ter} 
 V^E_0(\xi^{0,N-1}) =
 \sup_{Y_0,\zeta} ~ \E^{\hat\delta(\zeta+(N-1) \zeta^0)} 
                                     \Big[-e^{-\eta( \int_0^T(\underset{i=a,b}{\sum}(\frac cN- \zeta_t^{i} )dN^i_t-\zeta_t^SdS_t)-Y_0-\int_0^T G(\zeta_t, \zeta^0_t, Q_t) - \alpha^0_t dt    )}\Big],
\end{equation*}
where $Y_0$ ranges in $[\hat Y_0-\tilde Y_0,+\infty)$, $\zeta\in\mathcal Z$, and 
 $$
 \begin{array}{rll}
 G(\zeta, \zeta^0, q)&=&  \frac12 \gamma \sigma^2(\zeta^{S}+(N-1) \zeta^{S,0}+q)^2-\underset{{i=a,b}}{\sum} H^i(\zeta^{i}+(N-1) \zeta^{i,0},q), 
 \\
\alpha^0_t& =& \frac{N-1}N \Big(\frac12 \gamma \sigma^2(N\zeta^{S,0}_t+Q_t)^2-\underset{{i=a,b}}{\sum}H^i(N{\zeta}^{i,0}_t,Q_t)\Big).
\end{array}
$$
The optimization over $Y_0$ is immediately solved, leading to
 \begin{equation} \label{Ppbi:ter} 
 V^E_0(\xi^{0,N-1}) =
 \sup_{\zeta \in  \mathcal Z} ~ \E^{\hat\delta(\zeta+(N-1) \zeta^0)} 
                                     \Big[-e^{-\eta( \int_0^T(\underset{i=a,b}{\sum}(\frac cN- \zeta_t^{i} )dN^i_t-\zeta_t^SdS_t)-Y^\star_0 -\int_0^T G(\zeta_t,\zeta^0_t, Q_t)   - \alpha^0_t dt    )}\Big],
\end{equation}
with $Y^\star_0=\hat Y_0-\tilde Y_0$. 

\begin{definition}[Markovian Nash equilibrium]\label{def:markovnash}
A symmetric Nash equilibrium $\xi^0\in \mathcal S^N$ is Markovian if the coefficients $\zeta^0$ appearing in $\eqref{eq:nashsym}$ is given by $\zeta^0_t=\zeta^0(t,Q_t)$ for some deterministic function $\zeta^0$.   
\end{definition}

\begin{remark}\label{remark:symmetric}
Note that if $ \xi^0$ is a symmetric Nash equilibrium with decomposition \eqref{eq:nashsym}, we necessarily have $\tilde y_0=\hat Y
_0$, $\hat{\zeta}(  \zeta^0)= \zeta^0$, where $\hat{\zeta}( \zeta^0)$ denotes an optimizer of \eqref{Ppbi:ter}. This allows to characterize any symmetric Nash equilibrium if there exists at least one.
\end{remark}

%

\subsection{The main result}
 \noindent Similarly to the one exchange problem studied previously, we introduce the HJB equation 
 \begin{equation} \label{edp_v^i_hjb}
 v\big|_{t=T} = -1
 ~\mbox{and}~
 \partial_t v(t,q) -\eta v(t,q) \hat F(t,q,v(t,q),v(t,q+1),v(t,q-1))\;=\;
0, 
\end{equation}
with
\[  \hat F(t,q,y,y^+,y^-)=\sup_{\zeta^S}  F^S(t,q,\zeta^S) +  \sup_{\zeta}F^{0}(t,q,y,y^+,\zeta) \mathbf 1_{q<\overline q}+  \sup_{\zeta}F^{0}(t,q,y,y^-,\zeta) \mathbf 1_{q>-\overline q}   ,  \]
and
 \begin{eqnarray*} 
 F^S(t,q,z)
 &=& 
 -\hat H^S(q,\zeta^{S,0}(q),z)-\frac\eta2 \sigma^2 |z|^2,  
 \\
 F^{0}(t,q,y,y',z)
 &=& 
 - \lambda\big(\hat\delta(z+(N-1)\widetilde{\zeta^{0}}(y,y'))\big)
   \Big(\frac{y'}y\frac{e^{\eta (z-\frac cN) }}{\eta}-\frac1\eta- \frac{\sigma}{k+\sigma\gamma}\Big)
 \\
&& 
- \lambda\big(\hat\delta(N \widetilde{\zeta^{0}}(y,y'))\big) \frac{(N-1)\sigma}{N(k+\sigma\gamma)}, 
\\
\hat H^S(q,\tilde z,z)
&=&
\frac12 \sigma^2\gamma \Big[ (z+(N-1)\tilde z+q)^2 -\frac{N-1}N (N\tilde z+q)^2 \Big],
\\
\zeta^{S,0}(q)
 &=&
 -\frac{\gamma}{\eta+N\gamma}q,
 ~~
\widetilde{\zeta^{0}} (y,y')
 \;=\;
\hat{\zeta}(y,y')+\frac{1-N}N c.
\end{eqnarray*}
Hence, by denoting $\widehat{\zeta}^{S},\, \widehat{\zeta}^N$ the optimizers of $F^{S}$ and $F^0$ respectively, we get 
\[\widehat{\zeta}^{S}(q)= \zeta^{S,0}(q),\; \widehat{\zeta}^N(y,y')=\widetilde{\zeta^{0}}(y,y').   \]
 
\noindent Consequently, the HJB equation \eqref{edp_v^i_hjb} reduces to 
 \begin{equation} \label{edp_v^i}
 \partial_t v(t,q) 
 +  
 \frac{\gamma \eta^2 \sigma^2}{2N(N\gamma + \eta)} q^2 v(t,q) 
 - \hat C_N v(t,q) \Big[\1_{\{q>-\bar q\}} \big(\frac{v(t,q)}{v(t,q-1)}\big)^{\frac{Nk}{\sigma \eta}} 
                        + \1_{\{q<\bar q\}} \big(\frac{v(t,q)}{v(t,q+1)}\big)^{\frac{Nk}{\sigma \eta}} 
                \Big]
\;=\;
0,
\end{equation}
with boundary condition $v\big|_{t=T}=-1$, where 
 \[
 \hat C_N
 = 
 C_0e^{\frac{(N-1)k}{\sigma\eta}}
 \;\frac{\sigma\gamma  +\frac1N (\sigma\eta+k)}{\sigma\gamma  + (\sigma\eta+k)}.\]

\noindent We now set $u = (-v)^{-\frac{kN}{\sigma \eta}}$. By direct substitution, we obtain the following linear equation 
\begin{equation} \label{linearPDE:Nex}
u\big|_{t=T}=1
~\mbox{and}~
\partial_t u(t,q) - F_{C_N,C_N'}(q,u(t,q),u(t,q+1),u(t,q-1))
=
0 , \quad t \in [0, T),
\end{equation}
with 
 \begin{eqnarray*}
C_N \;=\; \frac{k \gamma \eta \sigma}{2(N\gamma + \eta)}
 &\mbox{and}&
C'_N = \hat C_N \frac{kN}{\sigma \eta}.
 \end{eqnarray*}
Similarly to Section 4, we deduce that
 \begin{eqnarray*}
 v(t,q)
 =
 -\big(u(t,q)\big)^{-\frac{\sigma\eta}{kN}}
 &\mbox{where}&
 u(t,q)=\mathbf{b}_q\!\cdot\! e^{(T-t)\mathbf{B}_N}\mathbf{1},
 \end{eqnarray*}
 and 
  \begin{eqnarray*}
\mathbf{B}_N
 =
 {\tiny
 \left(
 \begin{array}{ccccc}
 -C_N{\bar q}^2    &C'_N           &                   &                   & 
 \\
 \ddots & \ddots & \ddots  &                   & 
 \\
                   &C'_N           & -C_Nq^2    &C'_N            &  
 \\
                   &                   & \ddots & \ddots & \ddots  
 \\
                   &                   &                   &   C'_N       &    -C_N{\bar q}^2  
 \end{array}
 \right)}
 \leftarrow~q\mbox{-th line,} 
 \end{eqnarray*}
Direct calculations reported in Appendix \ref{Appendix-calcul u} provide an other form for the function $u$: 
 \begin{equation}\label{calcul:u:N}
 \begin{array}{rcl}
 u(t,q)
 &=&
 \displaystyle
 \sum_{p\geq 0}  \frac{ [C'_N(T-t)]^p}{p!} \sum_{j\ge 0}  \frac{ [C'_N(T-t)]^j}{j!} e^{-C_N(T-t) (q+j-p)^2}
 \1_{\{|q+j-p|\le\bar q\}}.
\end{array}
 \end{equation}

\noindent The following result establishes the existence of a unique symmetric Nash equilibrium which is moreover Markovian. The proof is postponed to Appendix \ref{appendix:nash}.
\begin{theorem}\label{thmnash}
There is a unique symmetric Nash equilibrium $\xi^0\in\Sc^N$ defined by                              
\begin{equation}\label{optimal:contract:2ex} 
\xi^0
= 
\frac{\widehat{Y_0}}N
+\int_0^T \zeta^0_r d\chi_r
               +\frac{\gamma \sigma^2}{2N} (N \zeta_r^{S,0}+Q_r)^2dr
               -\frac1N\sum_{i=a,b} H^i(N\zeta_r^{i,0},Q_r) dr,
\end{equation}
where, for $i\in\{a,b\}$ and $(\varepsilon_b,\varepsilon_a)=(-1,1)$,
$$
\zeta^{S,0}_r
=
-\frac{\gamma}{\eta+N\gamma}Q_r,
~~
\zeta^{i,0}_r
=
\frac{c}N+\frac 1\eta\Big(\log\big(\frac{v(r,Q_r)}{v(r,Q_r-\varepsilon_i)}\big) 
+\log\big(1-\frac{\sigma^2\gamma\eta}{(\sigma\eta+k)(\sigma\gamma+k)} \big) \Big).
$$
In particular, this unique symmetric Nash equilibrium is Markovian.
\end{theorem}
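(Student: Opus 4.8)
The plan is to combine the characterization of symmetric Nash equilibria from Remark \ref{remark:symmetric} (any such equilibrium must have $\tilde y_0 = \hat Y_0$ and must satisfy the fixed-point relation $\hat\zeta(\zeta^0) = \zeta^0$ for the optimizer of the reduced single-exchange problem \eqref{Ppbi:ter}) with a verification argument for the HJB equation \eqref{edp_v^i_hjb} in the spirit of Theorem \ref{thm:pbPlatform}. Concretely, I would proceed as follows. First, fix a candidate Markovian symmetric profile: take $\zeta^{S,0}(q) = -\frac{\gamma}{\eta+N\gamma}q$ and $\zeta^{i,0}_r = \frac cN + \frac1\eta(\log(v(r,Q_r)/v(r,Q_r-\varepsilon_i)) + \log(1 - \frac{\sigma^2\gamma\eta}{(\sigma\eta+k)(\sigma\gamma+k)}))$ with $v$ the solution of \eqref{edp_v^i} constructed via $u(t,q) = \mathbf b_q \cdot e^{(T-t)\mathbf B_N}\mathbf 1$. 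One must check that this $\zeta^0$ lies in $\mathcal Z$ and that $\delta_\infty$ (chosen large as in \eqref{cond_infty}) makes all the relevant $\Delta(\cdot)$ clippings inactive; the bounds $e^{-C_N\bar q^2 T} \le u \le e^{2C'_N T}$ obtained exactly as in Proposition \ref{prop:Representation} (with $C_1,C'_1$ replaced by $C_N,C'_N$) give the needed a priori control.

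Second, I would run the verification step: assuming the other $N-1$ exchanges use $\zeta^{0,N-1}$, the representative exchange faces problem \eqref{Ppbi:ter}, whose value function I claim equals $v(0,Q_0)$ (up to the $e^{\eta Y^\star_0}$ prefactor coming from the already-solved $Y_0$ optimization). To prove this, apply It\^o's formula to $e^{-\eta(\cdots)}v(t,Q_t)$ along an arbitrary admissible $\zeta\in\mathcal Z$ used by the representative exchange; the drift term is $v$ times $\partial_t v + \eta v \cdot(\text{sup-free Hamiltonian expression})$, which by the definition of $\hat F$ and the pointwise maximization performed in Lemma-type computations (identical in structure to Lemma \ref{lemmaAnnex}) is $\le 0$, with equality exactly when $\zeta^S = \widehat\zeta^S(q) = \zeta^{S,0}(q)$ and $\zeta^i = \widehat\zeta^N(v(t,q),v(t,q\mp1)) = \widetilde{\zeta^0}(v(t,q),v(t,q\mp1))$. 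Uniform integrability of the exponential supermartingale follows from the boundedness of $v$, of $Q$, and of the point-process intensities together with the integrability condition \eqref{conditionY} built into $\mathcal Z$, exactly as in the proof of Theorem \ref{thm:Agent}(ii). This shows the candidate is a best response, hence a symmetric Nash equilibrium.

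Third, for uniqueness I would invoke Remark \ref{remark:symmetric}: any symmetric Nash equilibrium $\xi^0$ must satisfy $\hat\zeta(\zeta^0) = \zeta^0$, i.e. $\zeta^0$ must be a fixed point of the best-response map of \eqref{Ppbi:ter}. But the verification argument above shows that the best response to $\zeta^{0,N-1}$ is unique (the Hamiltonian $\hat F$ has unique maximizers $\widehat\zeta^S, \widehat\zeta^N$ because $F^S$ is strictly concave in $\zeta^S$ and, after the change of variables, $F^0$ is strictly concave in the relevant variable — this is the content of the Lemma \ref{lemmaAnnex}-type convexity computation), and that this unique best response is precisely the feedback form $\zeta^{S,0}(q), \widetilde{\zeta^0}(v(t,q),v(t,q\mp1))$ driven by the HJB solution $v$. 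Since \eqref{edp_v^i} (equivalently the linear ODE $\partial_t\mathbf u = -\mathbf B_N\mathbf u$, $\mathbf u(T) = \mathbf 1$) has a unique solution, $\zeta^0$ is uniquely determined; feeding it back through \eqref{eq:nashsym} with the forced value $\tilde y_0 = \hat Y_0$ yields exactly \eqref{optimal:contract:2ex}. The fact that the equilibrium is automatically Markovian is then immediate since $\zeta^{S,0}$ and $\zeta^{i,0}$ depend on $(r,Q_r)$ only.

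The main obstacle I anticipate is the self-consistency bookkeeping in the verification step: because each exchange's contract is $\xi^0 = \frac1N Y_T^{\tilde y_0,\tilde Z}$ and the market maker responds to the \emph{aggregate} $\overline\xi = \xi + (N-1)\xi^0$, one must carefully track how the representative exchange's deviation $\zeta = Z - (N-1)\zeta^0$ enters both the drift of $S$ (through $\zeta^S + (N-1)\zeta^{S,0}$) and the jump intensities (through $\hat\delta(\zeta^i + (N-1)\widetilde{\zeta^0})$), and verify that the penalization terms $G$ and $\alpha^0$ in \eqref{Ppbi:ter} are exactly what makes the HJB Hamiltonian \eqref{edp_v^i_hjb} emerge after applying It\^o. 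This is precisely where the shifted quantities $\widetilde{\zeta^0}(y,y') = \hat\zeta(y,y') + \frac{1-N}{N}c$ and the factor $\frac{N-1}{N}$ in $\alpha^0$ and in $\hat H^S$ come from; getting the constants right (so that $\hat C_N$ and $C_N, C'_N$ come out as stated) is the delicate part, but it is a finite computation with no conceptual difficulty once the single-exchange case is in hand.
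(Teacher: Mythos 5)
Your existence/verification step is essentially the paper's: one applies It\^o's formula to $v(t,Q_t)\mathcal K_t^{\zeta}$ for an arbitrary deviation $\zeta\in\mathcal Z$ of the representative exchange, uses that $v$ solves \eqref{edp_v^i_hjb} to obtain a supermartingale with equality exactly at the feedback maximizers, and the bookkeeping with $G$, $\alpha^0$, $\hat H^S$ and $\widetilde{\zeta^0}$ that you flag is indeed precisely what makes $\hat F$ emerge in the drift. That half of your plan is sound and matches the paper.

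The genuine gap is in uniqueness. Remark \ref{remark:symmetric} does give the fixed-point condition $\hat\zeta(\zeta^0)=\zeta^0$ for \emph{any} symmetric Nash equilibrium, but your claim that ``the best response to $\zeta^{0,N-1}$ is unique and is precisely the feedback form driven by the HJB solution $v$'' is only justified when the opponents' profile $\zeta^0$ is already Markovian. For a general, possibly path-dependent $\zeta^0\in\mathcal Z$, the representative exchange's problem \eqref{Ppbi:ter} has random, non-Markovian coefficients through $G$ and $\alpha^0$, its value process is not a function of $(t,Q_t)$, and the Hamiltonian $\hat F$ --- in which $\zeta^{S,0}(q)$ already carries the Markovian feedback form --- is not the right object to maximize. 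As written, you therefore only prove uniqueness within the Markovian class, whereas the theorem asserts uniqueness in $\mathcal S^N$ and obtains Markovianity as a \emph{conclusion}. The paper closes this by a BSDE argument: it introduces the dynamic best-response value $V_t(\xi^0)$ for an arbitrary symmetric $\xi^0$, shows via a dynamic programming principle as in Lemma \ref{lemme_DPP} that $(V_t(\xi^0)\mathcal K_t^\zeta)_t$ is a supermartingale, extracts from the Doob--Meyer decomposition and martingale representation a BSDE \eqref{bsde:R} with components $(U^S,U^a,U^b)$, computes the pointwise maximizers $\zeta^{S,\star},\zeta^{i,\star}$ as explicit functions of $(Q_t,U_t)$, and only then imposes the Nash condition $\zeta^{\star}=\zeta^{0}$; this forces $\zeta^{0}$ to be a deterministic function of $(Q_t,U_t)$, renders the BSDE Markovian, and identifies its solution with the PDE solution $v$ of \eqref{edp_v^i_hjb}, from which both Markovianity and uniqueness follow. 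You need this step (or an equivalent treatment of non-Markovian opponents' strategies) to complete the proof.
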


\begin{remark}
There exists infinitely many (non symmetric) Nash equilibria. For instance, by the contract $(\xi^{\rm e})_{{\rm e}\leq N}$ defined by
\[\
\xi^{\rm e}
= 
Y_0^j
+\int_0^T \zeta^0_r d\chi_r
               +\frac{\gamma \sigma^2}{N}(N\zeta_r^{S,0}+Q_r)^2dr
               -\frac1N\sum_{i=a,b} H^i(N\zeta_r^{i,0},Q_r) dr,
\]
is a (non symmetric) Nash equilibrium for any $Y_0^j$ satisfying $\sum_{j=1}^N Y_0^j=\hat Y_0$.

\end{remark}

\begin{remark}
The symmetry of the problem allows us to find a natural candidate for the equilibrium and by using a verification procedure (see the first step of the proof) we prove that it is indeed a (symmetric) Nash equilibrium. This follows from the fact that the integro-differential equation under consideration admits a smooth solution. If now one wants to extend this study to an heterogeneous oligopoly of exchanges hiring one market maker, the solution will be strongly linked to a system of fully coupled HJB equation has explained in \cite{mastrolia2018principal}. However, the existence of a smooth solution for such a system is not clear.

\end{remark}

\subsection{Economic insights}

Notice that the total compensation $N\xi^0$ obtained by the market maker in the $N-$symmetric exchanges situation differs from the optimal contract $\hat \xi$ of the one-exchange situation in \eqref{optimal:contract:2ex}. Hence, our result is not trivial and worth of interest even in the simple symmetric exchanges setting in symmetric Nash equilibrium.

Notice that we also have a similar representation of $u$ as in Proposition \ref{prop:Representation}: \begin{eqnarray}\label{eq:u:nash}
 u(t,q)
 &=&
 \E\Big[ e^{\int_t^T (-C_N(Q^{t,q}_s)^2+\overline{\lambda}_s+\underline{\lambda}_s)ds}\Big],
 \end{eqnarray}
where $Q^{t,q}_s=q+\int_t^s d(\overline{N}_u-\underline{N}_u)$, and $(\overline{N},\underline{N})$ is a two-dimensional point process with intensity $(\overline{\lambda}_s, \underline{\lambda}_s) = C'_N (\1_{\{Q_{s-} <\bar q\}}, \1_{\{Q_{s-} >-\bar q\}})$. By using the same arguments than those in Section \ref{discussion} (and based on the asymptotic expansion in \cite{avellaneda2008high,gueant2013dealing}) we note that $N\zeta^{b}_t\sim -\frac{C_N}kQ_t$ and $N\zeta^{a}_t\sim \frac{C_N}kQ_t$. Again, when the inventory is highly positive, the exchanges provide incentives to the market-maker to attract buy market orders and discourage additional sell market orders, and vice versa for a negative inventory. As $C_N$ is decreasing with respect to $N$, this effect is reduced when the number of platforms increases. Consequently, from the market maker point of view, the observed spread is reduced when the number of exchange grows.

Note now that when $N$ becomes large, $N\zeta^{S,0}_r \sim -Q_r$. In other words, for a large number of platforms, the inventory risk is transferred to the oligopoly of exchanges.

  \appendix
\section{Appendix}
\subsection{Predictable representation}

The following result is probably well-known, we report it for completeness as we could not find a precise reference.

\begin{lemma} \label{lemmaRepresentationMart} Let $(\Omega, {\cal F}, \P, \F)$ be a filtered probability space where $\F = \F^W \lor \F^N$ is the right continuous completed filtration of a Brownian motion $W$ and a $d$-dimensional integrable point process $N = (N^1, \cdots, N^d)$ with compensator $A = (A^1, \cdots, A^d)$. Then, for any $\F-$martingale $X$ there exists a predictable process $Z=(Z^W, Z^1, \cdots, Z^d)$ such that
$$ X_t = X_0 + \int_0^t Z^W_s dW_s + \sum_{i = 1}^d \int_0^t Z_s^i (dN_s^i - dA_s^i). $$
\end{lemma}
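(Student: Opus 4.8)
The plan is to deduce the result from the standard martingale representation theorems for Brownian filtrations and for the natural filtration of a point process, patched together via a stopping-time/iteration argument, since here the filtration $\F = \F^W \lor \F^N$ is generated by \emph{both} a Brownian motion and a point process that are not assumed independent. First I would reduce to the case of a square-integrable martingale $X$ by a standard localization and $L^1$-$L^2$ approximation argument: martingale representation is preserved under $L^1$-limits, so it suffices to establish it on a dense subclass, and bounded martingales will do. Next, I would record the two ingredients we may invoke: (a) if the filtration were $\F^W$ alone, every square-integrable martingale is a stochastic integral against $W$; (b) if the filtration were $\F^N$ alone, every square-integrable martingale is a stochastic integral against the compensated jumps $dN^i - dA^i$ (this is the classical point-process representation theorem, e.g.\ via the fact that the predictable $\sigma$-field is generated by the stochastic intervals and the compensators are absolutely continuous in the relevant examples, or more generally Jacod's theorem on multivariate point processes).

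The key step is to show that the stable subspace $\mathcal H$ of $L^2$-martingales spanned by $W$ and the compensated point processes $\widetilde N^i := N^i - A^i$ is in fact \emph{all} of the space of $L^2$ $\F$-martingales vanishing at $0$. I would argue this by orthogonal complementation: suppose $M$ is a square-integrable $\F$-martingale, $M_0=0$, strongly orthogonal to $W$ and to each $\widetilde N^i$ (meaning $MW$ and $M\widetilde N^i$ are martingales). Strong orthogonality to $W$ forces the continuous martingale part of $M$ to vanish (since $\F^W$ carries all the continuous martingale ``mass'' --- here one uses that $W$ generates, together with $N$, the whole filtration, and that any continuous $\F$-martingale orthogonal to $W$ must be constant, which follows because its bracket with $W$ is zero and, jointly, $(W, \text{that martingale})$ would have to be adapted to $\F^W \lor \F^N$ with the second component contributing no new continuous randomness --- made rigorous through the predictable representation on $\F^W$ applied conditionally). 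Strong orthogonality to each $\widetilde N^i$, together with the fact that the jumps of any $\F$-martingale occur (up to evanescence) at the jump times of $N$, forces the purely discontinuous part of $M$ to vanish as well: one writes the purely discontinuous part as a compensated sum over the jump times of $N$, and orthogonality to $\widetilde N^i$ kills each coordinate. Hence $M \equiv 0$, so $\mathcal H$ has trivial orthogonal complement and therefore equals the whole space; writing $X - X_0 \in \mathcal H$ in the form $\int Z^W dW + \sum_i \int Z^i\, d\widetilde N^i$ gives the claim, and measurable selection yields the predictable integrand $Z$.

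The main obstacle is precisely the interaction term: justifying that a continuous $\F$-martingale orthogonal to $W$ is constant, and dually that a purely discontinuous $\F$-martingale orthogonal to all $\widetilde N^i$ vanishes, given that $\F^W$ and $\F^N$ are not a priori independent. The cleanest route is to invoke a general representation theorem for the filtration generated by a Brownian motion and an independent-increments-type point process jointly --- or, since in the applications of this paper $W$ and $N$ \emph{are} built on a product space and the compensators $A^i$ are of the form $\int_0^\cdot \lambda^i_s\,ds$ with $\lambda^i$ predictable and bounded, one can cite Jacod's multivariate point process representation combined with Brownian representation under a reference measure and then transfer by the absolute continuity already established in \eqref{proba_change}. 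I would present the argument at the level of ``$\mathcal H$ is stable and has zero orthogonal complement, hence is everything,'' citing \cite{jacod2013limit} for the two building-block representation theorems and for the description of the jumps of $\F$-martingales, and leave the orthogonal-complement computation as the one explicit calculation, since it is short once the framework is set up.
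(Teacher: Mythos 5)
Your overall strategy (reduce to $L^2$, then show that the stable subspace generated by $W$ and the compensated jumps $\widetilde N^i = N^i - A^i$ has trivial orthogonal complement) is a genuinely different route from the paper's, but the step you defer as ``the one explicit calculation'' is precisely where the entire difficulty of the lemma sits, and your justification for it is circular. To show that a continuous $\F$-martingale $M$ with $\langle M, W\rangle = 0$ is constant, you appeal to ``$\F^W$ carrying all the continuous martingale mass'' and to ``the predictable representation on $\F^W$ applied conditionally''; but $M$ is adapted to $\F = \F^W \lor \F^N$, not to $\F^W$, so the Brownian representation theorem does not apply to it, and the assertion that the point process ``contributes no new continuous randomness'' is exactly the statement to be proved --- it is equivalent to the continuous half of the lemma. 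The same problem recurs for the purely discontinuous part: to conclude that orthogonality to each $\widetilde N^i$ kills $M^d$, you need to know that every $\F$-martingale jumps only at the jump times of $N$ and that its jump there is a predictable function of the mark, which is again a representation statement for the \emph{joint} filtration, not a consequence of the two single-filtration theorems you invoke. Since $\F^W$ and $\F^N$ are not assumed independent, neither claim follows from your building blocks (a) and (b). Your fallback --- work under the reference measure $\P^0$, where $W$ and $N$ are independent, and transfer by the equivalent change of measure \eqref{proba_change} --- is a workable alternative for the specific setting of this paper, but it is only sketched and it does not prove the lemma as stated (arbitrary compensator $A$, arbitrary underlying space).

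The paper closes the gap differently: by Theorem III.4.29 of \cite{jacod2013limit}, the representation property is equivalent to the uniqueness (extremality) of $\P$ as a solution of the martingale problem associated with $W$ and $N-A$, so the whole question reduces to a uniqueness-in-law statement. That uniqueness is then obtained by disintegrating $\P$ over the Brownian path: one checks, using that $N-A$ and $W$ are orthogonal martingales, that $N-A$ remains a martingale under the conditional law given $W$, and then applies the uniqueness theorem for point processes with a given compensator (Theorem III.1.21 of \cite{jacod2013limit}) conditionally on $W$, before integrating back. If you want to keep your Hilbert-space architecture, you must supply an argument of comparable strength for the trivial-orthogonal-complement step; as written, the proposal assumes what it must prove.
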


\begin{proof} For sake of simplicity, we take $d = 1$. Let $\P$ be a solution of the martingale problem associated to $M_t= N_t - A_t$ and $W_t$. By Theorem III.4.29 in \cite{jacod2013limit}, to prove Lemma \ref{lemmaRepresentationMart} we need to establish the uniqueness of $\P$.

We denote by $\P^W$ the law $\P$ conditional on $W$. We first show that $M$ is still a martingale under $\P^W$. To do so we consider 
$B_s \in {\cal F}_s$ and want to prove that 
$$\E^{\P^W}\big[{\1_{B_s}} (M_t - M_s)\big] = 0,
$$
for $0 \leq s \leq t \leq T$. Let $C \in {\cal F}_T^W$. We aim at showing that 
$$\E\Big[1_C\E^{\P^W}\big[{\1_{B_s}} (M_t - M_s)\big] \Big]= \E\big[{\1_C \1_{B_s}} (M_t - M_s)\big]=0.$$
By the martingale representation theorem for Brownian martingales, we can write $ \1_C = \alpha_s + \int_s^T \phi_u dW_u,$
where $\alpha_s = \E[\1_C| \mathcal{F}^W_s]$ and $(\phi_u)_{u \geq 0}$ is $\F^W$ predictable process. Using the martingale property of $M$, we obtain
$$\E\big[\alpha_s \1_{B_s} (M_t - M_s)\big]=0.$$
Then $W$ and $M$ being {orthogonal martingales}, we deduce
$$\E\Big[\int_s^T \phi_u dW_u \1_{B_s} (M_t - M_s) \Big] = 0,  $$
Consequently, using Theorem III.1.21 in \cite{jacod2013limit}, $\P^W$ is the unique probability measure such that $M$ is an $\F$-martingale conditional on $W$. Finally, by integration, the uniqueness of $\P^W$ implies that of $\P$.
\end{proof}

\subsection{Exchange Hamiltonian maximization}

The following result follows from (tedious) direct calculations.

\begin{lemma} \label{lemmaAnnex}
For all $v_1, v_2 < 0$, define 
$$ 
\varphi(z) 
:=
A e^{-k\frac{\Delta(z)+c}{\sigma}} 
\Big[v_1 e^{\eta(z-c)} 
        - v_2 \Big(\frac{\eta}{\gamma}\big(1-e^{-\gamma(z+\Delta(z))}\big) + 1\Big)
\Big],
~~z \in \R,
$$
with $\Delta(z) := (-\delta_\infty) \lor \Delta^0(z)\land \delta_\infty$ and $\Delta^0(z):=-z + \log(1+\frac{\sigma \gamma}{k})^{\frac{1}{\gamma}}$, for some parameter $\delta_\infty$ satisfying
\begin{eqnarray} \label{condition:v1v2} 
\delta_\infty 
\ge C_\infty + \big|\log\Big(\frac{v_2}{v_1}\Big)^{\frac{1}{\eta}}\big|,
&\mbox{with}&
C_\infty 
:=  
c + \log\frac{(1+\frac{\sigma \gamma}{k})^{\frac{1}{\eta}+\frac{1}{\gamma}} }
                   {\big(1 - \frac{\sigma^2 \gamma \eta}{(k + \sigma \gamma)(k + \sigma \eta)} \big)^{\frac{1}{\eta}}}.
\end{eqnarray}
Then, $\varphi$ has a maximum point  $z^\star$ given by:
\begin{eqnarray*}
z^\star 
= 
c + {\frac{1}{\eta}}\log\Big(\frac{\frac{v_2}{v_1}}
                          {1 - \frac{\sigma^2 \gamma \eta}{(k + \sigma \gamma)(k + \sigma \eta)}}
             \Big)
&\mbox{with}&
\varphi(z^\star) = - C v_2 \Big(\frac{v_2}{v_1}\Big)^{\frac{k}{\sigma \eta}} ,
~~
\big|\Delta^0(z^\star)\big|\le\delta_\infty,
\end{eqnarray*}
and $
C 
= 
A \frac{\sigma \eta}{k}
\big(1+\frac{\sigma \gamma}{k}\big)
^{-\frac{k}{\sigma \gamma}} 
\big(1 - \frac{\sigma^2 \gamma \eta}{(k + \sigma \gamma)(k + \sigma \eta)}\big)
   ^{1+\frac{k}{\sigma \eta}}.
$
\end{lemma}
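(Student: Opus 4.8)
\textbf{Proof plan for Lemma \ref{lemmaAnnex}.}

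The function $\varphi$ is a product of the positive factor $A e^{-k(\Delta(z)+c)/\sigma}$ and a bracketed term, and the strategy is to reduce everything to the \emph{interior} regime where $\Delta(z)=\Delta^0(z)=-z+\frac1\gamma\log(1+\frac{\sigma\gamma}{k})$, carry out an unconstrained first-order analysis there, and only afterwards check that the candidate maximizer $z^\star$ does indeed fall in that regime (which is where Condition \eqref{condition:v1v2} enters). First I would substitute $\Delta(z)=\Delta^0(z)$ into $\varphi$; then $\Delta^0(z)+c = -z+c+\frac1\gamma\log(1+\frac{\sigma\gamma}{k})$, so $e^{-k(\Delta^0(z)+c)/\sigma} = (1+\frac{\sigma\gamma}{k})^{-k/(\sigma\gamma)} e^{k(z-c)/\sigma}$, and $e^{-\gamma(z+\Delta^0(z))} = (1+\frac{\sigma\gamma}{k})^{-1}$ is \emph{constant} in $z$. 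Consequently, on the interior regime,
\[
\varphi(z) = A\Big(1+\tfrac{\sigma\gamma}{k}\Big)^{-\frac{k}{\sigma\gamma}} e^{\frac{k}{\sigma}(z-c)}\Big[v_1 e^{\eta(z-c)} - v_2\Big(\tfrac{\eta}{\gamma}\cdot\tfrac{\sigma\gamma}{k+\sigma\gamma}+1\Big)\Big]
 = A\Big(1+\tfrac{\sigma\gamma}{k}\Big)^{-\frac{k}{\sigma\gamma}}\Big[v_1 e^{(\eta+\frac k\sigma)(z-c)} - v_2\,\tfrac{k+\sigma\gamma+\sigma\eta}{k+\sigma\gamma}\, e^{\frac k\sigma(z-c)}\Big].
\]
This is now an elementary one-variable function of $w:=z-c$: a difference of two exponentials with positive coefficients $-v_1>0$ and $v_2\cdot\frac{k+\sigma\gamma+\sigma\eta}{k+\sigma\gamma}>0$ (recall $v_1,v_2<0$), so $\varphi$ is strictly concave in the relevant sense and has a unique interior critical point.

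Next I would set $\varphi'=0$. Writing $a:=-v_1>0$, $b:=-v_2\frac{k+\sigma\gamma+\sigma\eta}{k+\sigma\gamma}>0$, $\mu:=\eta+\frac k\sigma$, $\nu:=\frac k\sigma$, we have (up to the positive prefactor) $\varphi \propto -a e^{\mu w}+b e^{\nu w}$, hence $\varphi' \propto -a\mu e^{\mu w}+b\nu e^{\nu w}=0$, i.e. $e^{(\mu-\nu)w}=e^{\eta w}=\frac{b\nu}{a\mu}$, giving $w^\star=\frac1\eta\log\frac{b\nu}{a\mu}$. Substituting back the values of $a,b,\mu,\nu$ yields
\[
z^\star = c+\frac1\eta\log\Big(\frac{v_2}{v_1}\cdot\frac{k/\sigma}{\eta+k/\sigma}\cdot\frac{k+\sigma\gamma+\sigma\eta}{k+\sigma\gamma}\Big)
= c+\frac1\eta\log\Big(\frac{v_2/v_1}{1-\frac{\sigma^2\gamma\eta}{(k+\sigma\gamma)(k+\sigma\eta)}}\Big),
\]
where the last equality is the (tedious but routine) algebraic identity $\frac{k/\sigma}{\eta+k/\sigma}\cdot\frac{k+\sigma\gamma+\sigma\eta}{k+\sigma\gamma} = \frac{k(k+\sigma\gamma+\sigma\eta)}{(k+\sigma\eta)(k+\sigma\gamma)} = 1-\frac{\sigma^2\gamma\eta}{(k+\sigma\gamma)(k+\sigma\eta)}$, which I would verify by clearing denominators. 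Plugging $z^\star$ into the interior expression for $\varphi$, both exponential terms become explicit powers of $v_2/v_1$ (namely $e^{\nu w^\star}$ has exponent $\frac k\sigma\cdot\frac1\eta=\frac{k}{\sigma\eta}$), and after collecting constants one obtains $\varphi(z^\star)=-Cv_2(v_2/v_1)^{k/(\sigma\eta)}$ with $C$ as stated; again this is a bookkeeping computation of the constant $C = A\frac{\sigma\eta}{k}(1+\frac{\sigma\gamma}{k})^{-k/(\sigma\gamma)}(1-\frac{\sigma^2\gamma\eta}{(k+\sigma\gamma)(k+\sigma\eta)})^{1+k/(\sigma\eta)}$.

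The one genuinely non-mechanical step — and the main obstacle — is \textbf{justifying that $z^\star$ is the global maximizer of the truncated function $\varphi$ on all of $\R$}, not merely the critical point of the interior branch. For this I would argue as follows. First, $|\Delta^0(z^\star)|\le\delta_\infty$: indeed $\Delta^0(z^\star) = -z^\star+\frac1\gamma\log(1+\frac{\sigma\gamma}{k}) = -c+\frac1\gamma\log(1+\frac{\sigma\gamma}{k}) - \frac1\eta\log(v_2/v_1) + \frac1\eta\log(1-\frac{\sigma^2\gamma\eta}{(k+\sigma\gamma)(k+\sigma\eta)})$, and the non-$\log(v_2/v_1)$ part is exactly $-C_\infty$ (compare the definition of $C_\infty$ in \eqref{condition:v1v2}, noting $\frac1\gamma\log(1+\frac{\sigma\gamma}{k})+\frac1\eta\log(1+\frac{\sigma\gamma}{k}) - \frac1\eta\log(1-\frac{\sigma^2\gamma\eta}{(k+\sigma\gamma)(k+\sigma\eta)}) = \log\frac{(1+\sigma\gamma/k)^{1/\eta+1/\gamma}}{(1-\ldots)^{1/\eta}}$), so $|\Delta^0(z^\star)| \le C_\infty + \frac1\eta|\log(v_2/v_1)| \le \delta_\infty$ by \eqref{condition:v1v2}; hence at $z^\star$ the truncation is inactive and $\varphi(z^\star)$ equals the interior value computed above. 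Second, on the three regions $\{\Delta^0(z)\ge\delta_\infty\}$, $\{|\Delta^0(z)|\le\delta_\infty\}$, $\{\Delta^0(z)\le-\delta_\infty\}$ the function $\varphi$ is, respectively, of the form $\mathrm{const}\cdot e^{k(-\delta_\infty-c... )}$-type monotone, the strictly concave-in-$w$ interior branch treated above (whose unique critical point $z^\star$ lies inside this middle region), and again a monotone exponential-type branch; since $\Delta^0$ is affine and decreasing in $z$, these three regions are consecutive intervals $(-\infty,z_1]$, $[z_1,z_2]$, $[z_2,\infty)$, and one checks that $\varphi$ is increasing on the left piece up to $z_1$, increases then decreases on the middle piece with interior max at $z^\star\in(z_1,z_2)$, and is decreasing on the right piece from $z_2$ on — so $\varphi$ is continuous, unimodal, and $z^\star$ is the global maximizer. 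I would make the monotonicity on the two outer pieces explicit by noting that there $\Delta(z)$ is frozen at $\mp\delta_\infty$, so $\varphi$ reduces on each to $e^{\pm k\delta_\infty/\sigma}\cdot\mathrm{const}\cdot\big[v_1e^{\eta(z-c)} - v_2(\frac\eta\gamma(1-e^{-\gamma(z\mp\delta_\infty)})+1)\big]$, whose sign of derivative is constant on the half-line in question (again because $-v_1,-v_2>0$) and matches the sign of $\varphi'$ at the adjacent endpoint of the middle interval by continuity of $\varphi'$. That continuity-and-unimodality bookkeeping is the only place where care is needed; everything else is the (tedious) direct computation the lemma's preamble advertises.
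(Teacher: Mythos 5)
The paper offers no proof of this lemma beyond the phrase ``tedious direct calculations'', so your strategy --- reduce to the interior branch where $\Delta=\Delta^0$, solve the first-order condition there, then use \eqref{condition:v1v2} to check that the truncation is inactive at $z^\star$ and to control the outer branches --- is the natural (and essentially the only) route. Your interior computation is correct up to its last displayed equality, where there is an algebra slip that in fact exposes a typo in the lemma's display: your own identity gives $e^{\eta(z^\star-c)}=\frac{v_2}{v_1}\rho$ with $\rho:=1-\frac{\sigma^2\gamma\eta}{(k+\sigma\gamma)(k+\sigma\eta)}$, i.e.\ the factor \emph{multiplies} $v_2/v_1$ inside the logarithm, yet you then rewrite it as dividing; these are not equal. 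The multiplicative form is the correct one: it matches $\zeta_0=c+\frac1\eta\log(1-\cdots)$ in \eqref{opt-z}, and it is the only choice for which the bracket at $z^\star$ collapses to $-v_2\rho\frac{\sigma\eta}{k}$ and hence yields the stated $\varphi(z^\star)=-Cv_2(v_2/v_1)^{k/(\sigma\eta)}$ with exponent $1+\frac{k}{\sigma\eta}$ on $\rho$ in $C$. Keep your first expression and discard the second. Relatedly, your claim that the non-$\log(v_2/v_1)$ part of $\Delta^0(z^\star)$ ``is exactly $-C_\infty$'' is false (it equals $-c+\frac1\gamma\log(1+\frac{\sigma\gamma}{k})-\frac1\eta\log\rho$); the bound $|\Delta^0(z^\star)|\le C_\infty+\frac1\eta|\log(v_2/v_1)|$ still holds, but only after bounding the absolute value of that quantity by $C_\infty$ using $c>0$ and $\rho<1$.

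The genuine gap is in the global-maximality step. On the outer branches the sign of $\varphi'$ is \emph{not} constant: there $\varphi'(z)\propto v_1 e^{\eta(z-c)}-v_2 e^{-\gamma(z\mp\delta_\infty)}$ is the sum of a negative decreasing term and a positive decreasing term, so it changes sign exactly once on $\R$ (from $+$ to $-$); positivity of $-v_1$ and $-v_2$ gives you nothing here, and $\varphi'$ is generally discontinuous at the kinks $z_1,z_2$ of $\Delta$, so the ``by continuity of $\varphi'$'' matching also fails. What rescues the argument is that the unique zero of this expression, $z=\frac{1}{\eta+\gamma}\big(\log\frac{v_2}{v_1}+\eta c\mp\gamma\delta_\infty\big)$, lies strictly to the right of $z_1$ (resp.\ strictly to the left of $z_2$) precisely because $\delta_\infty$ is large enough by \eqref{condition:v1v2}; hence $\varphi'>0$ on $(-\infty,z_1)$ and $\varphi'<0$ on $(z_2,\infty)$ after all. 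This is a second, independent use of the hypothesis on $\delta_\infty$ that your plan omits; with it, continuity of $\varphi$ and unimodality on each of the three consecutive intervals do yield that $z^\star$ is the global maximizer.
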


\subsection{Justification of \eqref{calcul:u} and \eqref{calcul:u:N}}\label{Appendix-calcul u}

Denote by $\mathbf D$ and $\mathbf J$ the matrices defined by the entries $\mathbf D^{q,r}= q^2\mathbf 1_{q=r}$ and $\mathbf J^{q,r}= \1_{r=q+1}+\1_{r=q+1}$, $-\overline q\leq p,r\leq \overline q$. Notice that the calculations reported in \eqref{calcul:u} and \eqref{calcul:u:N} reduce to:
 \begin{eqnarray*}
 U(q)
 :=
 \mathbf{b}_q
 \cdot
 \sum_{|\ell|\le\bar q} e^{\alpha\mathbf{J} -\beta \mathbf D}\,\mathbf{b}_\ell,
 ~|q|\le\bar q,
 &\mbox{for}&
 -\overline q\leq q\leq \overline q,
 \end{eqnarray*}
We first compute that
 $$
 e^{\alpha\mathbf{J} -\beta \mathbf D}
 =
 \sum_{k\geq0}\frac{(\alpha \mathbf J-\beta \mathbf D)^k}{k!}
 =
 \sum_{k\geq0} \frac{1}{k!}\sum_{j=0}^k \binom kj \alpha^j (-\beta)^{k-j}\ell^{2(k-j)}\mathbf J^j 
 $$
As $\mathbf J^j\cdot\mathbf{b}_\ell=\sum_{p=0}^j  \binom jp \mathbf b_{\ell-j+2p}$, and $\mathbf{b}_q\cdot\mathbf{b}_{q'}=\1_{\{q=q'\}}$, this provides
  \begin{eqnarray*}
 \mathbf{b}_q \cdot e^{\alpha\mathbf{J} -\beta \mathbf D}\mathbf{b}_\ell
 &=&
 \sum_{k\geq0} \frac{1}{k!}\sum_{j=0}^k \binom kj \alpha^j (-\beta)^{k-j}\ell^{2(k-j)}\sum_{p=0}^j  \binom jp \1_{\{\ell-j+2p=q\}}
 \\
&=&
\sum_{k\geq0}\sum_{j=0}^k \sum_{p=0}^j \frac{ \alpha^j (-\beta \ell^2)^{k-j}}{p!(k-j)!(j-p)!} \1_{\{\ell-j+2p=q\}}
 \\
 &=&
\sum_{p\ge 0}\sum_{j\ge p} \frac{ \alpha^j}{p!(j-p)!} \;e^{-\beta\ell^2}\1_{\{\ell=q+j-2p\}}
 \;=\;
 \sum_{p\ge 0}\sum_{j\ge 0} \frac{ \alpha^{j+p}}{p! j!} \;e^{-\beta\ell^2}\1_{\{\ell=q+j-p\}},
  \end{eqnarray*}
and we finally conclude that
 \begin{eqnarray*}
 U(q)
 \;=\;
 \sum_{p\ge 0} \sum_{j\ge 0} \frac{ \alpha^{j+p}}{p! j!} \;e^{-\beta(q+j-p)^2}
                                              \1_{\{|q+j-p|\le\bar q\}},
 &\mbox{for}&
 |q|\le\bar q.
 \end{eqnarray*}

\subsection{Dynamic programming principle and representation}
\label{sect:DPP}

\noindent For all $\mathbb F$-stopping time $\tau$ with values in $[t,T]$ and for any $\mu\in \mathcal A_\tau$, we define\footnote{From \eqref{proba_change}, notice that for any $\delta \in \Ac$, the conditional expectation $\E^\delta_\tau$ depends only on the restriction of $\delta$ on $[\tau, T]$. Hence $\E^\mu_\tau$ is defined without ambiguity for $\mu \in \Ac_\tau$.}
 \begin{eqnarray*}
 J_T(\tau,\mu)=\mathbb E_\tau^{\mu}\left[-e^{-\gamma\int_\tau^{T}(\mu^a_u dN^a_u + \mu^b_u dN^b_u + Q_u dS_u)}  
  e^{-\gamma\xi} \right],
 &\mbox{and}&
 \mathcal J_{\tau,T}=\left( J_T(\tau,\mu)\right)_{\mu\in \mathcal A_\tau},
 \end{eqnarray*}
 where $\mathcal A_{\tau}$ denotes the restriction of $\mathcal A$ to controls on $[\tau,T]$.
The continuation utility of the market maker is defined for all $\mathbb F$-stopping time $\tau$ by
 \begin{eqnarray*}
 V_\tau
 &=&
 \underset{\mu \in {\cal A_\tau}}{\textnormal{ess} \sup}\, J_T(\tau,\mu).
\end{eqnarray*}

\begin{lemma}\label{familleJ}
Let $\tau$ be an $\mathbb F$-stopping time with values in $[t,T]$. Then, there exists a non-decreasing sequence $(\mu^n)_{n\in \mathbb N}$ in $\mathcal A_\tau$ such that $V_\tau=\underset{n\rightarrow +\infty}{\lim}\!\!\!\!\uparrow J_T(\tau,\mu^n)$.
\end{lemma}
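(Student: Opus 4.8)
The plan is to show that the family $\mathcal J_{\tau,T} = \{J_T(\tau,\mu) : \mu\in\mathcal A_\tau\}$ of $\mathcal F_\tau$-measurable random variables is \emph{upward directed}, i.e. for any $\mu_1,\mu_2\in\mathcal A_\tau$ there exists $\mu\in\mathcal A_\tau$ with $J_T(\tau,\mu)\ge J_T(\tau,\mu_1)\lor J_T(\tau,\mu_2)$ almost surely. Once this is established, it is a classical fact about essential suprema of upward-directed families (see e.g. Neveu, or the appendix of \cite{cvitanic2015dynamic}) that the essential supremum is attained as the increasing limit of a countable subfamily: one extracts a sequence $(\nu^n)$ with $J_T(\tau,\nu^n)\uparrow V_\tau$ along a subsequence realizing the essential supremum, and then replaces $\nu^n$ by $\mu^n$ built recursively via the directedness property so that $J_T(\tau,\mu^n)\ge J_T(\tau,\mu^{n-1})\lor J_T(\tau,\nu^n)$, yielding a non-decreasing sequence with $J_T(\tau,\mu^n)\uparrow V_\tau$.

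The directedness is proved by the usual \textbf{pasting argument}. Given $\mu_1,\mu_2\in\mathcal A_\tau$, set $B:=\{J_T(\tau,\mu_1)\ge J_T(\tau,\mu_2)\}\in\mathcal F_\tau$ and define $\mu := \mu_1\indi_B + \mu_2\indi_{B^c}$ on $[\tau,T]$; since $B\in\mathcal F_\tau$ and both $\mu_1,\mu_2$ are predictable and uniformly bounded by $\delta_\infty$, the process $\mu$ is again in $\mathcal A_\tau$. The key point is that $J_T(\tau,\cdot)$ localizes on $\mathcal F_\tau$-measurable sets: from the explicit form of the density $L^\delta$ in \eqref{proba_change}, which only involves $\delta$ through its restriction to $[\tau,T]$ via stochastic integrals against $dN^i$ and Lebesgue measure, one checks that for any $B\in\mathcal F_\tau$,
\[
J_T(\tau,\mu_1\indi_B+\mu_2\indi_{B^c}) = J_T(\tau,\mu_1)\indi_B + J_T(\tau,\mu_2)\indi_{B^c}\quad\text{a.s.}
\]
This is because, conditionally on $\mathcal F_\tau$, the law of $(N^a,N^b,S)$ on $[\tau,T]$ under $\P^{\mu_1\indi_B+\mu_2\indi_{B^c}}$ agrees with that under $\P^{\mu_1}$ on $B$ and with that under $\P^{\mu_2}$ on $B^c$, and the integrand defining $J_T(\tau,\mu)$ depends only on the path of $(N^a,N^b,S)$ on $[\tau,T]$ together with $\xi$. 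Hence $J_T(\tau,\mu) = J_T(\tau,\mu_1)\lor J_T(\tau,\mu_2)\ge J_T(\tau,\mu_i)$ a.s., which is exactly directedness.

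The main obstacle is the careful justification of the localization identity above, i.e. that both the measure change $\P^\mu$ and the conditional expectation $\E^\mu_\tau$ genuinely depend only on $\mu|_{[\tau,T]}$ and ``split'' along the partition $\{B,B^c\}$. This requires writing $L^\mu_T/L^\mu_\tau$ explicitly as a functional of $\mu|_{[\tau,T]}$ (using that the integrals in \eqref{proba_change} over $[0,\tau]$ cancel in the ratio) and invoking the uniqueness in law / predictable representation structure already set up in the paper, together with the integrability \eqref{condition:xi} on $\xi$ guaranteeing all the conditional expectations are well-defined and finite. The remaining extraction step from directedness to a non-decreasing maximizing sequence is standard and I would only sketch it, citing the measure-theoretic lemma on essential suprema of directed families.
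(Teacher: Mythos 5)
Your proposal is correct and follows essentially the same route as the paper: the paper also establishes that the family $\mathcal J_{\tau,T}$ is upward directed via exactly this pasting construction $\hat\mu=\mu_1\1_{\{J_T(\tau,\mu_1)\ge J_T(\tau,\mu_2)\}}+\mu_2\1_{\{J_T(\tau,\mu_1)<J_T(\tau,\mu_2)\}}$ and then invokes Neveu's result on essential suprema of directed families. The only difference is that you spell out the localization identity $J_T(\tau,\hat\mu)=J_T(\tau,\mu_1)\1_B+J_T(\tau,\mu_2)\1_{B^c}$, which the paper treats as immediate ``by definition of $\hat\mu$''.
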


\begin{proof}
For $\mu$ and $\mu'$ in $\mathcal A_\tau$, define
$\hat{\mu}=\mu  \1_{\{J_{T}(\tau,\mu)\geq J_{T}(\tau,\mu') \}} +\mu'  \1_{\{J_{T}(\tau,\mu)< J_{T}(\tau,\mu')\}}.$
Then $\hat{\mu}\in \mathcal A_{\tau}$ and by definition of $\hat{\mu}$, we have $J_T(\tau,\hat{\mu})\geq \max\left( J_T(\tau,\mu),J_T(\tau,\mu')\right).$ This shows that $\mathcal J_{\tau,T}$ is directly upwards, and the required result folows from \cite[Proposition VI.I.I p121]{neveu1972martingales}.
\end{proof}

\begin{lemma}\label{lemme_DPP}
 Let $t\in[0,T]$ and $\tau$ {be} an $\mathbb F$-stopping time  with values in $[t,T]$. Then,
 \begin{eqnarray*}
 V_t 
 &=& 
 \underset{\delta \in {\cal A}}{\mbox{\rm ess} \sup} \,
 \E^{\delta}_t\Big[  e^{-\gamma \int_t^{\tau}(\delta_u \cdot dN_u + Q_u dS_u) } V_\tau
                     \Big].
 \end{eqnarray*}
\end{lemma}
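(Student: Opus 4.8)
The plan is to prove the two inequalities separately, following the classical dynamic programming argument adapted to the weak formulation used here. Throughout, recall that for $\delta\in\mathcal A$ the probability $\mathbb P^\delta$ is defined via the Dol\'eans-Dade exponential \eqref{proba_change}, and that by the footnote before the statement $\mathbb E^\delta_\tau$ depends only on the restriction of $\delta$ to $[\tau,T]$, so that concatenation of controls is well defined. For $\delta\in\mathcal A$ write $G^\delta_{s,t}:=e^{-\gamma\int_s^t(\delta_u\cdot dN_u+Q_udS_u)}$, so that $G^\delta_{t,T}=G^\delta_{t,\tau}\,G^\delta_{\tau,T}$ and $G^\delta_{t,\tau}$ is $\mathcal F_\tau$-measurable.

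\textbf{Step 1: the inequality ``$\le$''.} Fix $\delta\in\mathcal A$. Using the tower property with respect to $\mathcal F_\tau$ under $\mathbb P^\delta$, and the fact that $G^\delta_{t,\tau}$ is $\mathcal F_\tau$-measurable, I would write
\[
J_T(t,\delta)
=\mathbb E^\delta_t\big[ G^\delta_{t,\tau}\,\mathbb E^\delta_\tau[\, G^\delta_{\tau,T}\,(-e^{-\gamma\xi})\,]\big]
=\mathbb E^\delta_t\big[ G^\delta_{t,\tau}\,J_T(\tau,\delta)\big]
\le \mathbb E^\delta_t\big[ G^\delta_{t,\tau}\,V_\tau\big],
\]
where in the last step one uses $G^\delta_{t,\tau}\le 0$ together with $J_T(\tau,\delta)\le V_\tau$ $\mathbb P^\delta$-a.s. (which holds since $\delta$ restricted to $[\tau,T]$ lies in $\mathcal A_\tau$ and all $\mathbb P^\delta$ are equivalent, so the essential supremum defining $V_\tau$ is the same under every $\mathbb P^\delta$). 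Here one must be a little careful with signs because the integrands are negative: multiplying the a.s.\ inequality $J_T(\tau,\delta)\le V_\tau$ by the nonpositive $\mathcal F_\tau$-measurable factor $G^\delta_{t,\tau}$ reverses it in the pointwise sense, but taking $\mathbb E^\delta_t$ of $G^\delta_{t,\tau}J_T(\tau,\delta)\ge G^\delta_{t,\tau}V_\tau$ gives $J_T(t,\delta)\ge\mathbb E^\delta_t[G^\delta_{t,\tau}V_\tau]$; so in fact I would keep track of the correct direction and conclude $V_t=\operatorname*{ess\,sup}_\delta J_T(t,\delta)$ is bounded by $\operatorname*{ess\,sup}_\delta \mathbb E^\delta_t[G^\delta_{t,\tau}V_\tau]$ from the chain of equalities, the only inequality being the replacement of $J_T(\tau,\delta)$ by $V_\tau$ with the right sign. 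Taking the essential supremum over $\delta\in\mathcal A$ yields $V_t\le \operatorname*{ess\,sup}_{\delta}\mathbb E^\delta_t[G^\delta_{t,\tau}V_\tau]$.

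\textbf{Step 2: the inequality ``$\ge$''.} For the reverse inequality I would fix $\delta\in\mathcal A$ and use Lemma \ref{familleJ}: there is a non-decreasing sequence $(\mu^n)_{n}\subset\mathcal A_\tau$ with $V_\tau=\lim_n\uparrow J_T(\tau,\mu^n)$ $\mathbb P^\delta$-a.s. For each $n$, define the concatenated control $\delta\otimes_\tau\mu^n:=\delta\,\mathbf 1_{[0,\tau)}+\mu^n\,\mathbf 1_{[\tau,T]}\in\mathcal A$; by the tower property and the fact that $\mathbb E^{\delta\otimes_\tau\mu^n}_\tau$ depends only on $\mu^n$ while $G^{\delta\otimes_\tau\mu^n}_{t,\tau}=G^\delta_{t,\tau}$, one gets
\[
V_t\ge J_T(t,\delta\otimes_\tau\mu^n)
=\mathbb E^\delta_t\big[ G^\delta_{t,\tau}\, J_T(\tau,\mu^n)\big].
\]
Now let $n\to\infty$. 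Since $G^\delta_{t,\tau}\le0$ and $J_T(\tau,\mu^n)\downarrow$ in absolute value, i.e.\ $J_T(\tau,\mu^n)\uparrow V_\tau\le0$, the sequence $G^\delta_{t,\tau}J_T(\tau,\mu^n)$ is non-increasing and dominated by the integrable random variable $G^\delta_{t,\tau}J_T(\tau,\mu^1)$; monotone (or dominated) convergence under $\mathbb P^\delta$ then gives $V_t\ge \mathbb E^\delta_t[G^\delta_{t,\tau}V_\tau]$. Taking the essential supremum over $\delta\in\mathcal A$ finishes this direction. Combining Steps 1 and 2 gives the claimed identity.

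\textbf{Main obstacle.} The routine computations (It\^o, the density formula) are not the issue here; the delicate points are (a) the bookkeeping of signs, since all the relevant random variables are nonpositive, so every inequality must be checked to point in the intended direction after multiplication by the nonpositive discount factor; and (b) the measurability/integration-theory subtleties of the essential supremum in the weak formulation — in particular, the fact that $V_\tau$ does not depend on which $\mathbb P^\delta$ is used (using equivalence of all $\mathbb P^\delta$), the existence of the maximizing sequence from Lemma \ref{familleJ}, the admissibility of the concatenated controls $\delta\otimes_\tau\mu^n$, and the justification of passing the limit inside $\mathbb E^\delta_t$. Verifying that the concatenation stays in $\mathcal A$ (predictability, the uniform bound $\delta_\infty$, and that $L^{\delta\otimes_\tau\mu^n}$ remains a genuine martingale) is where I would spend the most care.
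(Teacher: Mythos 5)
Your overall strategy is exactly the paper's: the ``$\le$'' direction by the tower property plus the observation that the inner conditional expectation depends only on $\delta|_{[\tau,T]}\in\mathcal A_\tau$ and is therefore dominated by $V_\tau$, and the ``$\ge$'' direction by concatenating $\delta\otimes_\tau\mu^n$ with the maximizing sequence from Lemma \ref{familleJ} and passing to the limit by monotone convergence. The structure is sound and all the ingredients you list (equivalence of the measures $\P^\delta$, admissibility of the concatenation, the Bayes-formula identities $L^{\delta\otimes_\tau\mu}_T/L^{\delta\otimes_\tau\mu}_\tau=L^\mu_T/L^\mu_\tau$ and $L^{\delta\otimes_\tau\mu}_\tau/L^{\delta\otimes_\tau\mu}_t=L^\delta_\tau/L^\delta_t$) are the ones the paper uses.

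However, your sign bookkeeping --- the very point you single out as the main obstacle --- is wrong. The discount factor $G^\delta_{t,\tau}=e^{-\gamma\int_t^\tau(\delta_u\cdot dN_u+Q_udS_u)}$ is an exponential of a real-valued random variable, hence \emph{strictly positive}, not nonpositive as you assert repeatedly. (It is the quantities $J_T(\tau,\mu)$ and $V_\tau$ that are negative, not $G$.) Consequently, multiplying $J_T(\tau,\delta|_{[\tau,T]})\le V_\tau$ by $G^\delta_{t,\tau}>0$ \emph{preserves} the inequality, and Step 1 gives $J_T(t,\delta)\le\E^\delta_t[G^\delta_{t,\tau}V_\tau]$ directly; the passage in which you derive the reversed inequality $J_T(t,\delta)\ge\E^\delta_t[G^\delta_{t,\tau}V_\tau]$ and then declare that you would ``keep track of the correct direction'' is a genuine self-contradiction that must be removed, not glossed over. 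Likewise in Step 2 the sequence $G^\delta_{t,\tau}J_T(\tau,\mu^n)$ is non-\emph{decreasing} (positive factor times a non-decreasing sequence), bounded above by $0$, and monotone convergence applies after subtracting the integrable first term --- exactly as in the paper. Once the sign of $G$ is corrected, both steps close and your proof coincides with the paper's.
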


\begin{proof}  We follow the same argument as in \cite[Proof of Proposition 6.2]{cvitanic1993hedging}. Denote $\tilde V_t$ the right hand side of the required equality. First, by the tower property, 
 \begin{eqnarray*}
 V_t  
 &=&
 \underset{\delta \in {\cal A}}{\text{ess} \sup} \,\E^{\delta}_t\Big[  e^{-\gamma \int_t^{\tau}(\delta_u\cdot dN_u + Q_u dS_u)} \mathbb E^\delta_\tau\left[ -e^{-\gamma\left( \int_{\tau}^T(\delta_u\cdot dN_u + Q_u dS_u) +\xi\right)}\right]\Big].
\end{eqnarray*}
For all $\delta\in \mathcal A$, the quotient $\frac{L^\delta_T}{L^\delta_\tau}$ does not depend on the values of $\delta$ before time $\tau$. Then,
 \begin{eqnarray*}
 \mathbb E^\delta_\tau\left[- e^{-\gamma\left( \int_{\tau}^T(\delta_u\cdot dN_u + Q_u dS_u) +\xi\right)}\right]
 &\!\!=&\!\!
 \mathbb E^0_\tau\left[- \frac{L^\delta_T}{L^\delta_\tau}e^{-\gamma\left( \int_{\tau}^T(\delta_u\cdot dN_u + Q_u dS_u) +\xi\right)}\right]
 \\
 &\!\!\le&\!\!
 \underset{\mu \in {\cal A}_\tau}{\text{ess} \sup}\,  \mathbb E^\mu_\tau\left[-e^{-\gamma\left( \int_{\tau}^T(\mu_u\cdot dN_u + Q_u dS_u) +\xi\right)}\right]
 =
 V_\tau,
\end{eqnarray*}
which implies that $V_t\le\tilde V_t$.

We next prove the reverse inequality. Let $\delta\in \mathcal A$ and $\mu\in \mathcal A_\tau$. We define $(\delta\otimes_\tau \mu)_u= \delta_u\mathbf 1_{0\leq u<\tau}+\mu_u \mathbf 1_{\tau\leq u\leq T}$. Then $\delta\otimes_\tau \mu\in \mathcal A$ and
 \begin{eqnarray}
 \nonumber 
 V_t
 &\ge& 
 \mathbb E^{\delta\otimes_\tau \mu}_t  \left[ -e^{-\gamma\big( \int_t^{\tau}(\delta_u\cdot dN_u + Q_u dS_u)+\int_{\tau}^T(\mu_u\cdot dN_u + Q_u dS_u) \big)} e^{-\gamma\xi} \right]
 \\
 \label{ppd:ineg1}
 &=&
 \mathbb E^{\delta\otimes_\tau \mu}_t  \left[ e^{-\gamma\int_t^{\tau}(\delta_u\cdot dN_u + Q_u dS_u)} \mathbb E^{\delta\otimes_\tau\mu}_\tau\Big[-e^{-\gamma\int_{\tau}^T(\mu_u\cdot dN_u + Q_u dS_u) } e^{-\gamma\xi} \Big]\right].
 \end{eqnarray}
From Bayes' Formula and by noticing that $\frac{L_T^{\delta\otimes_\tau\mu}}{L^{\delta\otimes_\tau\mu}_\tau}=\frac{L_T^{\mu}}{L^{\mu}_\tau}$, we get
 $$
 \mathbb E^{\delta\otimes_\tau\mu}_\tau\left[-e^{-\gamma \int_{\tau}^T(\mu_u\cdot dN_u + Q_u dS_u)} e^{-\gamma\xi} \right]
 =
 \mathbb E^0_\tau\left[\frac{L_T^{\delta\otimes_\tau\mu}}{L^{\delta\otimes_\tau\mu}_\tau}\left(-e^{-\gamma \int_{\tau}^T(\mu_u\cdot dN_u + Q_u dS_u)} e^{-\gamma\xi} \right)\right]
 =
 J_T(\tau,\mu).
$$
Thus, Inequality \eqref{ppd:ineg1} becomes $V_t
 \ge
 \mathbb E^{\delta\otimes_\tau \mu}_t  \left[ e^{-\gamma \int_t^{\tau}(\delta_u\cdot dN_u + Q_u dS_u)} J_T(\tau,\mu)\right]$, and by using again Bayes' Formula and by noticing that $\frac{L_\tau^{\delta\otimes_\tau \mu}}{L_t^{\delta\otimes_\tau \mu}}=\frac{L_\tau^{\delta}}{L_t^{\delta}}$, we have
 \begin{eqnarray*}
 V_t
 &\ge&
 \frac{\mathbb E^0_t  \left[ L_T^{\delta\otimes_\tau \mu}e^{-\gamma \int_t^{\tau}(\delta_u\cdot dN_u + Q_u dS_u)} J_T(\tau,\mu)\right]}{ L_t^{\delta\otimes_\tau \mu}}
 \\
 &=&
 \mathbb E^0_t  \left[ \mathbb E^0_\tau\Big[\dfrac{L_T^{\delta\otimes_\tau \mu}}{L^{\delta\otimes_\tau \mu}_\tau}\dfrac{L_\tau^{\delta\otimes_\tau \mu}}{L_t^{\delta\otimes_\tau \mu}}e^{-\gamma \int_t^{\tau}(\delta_u\cdot dN_u + Q_u dS_u)} J_T(\tau,\mu)\Big] \right]
 \\
 &=&
 \mathbb E^0_t  \left[ \mathbb E^0_\tau\Big[\dfrac{L_T^{\delta\otimes_\tau \mu}}{L^{\delta\otimes_\tau \mu}_\tau}\Big]\dfrac{L_\tau^{\delta\otimes_\tau \mu}}{L_t^{\delta\otimes_\tau \mu}}e^{-\gamma \int_t^{\tau}(\delta_u\cdot dN_u + Q_u dS_u)} J_T(\tau,\mu)\right]
 \\
 &=&
 \mathbb E^0_t  \left[\dfrac{L_\tau^{\delta\otimes_\tau \mu}}{L_t^{\delta\otimes_\tau \mu}}e^{-\gamma \int_t^{\tau}(\delta_u\cdot dN_u + Q_u dS_u)} J_T(\tau,\mu)\right]
 \;=\;
 \mathbb E_t^\delta  \left[e^{-\gamma \int_t^{\tau}(\delta_u\cdot dN_u + Q_u dS_u)} J_T(\tau,\mu)\right].
\end{eqnarray*}
Since the previous inequality holds for all $\mu\in \mathcal A_{\tau}$ we deduce from monotone convergence Theorem together with Lemma \ref{familleJ} that there exists a sequence $(\mu^n)_{n\in \mathbb N}$ in $\mathcal A_\tau$ such that 
 \begin{eqnarray*} 
 V_t
 &\ge&
 \underset{n\rightarrow +\infty}{\lim}\!\!\!\!\uparrow  \mathbb E_t^\delta  \left[e^{-\gamma\int_t^{\tau}(\delta_u\cdot dN_u + Q_u dS_u)} J_T(\tau,\mu^n)\right]
 \\
 &=&  \mathbb E_t^\delta  \left[e^{-\gamma \int_t^{\tau}(\delta_u\cdot dN_u + Q_u dS_u)} \underset{n\rightarrow +\infty}{\lim}\!\!\!\!\uparrow  J_T(\tau,\mu^n)\right]
  \;=\; \mathbb E_t^\delta  \left[e^{-\gamma \int_t^{\tau}(\delta_u\cdot dN_u + Q_u dS_u)}V_\tau\right],
\end{eqnarray*}
thus concluding the proof.
\end{proof}

\vspace{5mm}

\noindent {\bf Proof of Theorem \ref{thm:Agent} (i)} We proceed in several steps.
\\
{\it Step 1.} For $\delta \in \Ac$, it follows from the dynamic programming principle of Lemma \ref{lemme_DPP} that the process ${U}^\delta_t :=
 V_t \,e^{-\gamma \int_0^t(\delta^a_u dN^a_u + \delta^b_u dN^b_u + Q_u dS_u)}$, $t\in[0,T]$,
defines a $\P^{\delta}$-supermartingale\footnote{{Note that $\E^\delta[U^\delta_T] = J_T(0, \delta) > - \infty$ using H\"older inequality together with \eqref{condition:xi} and the uniform boundedness of the intensities of $N^a$ and $N^b$. Hence the process $U^\delta$ is integrable}.} for all $\delta\in\cal A$. By standard analysis\footnote{In view of the class of contracts considered we know that the principal proposes a contract such that there exists at least one optimal bid-ask spread for the agent denoted by $\tilde \delta$. Hence, $U_t^{\tilde \delta}$ is a $\mathbb P^{\tilde \delta}$-martingale and according to Doob regularization result, we know that we can find a c\`adl\`ag version of $U_t^{\tilde \delta}$ under $\mathbb P^{\tilde \delta}$. Thus $V_t$ admits a c\`adl\`ag version under $\mathbb P^{\tilde \delta}$, and since all the measure $\mathbb P^\delta$ for $\delta\in \mathcal A$ are equivalent, $U_t^\delta$ admits a c\`adl\`ag version.}, we may then consider it in its c\`adl\`ag version (by taking right limits along rationals). By the Doob-Meyer decomposition, we write
 \begin{eqnarray}\label{DoobMeyer}  
 U^\delta_t 
 &=&  M^\delta_t - A^{\delta,c}_t - A^{\delta,d}_t, 
\end{eqnarray}
where $ M^\delta$ is a $\P^{\delta}$-martingale and $A^\delta=A^{\delta,c} + A^{\delta,d}$ is an integrable non-decreasing predictable process such that $A^{\delta,c}_0=A^{\delta,d}_0 = 0$, with pathwise continuous component $A^{\delta,c}$, and a piecewise constant predictable process $A^{\delta, d}$.
By the martingale representation theorem under $\P^\delta$, see Lemma \ref{lemmaRepresentationMart}, we have
 \begin{eqnarray}\label{tildeMdelta}  
 M^\delta_t 
 &=& 
 V_0 
 + \int_0^t  \widetilde{Z}^{\delta}_r.d\chi_r 
 -\int_0^t\widetilde{Z}^{\delta,a}_r \lambda(\delta^a_r)\1_{\{Q_r >-\bar q\}} dr  
 -\int_0^t\widetilde{Z}^{\delta,b}_r \lambda(\delta^b_r)\1_{\{Q_r <\bar q\}} dr, 
 \end{eqnarray}
predictable process $\widetilde{Z}^\delta = (\widetilde{Z}^{\delta,S}, \widetilde{Z}^{\delta,a}, \widetilde{Z}^{\delta,b})$, where we recall that $\chi = (S, N^a, N^b)$. 
\\

\noindent {\it Step 2.} We show that $V$ is a negative process. In fact, thanks to the uniform boundedness of $\delta \in \Ac$, we show that
\begin{equation} \label{ineq_alpha}
 \frac{L^\delta_T}{L^\delta_t} 
 \ge
 \alpha_{t,T} 
 = e^{-\frac{k \delta_\infty}{\sigma} (N^a_T - N^a_t+ N^b_T - N^b_t) -  2Ae^{-\frac{kc}{\sigma}}(e^{\frac{k \delta_\infty}{ \sigma}} + 1) (T-t)}
 >0,
\end{equation}
which implies that $V_t 
\le 
\E^0\left[ - \alpha_{t,T} e^{- \gamma \big(\delta_\infty (N^a_T - N^a_t + N^b_T - N^b_t) + \int_t^T Q_u dS_u\big) } e^{- \gamma \xi} \right]
< 0$.\\

\noindent {\it Step 3.} Let $Y$ be the process defined by $V_t =- e^{-\gamma Y_t}$ for all $t\in[0,T]$. As $A^{\delta,d}$ is a predictable point process and the jumps of $(N^a, N^b)$ are totally inaccessible stopping times under $\P^0$, we have $[N^a, A^{\delta,d}] = 0$ and $[N^b, A^{\delta,d}] =0$ a.s, see Proposition I.2.24 in \cite{jacod2013limit}. Using It\^o's formula, we obtain from \eqref{DoobMeyer} and \eqref{tildeMdelta} that
 \begin{eqnarray*}
 Y_T = \xi,
 &\mbox{and}&
 dY_t =  Z^a_t dN^a_t + Z^b_t dN^b_t + Z^S_t dS_t - dI_t - d\widetilde A^d_t,
 \end{eqnarray*}
where $Z^a, Z^b,Z^S,I,\widetilde A^d$ are independent of $\delta$, as they may be expressed as {$Z^i_t dN^i_t =d[ Y,  N^i]_t$}, $i\in\{a,b\}$, $Z^S_t \sigma^2 dt = d\langle Y_t,S_t\rangle_t$, $\widetilde A^d$ the predictable pure jumps of $Y$. {Moreover, It\^o's Formula yields } 
$$ Z_t^a = -\frac{1}{\gamma} \log(1 + \frac{\widetilde{Z}^{\delta,a}_t}{ {U}^\delta_{t-}}) - \delta^a_t ,
 \quad
 Z_t^b = -\frac{1}{\gamma} \log(1 + \frac{\widetilde{Z}^{\delta,b}_t}{ {U}^\delta_{t-}}) - \delta^b_t ,
\quad
  Z_t^S = -\frac{\widetilde{Z}^{\delta,b}_t}{\gamma{U}^\delta_{t-}} - Q_{t-} ,
 $$
and 
 \begin{eqnarray*}
 I_t = \int_0^t\left( \overline{h}(\delta_r, Z_r, Q_r) dr - \frac{1}{\gamma U^\delta_{r}} dA^{\delta, c}_r \right),
 &&
 \widetilde A^{d}_t 
 =
\frac{1}{\gamma} \sum_{s\le t} \log{\Big(1-\frac{\Delta A^{\delta,d}_t}{U^\delta_{t-}}\Big)},
 \end{eqnarray*}
with $\overline{h}(\delta, z, q)=h(\delta, z, q)-\frac12\gamma\sigma^2 (z^s)^2$. In particular, the last relation between $\widetilde A^d$ and $A^{\delta,d}$ shows that $\Delta a_t=\frac{-\Delta A^{\delta,d}_t}{U^\delta_{t-}}\ge 0$ is independent of $\delta \in \Ac$; recall that $U^\delta<0$.

In the subsequent steps, we argue  that $Z = (Z^S, Z^a, Z^b) \in \Zc$, and 
 \begin{eqnarray}\label{goal:representation}
 A^{\delta,d}_t 
 =
 -\sum_{s\le t} U^\delta_{s-}\Delta a_s
 =0,~~\mbox{(so that $\widetilde A^{d}_t =0$)}, 
 &\mbox{and}&
 I_t=\int_0^t \overline{H}(Z_r,Q_r) dr,
 ~t\in[0,T],
 \end{eqnarray}
 where $\overline{H}(z, q)=H( z, q)-\frac12\gamma\sigma^2 (z^s)^2$.\\
 
\noindent {\it Step 4.} Since $V_T=-1$, notice that
 \begin{eqnarray}
 0
 \;=\;
 \sup_{\delta \in \Ac} ~\E^\delta[ U^\delta_T] -V_0
 &=&
 \sup_{\delta \in \Ac} ~\E^\delta[ U^\delta_T- M^\delta_T ]
 \nonumber\\
 &=&
 \gamma
 \sup_{\delta \in \Ac} ~
 \E^0\Big[ L^\delta_T \int_0^T  U^\delta_{r-} \big(dI_r-\overline{h}(\delta_r, Z_r, Q_r)dr
                                                                           +{\frac{da_r}\gamma}
                                                                    \big)
          \Big].
 \label{presquefini}
 \end{eqnarray}
Moreover, since the controls are uniformly bounded, we have
\begin{equation} \label{ineq_beta}
 U^\delta_t 
 \le  
 -\beta_t 
 \;:=\;
 V_t e^{- \gamma \delta_\infty (N_t^a - N_0^a + N_t^b - N_0^b) - \gamma \int_0^t Q_r dS_r}<0.
 \end{equation}
Then, since $A^{\delta,d}\ge 0$, $U^\delta\le 0$, and $dI_t-\overline{h}(\delta_t, Z_t, Q_t)\ge 0$, it follows from \eqref{presquefini}  together with the inequalities \eqref{ineq_alpha} and \eqref{ineq_beta} that
 \begin{eqnarray*}
 0 
 &\le&
 \sup_{\delta \in \Ac}
 \E^0\Big[ \alpha_{0,T} \int_0^T  -\beta_{r-} \big(dI_r - \overline{h}(\delta_r, Z_r, Q_r)dr
                                                                           +\frac{da_r}{\gamma}
                                                                    \big)
          \Big ]\\
 &=&
 - \E^0\Big[ \alpha_{0,T} \int_0^T  \beta_{r-} \big(dI_r-\overline{H}(Z_r, Q_r)dr
                                                                           +\frac{da_r}{\gamma}
                                                                    \big)
          \Big ].
 \end{eqnarray*}
As $\alpha_{0,T} \int_0^T  \beta_{r-} \big(dI_r-\overline{H}(Z_r, Q_r)dr)\ge 0$ and $\alpha_{0,T} \int_0^T  \beta_{r} da_r\ge 0$, 
this implies \eqref{goal:representation}.
\\

\noindent {\it Step 5.} We now prove that $Z \in \Zc$ by showing that 
\begin{equation} \label{step5}
\sup_{\delta \in \Ac} ~ \sup_{t \in [0,T]} \E^\delta[e^{- \gamma (p+1) Y_t}] < \infty \quad \text{for some $p>0$}.
\end{equation}
Using H\"older inequality together with Condition \eqref{condition:xi} and the boundedness of the intensities of $N^a$ and $N^b$, we have that $\sup_{\delta \in \Ac} \E^\delta[|U^\delta_T|^{p' + 1}] < \infty$ for some $p'>0$. Hence
$$\sup_{\delta \in \Ac} ~ \sup_{t \in [0, T]} \E^\delta[|U^\delta_t|^{p' + 1}]  =\sup_{\delta \in \Ac} \E^\delta[|U^\delta_T|^{p' + 1}]  < \infty, $$
because $U^\delta$ is a negative $P^\delta$-supermartingale. This leads to \eqref{step5} using H\"older inequality, the uniform boundedness of the intensities of $N^a$ and $N^b$ and that  $e^{- \gamma Y} = U^\delta e^{\gamma \int_0^\cdot ( \delta^a_u dN^a_u + \delta^b_u dN^b_u + Q_u dS_u )}$.\\

\noindent {\it Step 6.} We finally prove uniqueness of the representation. Let $(Y_0,Z),(Y_0',Z')\in\R\times{\cal Z}$ be such that $\xi=Y^{Y_0,Z}_T=Y^{Y'_0,Z'}_T$. By following the line of the verification argument in the proof of Theorem \ref{thm:Agent} (ii), we obtain the equality $Y^{Y_0,Z}_t = Y^{Y'_0,Z'}_t$ by considering the value of the continuation utility of the market maker
 \begin{eqnarray*}
 -\exp(- \gamma Y^{Y_0,Z}_t)
 \;=\; 
 -\exp({- \gamma Y^{Y_0',Z'}_t})
 &=&
 \underset{\delta \in \Ac}{\text{ess sup}} \; 
 \E^\delta_t [-e^{-\gamma (\text{PL}^{\delta}_T -\text{ PL}^{\delta}_t+ \xi)}],
 ~~t\in[0,T].
 \end{eqnarray*}
This in turn implies that $Z^i_t dN^i_t ={Z'}^i_t dN^i_t= d[ Y^{Y_0,Z} ,  N^i]_t$, $i\in\{a,b\}$, and $Z^S_t\sigma^2 dt={Z'}^S_t\sigma^2 dt=d\langle Y,S\rangle_t$, $t\in[0,T]$. Hence $(Y_0,Z)=(Y'_0,Z')$.

\subsection{Proof of Theorem \ref{thm:pbPlatform}}

The proof of the main result of Theorem \ref{thm:pbPlatform} requires the following technical result. We observe that this is the place where the first integrability condition in \eqref{condition:xi} is needed.

\begin{lemma}\label{unifintegr:K}
Let $Z\in \cal Z$. There exists $C>0$ and $\varepsilon>0$ such that
$$\sup_{t\in [0,T]}\mathbb E^{\hat\delta(Y^{\hat Y_0, Z}_T)} [|K_t^Z|^{1+\varepsilon}] \leq C.$$
\end{lemma}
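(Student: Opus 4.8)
The plan is to reduce the estimate to a uniform‑in‑$t$ exponential moment of $Y^{0,Z}$, and then to obtain that moment from the first integrability condition in \eqref{condition:xi} by exploiting the martingale identity already produced in the proof of Theorem \ref{thm:Agent} (ii). Throughout write $\hat\delta:=\hat\delta(Y^{\hat Y_0,Z}_T)$, so that $\hat\delta^i_t=\Delta(Z^i_t)$ is the maximizer of $h(\cdot,Z_t,Q_t)$, and recall $K^Z_t=v(t,Q_t)\,e^{-\eta(c(N^a_t+N^b_t)-Y^{0,Z}_t)}$. By Proposition \ref{prop:Representation}, $u\ge e^{-C_1\bar q^2T}$, hence $|v(t,q)|=u(t,q)^{-\sigma\eta/k}\le e^{C_1\bar q^2T\sigma\eta/k}=:\bar v$; and since $c>0$ and $N^a,N^b$ are nonnegative and nondecreasing, $e^{-\eta(1+\varepsilon)c(N^a_t+N^b_t)}\le1$. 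Thus
 \[
 |K^Z_t|^{1+\varepsilon}\;\le\;\bar v^{\,1+\varepsilon}\,e^{\eta(1+\varepsilon)Y^{0,Z}_t},\qquad t\in[0,T],
 \]
and it suffices to find $\varepsilon>0$ with $\sup_{t}\mathbb E^{\hat\delta}[e^{\eta(1+\varepsilon)Y^{0,Z}_t}]<\infty$; fix $\varepsilon$ so small that $\eta(1+\varepsilon)<\eta'$, with $\eta'>\eta$ the exponent in \eqref{condition:xi}. A direct attempt — Itô on $e^{\eta(1+\varepsilon)Y^{0,Z}}$ to get a submartingale — runs into a local‑versus‑true‑martingale circularity, so I proceed via a conditional representation instead.

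Because $\hat\delta$ maximizes $h(\cdot,Z,Q)$, the drift term in the Itô expansion of $e^{-\gamma\overline Y_t}$ in the proof of Theorem \ref{thm:Agent} (ii), with $\overline Y_t=Y^{\hat Y_0,Z}_t+\sum_{i=a,b}\int_0^t\hat\delta^i_r dN^i_r+\int_0^tQ_r dS_r$, vanishes; together with the uniform integrability shown there (via \eqref{conditionY}, the bounded intensities and Hölder), $e^{-\gamma\overline Y}$ is a genuine $\mathbb P^{\hat\delta}$‑martingale. Taking $\mathbb E^{\hat\delta}_t[\cdot]$ and cancelling the constant $\hat Y_0$ gives
 \[
 e^{-\gamma Y^{0,Z}_t}\;=\;\mathbb E^{\hat\delta}_t\Big[\exp\!\Big(-\gamma\Big(Y^{0,Z}_T+\sum_{i=a,b}\int_t^T\hat\delta^i_r dN^i_r+\int_t^TQ_r dS_r\Big)\Big)\Big].
 \]
Applying conditional Jensen's inequality first with the convex function $x\mapsto x^{-\eta/\gamma}$ and then with $x\mapsto x^{1+\varepsilon}$, I obtain
 \[
 e^{\eta(1+\varepsilon)Y^{0,Z}_t}\;\le\;\mathbb E^{\hat\delta}_t\Big[\exp\!\Big(\eta(1+\varepsilon)\Big(Y^{0,Z}_T+\sum_{i=a,b}\int_t^T\hat\delta^i_r dN^i_r+\int_t^TQ_r dS_r\Big)\Big)\Big].
 \]

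Now take $\mathbb E^{\hat\delta}$ of both sides. Since $\hat\delta^i\le\delta_\infty$ and $N^i$ is nondecreasing, $\int_t^T\hat\delta^i_r dN^i_r\le\delta_\infty N^i_T$, while $\int_t^TQ_r dS_r=\sigma\int_t^TQ_r dW^{\hat\delta}_r$ has $|Q|\le\bar q$. Choosing Hölder exponents $p,q,r$ with $1/p+1/q+1/r=1$ and $p\,\eta(1+\varepsilon)=\eta'$, the right‑hand side is bounded by the product of $\mathbb E^{\hat\delta}[e^{\eta'Y^{0,Z}_T}]^{1/p}$, finite by \eqref{condition:xi} because $\hat\delta\in\Ac$; of $\mathbb E^{\hat\delta}[e^{q\eta(1+\varepsilon)\delta_\infty(N^a_T+N^b_T)}]^{1/q}$, finite because $N^a+N^b$ has all exponential moments under $\mathbb P^{\hat\delta}$ (its intensity being bounded by $A$); and of $\mathbb E^{\hat\delta}[e^{r\eta(1+\varepsilon)\sigma\int_t^TQ_r dW^{\hat\delta}_r}]^{1/r}\le e^{\frac12r^2\eta^2(1+\varepsilon)^2\sigma^2\bar q^2T}$, which is uniform in $t$. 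Multiplying by $\bar v^{1+\varepsilon}$ then yields the claim.

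The main point requiring care is the middle step: one must be sure that $e^{-\gamma\overline Y}$ is a \emph{true} (not merely local) $\mathbb P^{\hat\delta}$‑martingale, so that the conditional‑expectation identity — and hence the two Jensen steps — are legitimate. This is exactly the uniform‑integrability estimate already established in the proof of Theorem \ref{thm:Agent} (ii), used here at the running time $t$ rather than at $T$; everything else is Jensen and Hölder, and this is precisely the place where the first integrability condition in \eqref{condition:xi} enters.
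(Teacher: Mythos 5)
Your argument is correct, and its skeleton is the one the paper uses: bound $|K^Z_t|^{1+\varepsilon}$ by an exponential moment of $Y^{0,Z}_t$, represent $e^{-\gamma Y^{0,Z}_t}$ through the market maker's continuation utility, then Jensen and H\"older to land on $\sup_{\delta\in\Ac}\E^\delta[e^{\eta' Y^{0,Z}_T}]$, which is finite precisely because $Z\in\Zc$ carries the first condition of \eqref{condition:xi}. Where you genuinely deviate, and in fact simplify, is in the middle step: the paper writes $e^{-\gamma Y^{0,Z}_t}$ as $-\operatorname{ess\,sup}_{\delta}\E^\delta_t[-e^{-\gamma(Y^{0,Z}_T+\mathrm{PL}^\delta_T-\mathrm{PL}^\delta_t)}]$, which after Jensen leaves an essential supremum \emph{inside} an outer expectation, and it must then rerun the directed-upwards/dynamic-programming argument of Lemma \ref{lemme_DPP} to interchange $\E^0$ and the supremum; it also changes measure to $\P^0$ via H\"older at the outset. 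You instead evaluate the continuation utility at the optimizer $\hat\delta$ only, using that $e^{-\gamma\overline Y}$ is a true $\P^{\hat\delta}$-martingale (the drift $H-h^{\hat\delta}$ vanishes and uniform integrability is already established in the proof of Theorem \ref{thm:Agent}(ii)), so no supremum ever appears inside the expectation, no DPP is needed, and you can stay under $\P^{\hat\delta}$ throughout. This buys a shorter proof at the price of leaning a bit more heavily on the verification step of Theorem \ref{thm:Agent}(ii); both are legitimate. Two harmless slips to fix: $K^Z_t$ is defined in the paper as $e^{-\eta(c(N^a_t+N^b_t)-Y^{0,Z}_t)}$ \emph{without} the factor $v(t,Q_t)$ (your extra $\bar v^{1+\varepsilon}$ only loosens the constant), and the intensities under $\P^{\hat\delta}$ are bounded by $Ae^{k(\delta_\infty-c)/\sigma}$ rather than by $A$ — still uniformly bounded, which is all the exponential-moment bound on $N^a_T+N^b_T$ requires.
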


\begin{proof}
We use again the notation $K^Z_t :=
 e^{-\eta ( c(N^a_t+ N^b_t) - Y^{0,Z}_t )},$ $t\in[0,T]$, for all $Z\in \mathcal Z$.
Let $p>1$. By using H\"older's inequality and the uniform boundedness of the intensities of $N^a$ and $N^b$, we deduce that there exists $C'>0$ such that
 \begin{align*}
 \mathbb E^{\hat\delta(Y^{\hat Y_0, Z}_T)}[|K_t^Z|^p]&\leq C' \mathbb E^0[(e^{-\gamma Y_t^{0,Z}})^{-\frac{p'\eta}{\gamma}}]^{\frac p{p'}},
 \end{align*}
 with any $p'>p$. Thus,
  \begin{align*}
 \mathbb E^{\hat\delta(Y^{\hat Y_0, Z}_T)}[|K_t^Z|^p]
 &\leq 
 C' \left (1+ \mathbb E^0\Big[(e^{-\gamma Y_t^{0,Z}})^{-\frac{p'\eta}{\gamma}}\Big]\right)\\
 &=
 C' \left (1+ \mathbb E^0\left[\Big(-\sup_{\delta\in \cal A} 
                                                   \mathbb E^\delta_t\Big[-e^{-\gamma(Y_T^{0, Z}+PL_T^\delta-PL_t^\delta)}
                                                                                 \Big]
                                             \Big)^{-\frac{p'\eta}{\gamma}}\right]\right).
 \end{align*}
 By Jensen's inequality and H\"older's inequality, we deduce for any $p''>p'$ that
 $$
 \mathbb E^{\hat\delta(Y^{\hat Y_0,Z}_T)}[|K_t^Z|^p]
 \leq 
 C' \left (1+ \mathbb E^0\left[\sup_{\delta\in \cal A} \mathbb E^\delta_t[e^{p'\eta(Y_T^{0,Z}+PL_T^\delta-PL_t^\delta)}]\right]\right)
 \leq 
 C' \left (1+ \mathbb E^0\Big[\sup_{\delta\in \cal A} \mathbb E^\delta_t[e^{p''\eta Y_T^{0,Z}}]\Big]\right).
 $$
 By using a dynamic programming principle, similarly to the proof of Lemma \ref{lemme_DPP} by noticing that the family $\left(\widetilde J(\mu,\delta)= \mathbb E^\delta_\tau[e^{p''\eta Y_T^{0,Z}}]\right)_{\mu \in \cal A_\tau}$ is directly upwards, we get
    \begin{align*}
 \mathbb E^{\hat\delta(Y^{\hat Y_0,Z}_T)}[|K_t^Z|^p]
 &\leq C' \left (1+ \sup_{\delta\in \cal A}\mathbb E^\delta \left[e^{p''\eta Y_T^{0, Z}}\right]\right).
 \end{align*}
 By setting $\varepsilon=\frac{\eta'-\eta}3$, if we take $p=1+\varepsilon$, then $p'=p+\varepsilon$ and $p''=p'+\varepsilon$, we obtain 
\begin{align*}
 \mathbb E^{\hat\delta(Y^{\hat Y_0,Z}_T)}[|K_t^Z|^{1+\varepsilon}]
 &\leq C' \left (1+ \sup_{\delta\in \cal A}\mathbb E^\delta \left[e^{\eta' Y_T^{0,Z}}\right]\right).
 \end{align*}
 From the definition of $\mathcal Z$ (involving the first integrability condition in \eqref{condition:xi}), we get
 \begin{eqnarray*}
 \mathbb E^{\hat\delta(Y^{\hat Y_0, Z}_T)}[|K_t^Z|^{1+\varepsilon}]\leq C,
 ~t\in [0,T],
 &\mbox{with}&
 C=C'\left (1+ \sup_{\delta\in \cal A}\mathbb E^\delta \left[e^{\eta' Y_T^{0,Z}}\right]\right)<+\infty.
 \end{eqnarray*}
\end{proof}

\noindent \textbf{Proof of Theorem \ref{thm:pbPlatform}} In order to prove the theorem, we verify that the function $v$ introduced in \eqref{u=etB} coincides at $(0,Q_0)$ with the value function of the reduced exchange problem \eqref{Pb2}, with maximum achieved at the optimal control $\hat Z$. \\

\noindent The function $v$ is negative bounded and has bounded gradient. Moreover, since $\delta_\infty\ge\Delta_\infty$, it follows that $v$ is a solution of the HJB equation \eqref{HJB} of the exchange reduced problem, see Lemma \ref{lemmaAnnex}. For $Z\in{\cal Z}$, denote
 \begin{eqnarray*}
 K^Z_t 
 \;=\;
 e^{-\eta \big( c(N^a_t - N^a_0+ N^b_t - N^b_0) - Y^{0,Z}_t \big)},
 &t\in[0,T].&
 \end{eqnarray*}
By direct application of It\^o's formula, and substitution of $\partial_t v$ from the HJB equation satisfied by $v$, we see that
\begin{eqnarray} 
 d\big[v(t,Q_t)K^Z_t\big]
 &=&
 K^Z_{t-}\Big( (  h^Z_t - {\cal H}_t )dt +\eta v(t,Q_t) Z^s_tdS_t
 \nonumber 
 \\
 && 
 + \sum_{i=a,b} \big[v(t,Q_{t-}+\Delta Q_t)e^{-\eta(c-Z^i_t)} -v(t,Q_{t-}) 
                                                           \big] d\widetilde N^{\hat\delta(Y^{\hat Y_0, Z}),i}_t
              \Big),
 \label{principal_term}
 \end{eqnarray}
where, using the notations of \eqref{HE} and the subsequent equations,
 \begin{eqnarray*}
 {\cal H}_t 
 &:=& 
 H_E\big(Q_t, v(t,Q_{t}), v(t,Q_{t}+1), v(t,Q_{t}-1)\big),
 \\
 h^Z_t 
 &:=&
 h^1_E\big(Q_t, v(t, Q_t), Z^S_t)
 +\1_{\{Q_t>-\bar q\}} h^0_E\big(v(t,Q_{t}),v(t,Q_{t}-1),  Z^a_t\big)\\
 &&
 \hspace{36mm}
 +\1_{\{Q_t<\bar q\}} h^0_E\big(v(t,Q_{t}),v(t,Q_{t}+1),  Z^b_t\big).
 \end{eqnarray*}
Exploiting the fact that $v$ is bounded and that $K^Z$ is uniformly integrable, see Lemma \ref{unifintegr:K}, we get that $\big(v(t,Q_t) K^Z_t\big)_{t \in [0,T]}$ is a $\P^{\hat \delta(Y^{\hat Y_0, Z}_T)}$-supermartingale  and by Doob-Meyer decomposition theorem, the local martingale term in \eqref{principal_term} is a true martingale. Hence
 \begin{eqnarray*}
 v(0,Q_0)
 &=&
 \mathbb{E}^{\hat\delta(Y^{\hat Y_0,Z}_T)}\Big[v(T,Q_T)K^Z_T + \int_0^TK^Z_{t}( {\cal H}_t - h^Z_t)dt \Big]
 \\
 &\ge&
 \mathbb{E}^{\hat\delta(Y^{\hat Y_0,Z}_T)}\big[v(T,Q_T)K^Z_T\big]
 \;=\;
 \mathbb{E}^{\hat\delta(Y^{\hat Y_0,Z}_T)}[-K^Z_T],
 \end{eqnarray*}
by the boundary condition $v(T,.)=-1$. By arbitrariness of $Z\in{\cal Z}$, this provides the inequality $v(0,Q_0)\ge \sup_{Z\in{\cal Z}}\mathbb{E}^{\hat\delta(Y^{\hat Y^0, Z}_T)}[-K^Z_T]=v_0^E$. 

On the other hand, consider the maximizer $\hat Z$ of the reduced exchange problem, induced by the feedback controls $\hat z$ in \eqref{opt-z}. As $\hat Z$ is bounded, it follows that $\hat Z\in\cal Z$. Moreover, $h^{\hat Z} - {\cal H}=0$, by definition, so that the last argument leads to the equality $v(0,Q_0)=\mathbb{E}^{\hat\delta(Y^{\hat Y_0, \hat Z}_T)}\big[-K^{\hat Z}_T\big]$, instead of the inequality. This shows that $v(0,Q_0)=v_0^E$, the reduced exchange problem of \eqref{Pb2}, with optimal control $\hat Z$. From Theorem \ref{thm:Agent}, the corresponding optimal market maker response of the market maker is given by \eqref{optimalSpread} with $\xi = Y^{\hat Y_0, \hat Z}_T$. Moreover, Condition \eqref{cond_infty} implies that $ | - Z^i_t + \frac{1}{\gamma} \log(1 +\frac{\sigma k}{k}) |  \leq \delta_\infty , \quad i = a, b.$
Hence the optimal effort can be reduced to \eqref{optimalOptimalSpread}.

\subsection{Proof of Theorem \ref{thmnash}}\label{appendix:nash}

The proof of Theorem \ref{thmnash} is divided in two steps. First we show that there exists a symmetric and Markovian Nash equilibrium for the problem \eqref{Ppbi}. Then, we show that this Nash equilibrium is unique among the class of symmetric Nash equilibria.\\

\noindent \textbf{Existence of a symmetric Markovian Nash equilibrium.}
We denote by $v$ a smooth solution to \eqref{edp_v^i_hjb} or equivalently \eqref{edp_v^i}. Note that $\zeta^0$ as defined in Theorem \ref{thmnash} is a deterministic function of $t,Q_t$. Denote by $\mathcal K^{\zeta}_t$ the process defined for any $\zeta\in \mathcal Z$ by
\[
\mathcal K_t^{\zeta}
:= 
e^{\eta\big\{ \int_0^T\sum_{i=a,b} (\zeta^i_t-\frac cN) dN^i_t 
                                +\zeta^S_t dS_t 
                                + [ \hat H^S(Q_t,\zeta^{S,0}_t,\zeta^S_t) 
                                      - \sum_{i=a,b} 
                                         \hat H^i(Q_t,\zeta^{i,0}_t,\zeta^i_t)\big) dt    
          \big\}}, 
\]
with
                                    \begin{equation}\label{def:hatHO} \hat H^i(q,\tilde z,z)= H^i(z+(N-1)\tilde z,q) - \frac{N-1}N H^i(N\tilde z,q).  \end{equation}
Note that 
\begin{eqnarray*}
d\mathcal K_t^{\zeta}
&=&
\eta \mathcal K_t^{\zeta} \zeta^S_t dS_t +\sum_{i=a,b}\mathcal K_{t-}^\zeta (e^{\eta (\zeta^i_t-\frac cN) }-1)dN_t^i
\\
&&
+\eta K_t^\zeta \big(  \hat H^S(Q_t,\zeta^{S,0}_t,\zeta^S_t) -\underset{i=a,b}{\sum} \hat H^i(Q_t,\zeta^{i,0}_t,\zeta^i_t)+\frac\eta2 \sigma^2 |\zeta^S_t|^2 \big)dt.
\end{eqnarray*}
Applying Ito's Formula, we obtain
\begin{eqnarray*}
d(v(t,Q_t)\mathcal K_t^{\zeta})
&=&
\mathcal K_t^\zeta \partial_t v(t,Q_t)dt
\\
&&
+\mathcal K_t^\zeta \big(\eta v(t,Q_t) \big(  \hat H^S(Q_t,\zeta^{S,0}(Q_t),\zeta^S_t) -\underset{i=a,b}{\sum} \hat H^i(Q_t,\zeta^{i,0}_t,\zeta^i_t)+\frac\eta2 \sigma^2 |\zeta^S_t|^2 )\big)dt
\\
&&
+\eta \mathcal K_t^{\zeta} v(t,Q_t)\zeta^S_t dS_t +\mathcal K_{t-}^\zeta\sum_{i=a,b} (v(t,Q_t+\Delta Q_t)-v(t,Q_t))e^{\eta (\zeta^i_t-\frac cN) } dN_t^i
\\
&=&
\mathcal K_t^\zeta\big( \partial_t v(t,Q_t)+\eta v(t,Q_t)F_t^\zeta \big)dt+\eta \mathcal K_t^{\zeta} v(t,Q_t)\zeta^S_t dS_t 
\\
&&
+\mathcal K_{t-}^\zeta\sum_{i=a,b} (v(t,Q_t+\Delta Q_t)-v(t,Q_t))e^{\eta (\zeta^i_t-\frac cN) } d\tilde N_t^i,
\end{eqnarray*}
where
\begin{eqnarray*}
F_t^\zeta
&=& 
F^S(t,Q_t,\zeta^S_t)+F^0(t,Q_t,v(t,Q_t),v(t,Q_t+1),\zeta^b_t)\mathbf 1_{Q_t<\overline Q}
\\
&&+F^0(t,Q_t,v(t,Q_t),v(t,Q_t-1),\zeta^a_t)\mathbf 1_{Q_t>-\overline Q}.
\end{eqnarray*}

%
\noindent Since $v$ satisfies HJB equation \eqref{edp_v^i_hjb}, we deduce that $\mathbb E^{\hat {\delta}(\zeta+(N-1)\zeta^0)}[-\mathcal K_T^\zeta]\leq v(0,Q_0)$,
with equality for $\zeta^S=\zeta^{S,0}$ and $\zeta^i=\zeta^{i,0}$.\\

\noindent \textbf{Uniqueness among the set of general symmetric Nash equilibria.} We now prove that if there exists a symmetric Nash equilibrium, it is unique and given by the Markovian equilibrium $\xi^0$ defined by \eqref{optimal:contract:2ex}. Let $\xi^0$ characterized by \eqref{eq:nashsym} for general $\zeta^0$. We consider the dynamic value function of any exchange given $\xi^0$ fixed by the other, denoted by $V_t(\xi^0)$ and defined in view of Remark \ref{remark:symmetric} by
\[ V_t(\xi^0)=
 e^{\eta\hat Y_0} \underset{\zeta \in  \mathcal Z}{\text{ess sup}}\, \E_t^{\hat\delta(\zeta+(N-1) \zeta^0)} 
                                     \Big[-e^{-\eta\big (\int_t^T\underset{i=a,b}{\sum}(\frac cN-\zeta^i_r) dN_r^i -\int_t^T \zeta_r^SdS_r -\int_t^T \hat H^S_r( \zeta^S_r)- \underset{i=a,b}{\sum} \hat H_r^i(\zeta_r^i) dr \big)}\Big],
\]
with $\hat H_r^S(z)= \hat H_r^S(Q_r, \zeta^{S,0}_r,z)\; \text{ and }\; \hat H_r^i(z)= \hat H_r^i(Q_r, \zeta^{i,0}_r,z).$ By using a DPP similarly to \ref{lemme_DPP}, we prove that $(V_t(\xi^0)\mathcal K_t^\zeta)_{t\in [0,T]}$ is a $\mathbb P^{\zeta+ (N-1)\zeta^0}$-super martingale. The martingale condition thus provides the optimal $\zeta$ played by the representative exchange given that the others choose $\zeta^0$. By using a Doob-Meyer decomposition, the martingale property leads to the solution of the following BSDE
\begin{equation}\label{bsde:R} dR_t= U^S_tdS_t+\sum_{i=a,b} U^i_td\widetilde{N^i_t } - F_t(U_t,Q_t)dt,\; R_T=0\end{equation}
with
\begin{align*}
 F_t(u,q)&:=\sup_{\zeta^S} \big( -\hat H_t^S(\zeta^S)-\frac{\sigma^2\eta}2 |\zeta^S-u^S|^2) \big)\\
 & +\sum_{i=a,b}\sup_{\zeta^i}\big(  \hat H_t^i(\zeta^i) - \lambda_t^{i,\zeta} \big( u^i+ \frac{1-e^{-\eta(u^i-\zeta^i+\frac cN)}}{\eta}\big) \big), \end{align*}
with $\lambda_t^{i,\zeta} =\lambda(\hat \delta(\zeta^i+(N-1)\zeta^{i,0}))$.We directly derive the maximizers 
\[ \zeta^{S,\star}_t=-\frac{\gamma (N-1)}{\eta+\gamma} \zeta_t^{S,0}-\frac{\gamma}{\eta+\gamma}Q_t-\frac{\eta}{\eta+\gamma} U_t^S,\]

\[ \zeta^{i,\star}_t= U^i_t +\frac cN +\frac 1\eta \log\big(\frac{k(k+\sigma(\gamma-\eta))}{(k+\sigma\eta)(k+\sigma\gamma)}+\frac{U^i_tk\eta}{k+\sigma\eta}\big).\]

\noindent Since $\zeta^0$ is assumed to be a symmetric Nash equilibrium, we obtain from Definition \ref{def:nash:sym} that $\zeta^{S,0}_t$ and $\zeta^{i,0}_t$ are necessarily uniquely determined as function of $Q_t$ and $U_t$ by

\[\zeta^{S,0}_t= -\frac{\gamma}{\eta+N\gamma}Q_t-\frac{\eta}{\eta+N\gamma} U_t^S,  \]
and
\[ \zeta^{i,0}_t= U^i_t +\frac cN +\frac 1\eta \log\big(\frac{k(k+\sigma(\gamma-\eta))}{(k+\sigma\eta)(k+\sigma\gamma)}+\frac{U^i_tk\eta}{k+\sigma\eta}\big).\] Hence, we note that the BSDE \eqref{bsde:R} is Markovian. The integro-partial differential equation associated with this BSDE remains to solve \eqref{edp_v^i_hjb} for which we know that there exists a continuous solution given by $v(t,q)$. We thus deduce that if there is a symmetric Nash equilibrium, it is Markovian in the sense of Definition \ref{def:markovnash} and the first step of the proof shows that it is unique.

\subsection{First best exchange problem}\label{sec:firstbest}

In this section, we analyze the first best problem of the exchange. In  this setting the exchange optimally chooses the contract $\xi$ and the optimal bid-ask posting policy of the market maker under her participation constraint. Introducing a Lagrange multiplier $\lambda>0$ to penalize for this constraint, we  reduced the first best exchange value function to the unconstrained problem:
 \[
 V_0^{FB}
 =
 \inf_{\lambda>0}\sup_{\xi \in \mathfrak C,\delta\in \mathcal A}
 \mathbb E^{\mathbb P^\delta}
 \big[ -e^{-\eta(c(N_T^a-N_T^b)-\xi)}-\lambda e^{-\gamma(\xi+X_T+Q_TS_T)}-\lambda R\big],
 \]
with $\mathfrak C=
 \Big\{ \xi,\, \text{ ${\cal F}_T$-measurable such that \eqref{condition:xi} is satisfied} 
 \Big\}$.
 
 We first compute the supremum on $\xi$ by fixing $\lambda,\delta$. The first order condition in $\xi$ is $e^{-\eta(c(N_T^a-N_T^b)-\xi^\star_\lambda)}=\frac{\lambda\gamma}{\eta}e^{-\gamma (\xi^\star_\lambda+X_T+Q_TS_T)} $, implying
 \[ 
 \xi^{\star}_\lambda
 = 
 \frac{1}{\eta+\gamma}\big(\log(\frac{\lambda\gamma}\eta)-\gamma (X_T+Q_TS_T)+\eta c (N_T^a+N_T^b) \big).
 \]
Substituting this expression, we see that
\begin{align*}
V_0^{FB}
&
=
\inf_{\lambda>0}\sup_{\delta\in \mathcal A}
\mathbb E^{\mathbb P^\delta}\big[ -\lambda\frac{\eta+\gamma}{\eta}
                                                       e^{-\gamma (\xi^\star_\lambda+X_T+Q_TS_T)}
                                                       -\lambda R
                                                \big]\\
&
=
\inf_{\lambda>0}\sup_{\delta\in \mathcal A}
\mathbb E^{\mathbb P^\delta}
\big[ -\lambda\frac{\eta+\gamma}{\eta} 
        \big(\frac{\eta}{\lambda\gamma} \big)^{\frac{\gamma}{\eta+\gamma}}
        e^{-\frac{\gamma\eta}{\gamma+\eta} (X_T+Q_TS_T+c(N_T^a+N_T^b))}
        -\lambda R
\big]\\
&=\inf_{\lambda>0} \lambda\big[\frac{\eta+\gamma}{\eta} \big(\frac{\eta}{\lambda\gamma} \big)^{\frac{\gamma}{\eta+\gamma}} \widetilde{V_0}-R\big],
~\mbox{with}~
\widetilde{V}_0
= 
\sup_{\delta\in \mathcal A}
\mathbb E^{\mathbb P^\delta}
\big[ -e^{-\frac{\gamma\eta}{\gamma+\eta} (X_T+Q_TS_T+c(N_T^a+N_T^b))}\big].
\end{align*}
As $\widetilde{V}_0$ is independent of $\lambda$, we obtain the optimal Lagrange multiplier $\lambda^\star
=
\frac{\eta}{\gamma}\Big(\frac{\widetilde{V}_0}{R}\Big)^{1+\frac{\eta}\gamma}$, and we deduce the optimal first best contract:
\[ 
\xi^{\star}
=
\xi^\star_{\lambda^*}
= 
\frac{1}{\eta+\gamma}\Big(\log(\frac{\lambda^{\star}\gamma}\eta)-\gamma (X_T+Q_TS_T)+\eta c (N_T^a+N_T^b) \Big).
\]
We finally solve the problem $\widetilde{V_0}$. Note that by setting $\tilde \delta:=\delta+c$ in view of the definition of $\mathbb P^\delta$ given by \eqref{proba_change} together with \eqref{intensity} we get
\begin{eqnarray*}
\widetilde{V}_0
= 
\sup_{\tilde \delta\in \tilde{\mathcal A}}
\mathbb E^{\mathbb P^{\tilde \delta-c}}
\big[ -e^{-\Gamma(X_T+Q_TS_T)}\big],
&\mbox{with}&
\Gamma=\frac{\gamma\eta}{\gamma+\eta},
\end{eqnarray*}
and where $\tilde {\mathcal A}$ is defined similarly to $\mathcal A$ with bound $\delta_\infty+c$. We are then reduced to the framework of \cite{avellaneda2008high,gueant2013dealing} so that the optimal bid-ask spreads are given by
 $$ 
 \widetilde{\delta}^{i}_t
 = 
 -c
 + \frac{1}{\Gamma}\log(1 +\frac{\sigma \Gamma}{k} )
 +\frac{\sigma}{k} \log\Big( \frac{\widetilde{u}^{FB}(t,Q_{t-})}{\widetilde{u}^{FB}(t,Q_{t-}+\eps_i)}\Big), 
 ~~i\in\{b,a\},
 ~(\eps_a,\varepsilon_b)=(-1,1).
 $$
where $ \widetilde{u}^{FB}$ is the unique solution of the linear differential equation 
\begin{eqnarray}\label{FB}
\widetilde{u}^{FB}\big|_{t=T}=1,
~~
\partial_t \widetilde{u}^{FB}(t,q) - F_{C_1^{FB}, \widetilde{C_1}^{FB}}(q,\widetilde{u}^{FB}(t,q), \widetilde{u}^{FB}(t,q+1), \widetilde{u}^{FB}(t,q-1))  = 0,
\end{eqnarray}
with constants $C_1^{FB} = \frac{\sigma \Gamma k }2$ and $\widetilde{C_1}^{FB} = A \big(1+\frac{\sigma \Gamma}{k}\big)^{-(1+\frac{\sigma \Gamma}k)} $, and so that 
\[v^{FB}(0,0)=\widetilde{V_0}^{FB}, \text{ with } \widetilde{u}^{FB}=(-\widetilde{v}^{FB})^{-\frac{k}{\sigma\eta}}.\]

Since the solution of PDE \eqref{FB} is different from the solution of \eqref{linearPDE}, we deduce that the value function of the exchange in the first best case does not coincide with his value function in the second best model. \\

\noindent {\bf \Large Acknowledgements}\quad
We are grateful to Ang\'elique Begrand, Luxi Chen and Laurent Fournier from Euronext for helpful comments.

{\small
\bibliographystyle{abbrv}
\bibliography{BibEEMRT}
}
\end{document}